 \newcommand{\bs}{\bigskip}
 \newcommand{\ms}{\medskip}
 \newcommand{\n}{\noindent}
 \newcommand{\s}{\smallskip}
 \newcommand{\hs}[1]{\hspace*{ #1 mm}}
 \newcommand{\vs}[1]{\vspace*{ #1 mm}}
 \newcommand{\setempty}{\varnothing}
 \newcommand{\nat}{\mathbb{N}}
 \newcommand{\integer}{\mathbb{Z}}
 \newcommand{\co}{\mathrm{co}\mbox{-}}
 \newcommand{\CC}{{\cal C}}
 \newcommand{\FF}{{\cal F}}
 \newcommand{\MM}{{\cal M}}
 \newcommand{\SSS}{{\cal S}}
 \newcommand{\TT}{{\cal T}}
 \newcommand{\PP}{{\cal P}}
 \newcommand{\reg}{\mathrm{REG}}
 \newcommand{\cfl}{\mathrm{CFL}}
 \newcommand{\dcfl}{\mathrm{DCFL}}
\theoremstyle{plain}
 \newtheorem{theorem}{Theorem}[section]
 \newtheorem{lemma}[theorem]{Lemma}
 \newtheorem{proposition}[theorem]{{\bf Proposition}}
 \newtheorem{corollary}[theorem]{Corollary}
 \newtheorem{claim}{Claim}
 \newenvironment{proof}{\par \noindent
            {\bf Proof. \hs{2}}}{\hfill$\Box$ \vspace*{3mm}}
 \newenvironment{proofof}[1]{\vspace*{5mm} \par \noindent
         {\bf Proof of #1.\hs{2}}}{\hfill$\Box$ \vspace*{3mm}}
 \newcommand{\ceilings}[1]{\lceil #1 \rceil}
 \newcommand{\floors}[1]{\lfloor #1 \rfloor}
 \newcommand{\pair}[1]{\langle #1 \rangle}
\newcommand{\ignore}[1]{}
\newcommand{\cent}{|\!\! \mathrm{c}}
\newcommand{\dollar}{\$}
\newcommand{\klda}[1]{#1 \mbox{-}\mathrm{LDA}}
\begin{document}

\pagestyle{plain}
\setcounter{page}{1}


\begin{center}
{\Large {\bf Intersection and Union Hierarchies of Deterministic \s\\
Context-Free Languages and Pumping Lemmas}}\footnote{This exposition completes and corrects a preliminary report that appeared in the Proceedings of the 14th International Conference on Language and Automata Theory and Applications (LATA 2020), Lecture Notes in Computer Science, vol. 12038, pp. 341--353, 2020. A conference talk was given online during October 20--24, 2021 due to the coronavirus pandemic.} \bs\\
{\sc Tomoyuki Yamakami}\footnote{Present Affiliation: Faculty of Engineering, University of Fukui, 3-9-1 Bunkyo, Fukui 910-8507, Japan} \bs\\
\end{center}


\begin{abstract}
We study the computational complexity of finite intersections and finite unions of deterministic context-free (dcf) languages. Earlier, Wotschke [J. Comput. System Sci. 16 (1978) 456--461] demonstrated that intersections of $(d+1)$ dcf languages are in general more powerful than intersections of $d$ dcf languages for any positive integer $d$ based on the separation result of the intersection hierarchy of Liu and Weiner [Math. Systems Theory 7 (1973) 185--192]. The argument of Liu and Weiner, however, works only on bounded languages of particular forms, and therefore Wotschke's result is not directly extendable to other non-bounded languages.
To deal with a wide range of languages for the non-membership to the intersection hierarchy, we circumvent the specialization of their proof technics and devise a new and practical technical tool: two  pumping lemmas for finite unions of  dcf languages.
Since the family of dcf languages is closed under complementation and also under intersection with regular languages, these pumping lemmas help us establish the non-membership relation of languages formed by finite intersections of target languages. We also concern ourselves with a relationship to deterministic limited automata of Hibbard [Inf. Control 11 (1967) 196--238] in this regard.

\ms

\n{\bf Key words.}
deterministic pushdown automata, intersection and union hierarchies, pumping lemma, limited automata, nondegenerate iterative pair, state-stack pair, LR(1) grammar
\end{abstract}

\sloppy
\section{A Historical Account and New Separation Results}\label{sec:introduction}

We briefly review a historical account of the subject of this exposition and provide a quick overview of the main contribution.

\subsection{Historical Background and Motivational Discussion}\label{sec:intersection-hierarchy}

In formal languages and automata theory, \emph{context-free languages} constitute a fundamental language family, $\cfl$, which is situated in between the family $\reg$ of regular languages and that of context-sensitive languages in Chomsky's hierarchy.
Over a half century, a flurry of studies have been conducted toward a better understanding of the nature of the family $\cfl$. It has been well known that, among numerous structural properties, $\cfl$ enjoys a closure property under the union operation but not the intersection operation since the non-context-free language $L_{abc} =\{a^{n}b^{n}c^{n} \mid n\geq0\}$, for instance, can be expressed as the intersection of the two simple context-free languages $A_1=\{a^nb^nc^p\mid n,p\geq0\}$ and $A_2=\{a^nb^pc^p\mid n,p\geq0\}$. The latter non-closure property can be further generalized to any intersection of $d$ ($\geq1$) context-free languages.
For later notational convenience, we here write $\cfl(d)$ for the family of such intersection languages, in other words, the $d$ intersection closure of $\cfl$ (refer to, e.g., \cite{Yam14a} for more information on this notation).
With the use of this succinct notation, the above language $L_{abc}$ belongs to the difference $\cfl(2) - \cfl$. In a similar way, the more complicated language $L_d=\{a_1^{n_1} a_2^{n_2} \cdots a_d^{n_d} b_1^{n_1} b_2^{n_2} \cdots b_d^{n_d} \mid n_1,n_2,\ldots,n_d\geq0\}$ over the alphabet $\{a_1,a_2,\ldots,a_d,b_1,b_2,\ldots,b_d\}$ falls into $\cfl(d)$ because $L_d$ can be expressed as an intersection of $d$ context-free languages of the form $\{a_1^{n_1} a_2^{n_2} \cdots a_d^{n_d} b_1^{m_1} b_2^{m_2}  \cdots a_d^{m_d}\mid n_k=m_k \}$ for each fixed index $k$ between $1$ and $d$, where $n_1,n_2,\ldots,n_d,m_1,m_2,\ldots,m_d\geq0$.
In 1973, Liu and Weiner \cite{LW73} (who actually used a slightly different language) gave a contrived proof to the following key statement.
\begin{quote}\vs{-1}
(*) For any index $d\geq2$, the language $L_d$ is located outside of $\cfl(d-1)$.
\end{quote}\vs{-1}
This result (*) then implies that the collection $\{\cfl(d)\mid d\geq1\}$ truly forms an infinite hierarchy.

\emph{Deterministic context-free (dcf) languages}, in contrast, have been another focal point of intensive research since an initiation of the systematic study in 1966 by Ginsburg and Greibach \cite{GG66}. The importance of these languages can be  exemplified by the facts that dcf languages are easy to parse and that every context-free language can be expressed simply as the homomorphic image of a dcf language.
Unlike $\cfl$, the family $\dcfl$ of all dcf languages is closed under neither the union nor the intersection operations.
Hereafter, we intend to use the terms of  \emph{$d$-intersection deterministic context-free (dcf) languages} and \emph{$d$-union deterministic context-free (dcf) languages} to respectively express any intersection of $d$ dcf languages and any union of $d$ dcf languages.
For brevity, we write $\dcfl(d)$ and $\dcfl[d]$ respectively for the \emph{$d$-intersection closure} and the \emph{$d$-union closure} of $\dcfl$, namely, the family of all $d$-intersection dcf languages and that of all $d$-union dcf languages.
As a special case, we obtain $\dcfl(1) = \dcfl[1] = \dcfl$. Since $\dcfl$ is closed under complementation, it also follows that the complement of $\dcfl(d)$ coincides with $\dcfl[d]$. For our convenience, we call the hierarchies $\{\dcfl(d)\mid d\geq1\}$ and $\{\dcfl[d]\mid d\geq1\}$ the \emph{intersection hierarchies of dcf languages} and the \emph{union hierarchies of dcf languages}, respectively.
The special notation $\dcfl(\omega)$ is meant to express the entire hierarchy  $\bigcup_{d\geq1} \dcfl(d)$, which is the intersection closure of $\dcfl$. Similarly, we write $\dcfl[\omega]$ for $\bigcup_{d\geq1} \dcfl[d]$.

In an analogy to the aforementioned union hierarchy of context-free languages, Wotschke \cite{Wot73,Wot78} discussed the union hierarchy $\{\dcfl[d]\mid d\geq1\}$ and asserted that this hierarchy truly forms an infinite hierarchy. He argued that, since the aforementioned language $L_d$ is in fact in $\dcfl(d)$, the result (*) of Liu and Weiner instantly yields $\dcfl(d)\nsubseteq \cfl(d-1)$, which further leads to the conclusion of $\dcfl(d-1)\neq \dcfl(d)$. Wotschke's argument, nonetheless, heavily relies on  Liu and Weiner's separation result (*), which uses a  specific property of \emph{stratified semi-linear sets}.
The proof technique of Liu and Weiner was developed specifically for $L_d$, which is a special form of \emph{bounded languages},\footnote{A \emph{bounded language} satisfies $L\subseteq w_1^*w_2^* \cdots w_k^*$ for fixed strings $w_1,w_2,\ldots,w_k$.} and it is therefore not applicable to ``arbitrary''  languages.
In fact, the key idea of their proof for $L_d$ is to focus on the number of the occurrences of each base symbol of   $\{a_1,\ldots,a_d,b_1,\ldots,b_d\}$ in a given string $w$ and to translate $L_d$ into a set $\Psi(L_d)$ of \emph{Parikh images} $(\#_{a_1}(w),\#_{a_2}(w),\ldots,\#_{a_d}(w),\#_{b_1}(w),\ldots,\#_{b_d}(w))$ in order to exploit the semi-linearity of $\Psi(L_d)$, where $\#_{\sigma}(w)$ expresses the total number of symbols $\sigma$ in $w$.

Because of the above-mentioned limitation of Liu and Weiner's proof technique, the scope of their proof cannot be directly extended to other forms of languages, such as the languages $L_d^{(\leq)} =\{a_1^{n_1}\cdots a_d^{n_d}b_1^{m_1}\cdots b_d^{m_d}\mid \forall i\in \{1,2,\ldots,d\}(n_i\leq m_i)\}$ and $NPal^{\#}_{d} = \{w_1\# w_2\# \cdots \# w_d \# v_1\# v_2 \#\cdots \# v_d \mid \forall i\in \{1,2,.\ldots,d\} (w_i,v_i\in\{0,1\}^*\wedge v_i\neq w_i^R)\}$, where $w_i^R$ expresses the \emph{reverse} of $w_i$.
The former is a bounded language expanding $L_d$ but its Parikh images do not have semi-linearity.  The latter is a ``non-palindrome'' language and it is not even a bounded language.
Unfortunately, Liu and Weiner's argument is not powerful enough to verify that neither $L^{(\leq)}_d$ nor $NPal^{\#}_d$ belongs to $\cfl(d-1)$ (unless we dextrously pick up its core strings to form a bounded language).
With no such contrived argument, how can we prove $L_d^{(\leq)}$ and $NPal_d^{\#}$ to be excluded from $\dcfl(d-1)$?
Furthermore, for any given language, how can we verify that it is outside of $\dcfl(\omega)$?
We can ask similar questions for $d$-union dcf languages and, more generally, for the union hierarchy $\dcfl[\omega]$ of dcf languages.
Using the special language $L_{wot}=\{wcx\mid w,x\in\{a,b\}^*,w\neq x\}$,  Wotschke \cite{Wot73,Wot78} proved that $L_{wot}$ does not belong to $\dcfl[\omega]$ (more strongly, the Boolean closure of $\dcfl$) by employing the closure property of $\dcfl(d)$ under inverse gsm mappings as well as intersection with regular languages. Wotschke's proof relies on the following two facts.
(i) The language $L_{d+1}$ can be expressed as the inverse gsm map of the language $Dup_c=\{wcw\mid w\in\{a,b\}^*\}$, restricted to $a_1^+a_2^+\cdots a_{d+1}^+ a_1^+a_2^+\cdots a_{d+1}^+$.
(ii) $Dup_c$ is expressed as the complement of $L_{wot}$, restricted to a certain regular language.
Together with these facts, the final conclusion comes from the aforementioned result (*) of Liu and Weiner because $Dup_{c}\in\dcfl(d)$ implies $L_{d+1}\in\dcfl(d)$ by (i) and (ii). To our surprise, all the fundamental results on $\dcfl(d)$ and $\dcfl[d]$ that we have discussed so far are merely ``corollaries'' of the main result (*) of Liu and Weiner!

In order to answer ``general'' non-membership questions to $\dcfl(d)$, we need to divert from Liu and Weiner's contrived argument used for proving the statement (*) and to develop a completely different, new, and more practical technical tool.
The sole purpose of this exposition is, therefore, set to  (i) develop a new proof technique, which can be applicable to many other languages, (ii) present an alternative proof for the fact that the intersection and the union hierarchies of DCFL are infinite hierarchies, and (iii) exhibit other languages in $\cfl$ that
do not belong to $\dcfl(\omega)\cup \dcfl[\omega]$.
In this direction of research, we refer to the work of Harrison and Havel \cite{HH74}.
Prior to Wotschke's work, Ginsburg and Greibach \cite{GG66} claimed \emph{with no proof}\footnote{In page 640 of \cite{GG66}, they wrote ``More strongly, $\{ ww^R \mid w \text{ in }\Sigma^*\}$ is not a finite union of deterministic
languages if $\Sigma$ contains at least two elements. We omit the proof.''} that the context-free language $Pal=\{ww^R\mid w\in\Sigma^*\}$ (even-length palindromes) for any non-unary alphabet $\Sigma$ is outside of  $\dcfl[\omega]$. This missing proof was later given by Harrison and Havel \cite{HH74} as  a quick application of their special ``iteration theorem'' for $\dcfl$.

In relevance to the union hierarchy of dcf languages, there is another known extension of $\dcfl$ within $\cfl$ using an access-controlled  machine model called \emph{limited automata},\footnote{Hibbard \cite{Hib67} actually defined a rewriting system, called ``scan-limited automata.'' Later, Pighizzini and Pisoni \cite{PP14,PP15} re-formulated Hibbard's system in terms of a restricted form of linear-bounded  automata.} which are originally invented by Hibbard \cite{Hib67} and later discussed extensively in, e.g., \cite{PP14,PP15,Yam19}.
A \emph{$d$-limited deterministic automaton} (or a $d$-lda, for short) is a one-tape deterministic Turing machine that can rewrite each tape cell located between two endmarkers only during the first $d$ visits (except that making a turn of a tape head counts as ``double visits''). We can raise a question of whether there is any relationship between the union hierarchy and $d$-lda's.

\subsection{New Separation Results}\label{sec:main-contribution}

As noted in Section \ref{sec:intersection-hierarchy}, some of the fundamental properties associated with $\dcfl(d)$ heavily rely on the single separation result (*) of Liu and Weiner. However, Liu and Weiner's technical tool that leads to their main result does not seem to withstand a wide variety of direct applications. It is thus desirable to  develop a new, simple, and practical technical tool that can find numerous applications for conducting a future study on $\dcfl(d)$ as well as $\dcfl[d]$.
Our main purpose of this exposition is therefore to present a simple but powerful and practical technical tool, which is applicable to numerous languages.
In a course of analyzing parse trees of (strict) dcf languages, Harrison and Havel \cite{HH74} developed an iteration theorem for $\dcfl$.
A similar approach has been taken to introduce numerous forms of pumping lemmas and iteration theorems for variants of dcf languages in,  e.g., \cite{Har86,Iga85,Kin80,KMW08,Wis76,Yu89}. Along this long line of study, this exposition also aims at developing \emph{pumping lemmas for $\dcfl[d]$} for any $d\geq1$ (formulated as Lemmas \ref{closure-REG} and \ref{hierarchy-properties} in Section \ref{sec:DFAs}), which may contribute to  the deepening of our understandings of both $\dcfl[d]$ and $\dcfl(d)$. See Fig.~\ref{fig:hierarchy-class}.


With the use of our pumping lemmas for $\dcfl[d]$, we significantly  expand the applicable scope of the argument of Liu and Weiner \cite{LW73}, which is  limited to specific bounded languages, to other types of languages, including the aforementioned languages $L_d^{(\leq)}$ and $NPal^{\#}_d$ for an arbitrary index $d\geq2$.

\begin{theorem}\label{two-langauge-proof}
For any integer $d$ at least $2$, the languages  $L_d^{(\leq)}$ and  $NPal^{\#}_d$ are not in $\dcfl(d-1)$.
\end{theorem}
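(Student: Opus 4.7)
The plan is to reduce the claim to showing that a ``complement $\cap$ regex'' language is not in $\dcfl[d-1]$, then contradict Lemma~\ref{second-pumping-lemma}. Assuming $L_d^{(\leq)} \in \dcfl(d-1)$, closure of $\dcfl$ under complementation (via De Morgan) and under intersection with regular languages yields $M_d := \{a_1^{n_1} \cdots a_d^{n_d} b_1^{m_1} \cdots b_d^{m_d} \mid \exists k, n_k > m_k\} = \overline{L_d^{(\leq)}} \cap R \in \dcfl[d-1]$, where $R = a_1^* \cdots a_d^* b_1^* \cdots b_d^*$. Analogously, $NPal_d^{\#} \in \dcfl(d-1)$ would give the ``exists-palindrome'' analogue $P_d := \{w_1 \# \cdots \# v_d \mid \exists i, v_i = w_i^R\} \in \dcfl[d-1]$ after intersecting with the $d$-fold $\#$-delimited format regex.

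Let $c$ be the constant from Lemma~\ref{second-pumping-lemma} with parameter $d-1$, pick $N \gg c$, and for each $k \in [d]$ set $w_k = a_1^N \cdots a_d^N b_1^N \cdots b_{k-1}^N b_k^{N-1} b_{k+1}^N \cdots b_d^N \in M_d$, whose unique violation $n_k > m_k$ sits exactly at position $k$. The lemma returns some $j_1 < j_2 \in [d]$ (say); I choose $x'$ as the longest common prefix $a_1^N \cdots a_d^N b_1^N \cdots b_{j_1-1}^N b_{j_1}^{N-1}$ of $w_{j_1}, w_{j_2}$, so that the $b_{j_1}$-block lies entirely in $x'$ (none in $y$) while the $b_{j_2}$-block lies entirely in $z$. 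Then I argue each of conditions (1)--(5) fails: (1) is refuted by the nondegenerate iterative pair consisting of a single $a_\ell$ and a single $b_\ell$ with $\ell = j_\nu$ inside each $w_{j_\nu}$ (tandem pumping preserves $n_\ell - m_\ell = 1$, solo pumping of the $b_\ell$ eventually kills the unique violation); (2) fails because every nondegenerate iterative pair of $w_{j_2}$ must involve the $b_{j_2}$-block in order to destroy the $j_2$-violation under solo pumping, and this block lies outside $x'$; (3) fails symmetrically on the $w_{j_1}$ side because $y$ contains no $b_{j_1}$-symbol; subcases (4a)--(4b) straddling the $x'/u$ boundary either mix the $b_{j_1}$ and $b_{j_1+1}$ alphabets and thus violate $R$ when pumped (if $u = y$) or stay inside the $b_{j_1}$-block and cannot touch the $j_2$-violation (if $u = z$); subcase (4c) fails because any iterative pair wholly inside $y$ (resp.\ $z$) preserves the $j_1$- (resp.\ $j_2$-) violation under arbitrary solo pumping; and (5) is ruled out by a positional argument, since the only candidate nondegenerate pair for $w_{j_2}$ lives in the $a_{j_2}$-block, which in $x'$ sits strictly to the right of the only candidate pairs for $w_{j_1}$ (in the $a_{j_1}$- and $b_{j_1}$-blocks), contradicting the required left-to-right order of the distinguished factors in the seven-part factorization.

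The main obstacle is the detailed case analysis of (4) and (5): for each prospective shape of iterative pair one must check both iterativity (all pump indices keep the string in $M_d$) and nondegeneracy (solo pumping eventually escapes $M_d$), and the key recurring observation is that preserving the sole surviving violation of $w_{j_\nu}$ under solo pumping forces the pumped factor to miss both $a_{j_\nu}$ and $b_{j_\nu}$, while destroying that violation (required for nondegeneracy) forces it to touch one of them---a conflict that, together with the fixed location of the $b_{j_2}$-block in $z$, rules out every case. The treatment of $NPal_d^{\#}$ proceeds in parallel on $P_d$: the witnesses are $d$ strings palindromic at exactly one position $k$ with fixed non-palindromic values elsewhere, the roles of the $a_k$- and $b_k$-blocks are played by the $w_k$- and $v_k$-segments, and the palindrome condition $v_i = w_i^R$ supplies the combinatorial obstruction that makes the same nondegeneracy arguments go through.
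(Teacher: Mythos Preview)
Your overall strategy---pass to the complement intersected with the format regex and then apply a pumping lemma for $\dcfl[d-1]$---is exactly the paper's, and for $NPal_d^{\#}$ your use of the second pumping lemma (Lemma~\ref{second-pumping-lemma}) is also the paper's route. For $L_d^{(\leq)}$, however, the paper takes a much shorter path: it applies the \emph{first} pumping lemma (Lemma~\ref{pumping-lemma}), which has only two conditions and a window constant $c$, to witnesses with staggered exponents $x=a_1^{n}a_2^{2n}\cdots a_d^{dn}$. That collapses the whole argument to a two-case check. Your choice of the second lemma for $L_d^{(\leq)}$ is workable---the key fact that every nondegenerate iterative pair of $w_{j_\nu}$ in $M_d$ must have the form $(a_{j_\nu}^{\,p},b_{j_\nu}^{\,p})$ is correct and drives everything---but it costs five conditions of nondegeneracy bookkeeping instead of two. (Minor slip: Lemma~\ref{second-pumping-lemma} has no constant~$c$; your reference to one is harmless but should be removed.)

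There is a concrete error in your argument for Condition~(5). You assert that the $a_{j_2}$-block sits strictly to the right of both the $a_{j_1}$- and $b_{j_1}$-blocks; this is false, since all $a$-blocks precede all $b$-blocks, so $a_{j_2}$ lies \emph{between} $a_{j_1}$ and $b_{j_1}$. The conclusion is still right, but the contradiction has to be argued case by case: in (5a) with $u=z$ the factor $x_2$ (in the $a_{j_2}$-block) must precede $x_4$ (in the $a_{j_1}$-block), impossible since $j_1<j_2$; in (5b) with $u=z$ the factor $x_6$ (in the $a_{j_2}$-block) must follow $x_4$ (in the $b_{j_1}$-block), impossible since $a$'s precede $b$'s; and in both subcases with $u=y$ the required $y_2=b_{j_1}^{\,p}$ is unavailable because $y$ contains no $b_{j_1}$. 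Your sketch for (4) when $u=z$ also needs a touch more: the straddling factor $x_2u_1$ \emph{can} be a pure $b_{j_1}$-power (since $z$ begins with a single $b_{j_1}$), so you must separately rule out $u_3$ landing in the $b_{j_2}$-block; this fails iterativity at $i\geq 2$, not the regex, so the reason you give does not cover it. Finally, your $NPal_d^{\#}$ paragraph is only a pointer; the paper's choice of $v_j=w_j^R s_j$ with $|s_j|=n$ is what makes the ``no nondegenerate pair crosses the $x'/u$ boundary'' step clean, and you will need something equally concrete.
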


From Theorem \ref{two-langauge-proof}, we instantly obtain the aforementioned consequence of Wotschke \cite{Wot73,Wot78}, stated in Corollary \ref{separation-proof}.
This statement directly comes from Theorem \ref{two-langauge-proof} because $\dcfl(d)$ contains both $L_d^{(\leq)}$ and $NPal_d^{\#}$.

\begin{corollary}\label{separation-proof}{\rm \cite{Wot73,Wot78}}
The intersection hierarchy of dcf languages and the union hierarchy of dcf languages are both infinite hierarchies.
\end{corollary}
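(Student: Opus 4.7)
The plan is to derive Corollary \ref{separation-proof} as a direct consequence of Theorem \ref{two-langauge-proof} once two simple facts are in place: a membership claim locating the witness languages inside $\dcfl(d)$, and the complementation duality between $\dcfl(d)$ and $\dcfl[d]$ recorded in Section \ref{sec:intersection-hierarchy}. Concretely, the proof would break into three short steps: (i) verify $L_d^{(\leq)} \in \dcfl(d)$ (and likewise $NPal^{\#}_d \in \dcfl(d)$); (ii) combine this with Theorem \ref{two-langauge-proof} to conclude $\dcfl(d-1) \subsetneq \dcfl(d)$ for every $d\geq 2$; and (iii) transfer the separation to the union hierarchy via complementation.

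For step (i), I would exhibit, for each $k\in [d]$, the language
\[
A_k = \{a_1^{n_1}\cdots a_d^{n_d} b_1^{m_1}\cdots b_d^{m_d} \mid n_i,m_j\geq 0,\; n_k\leq m_k\},
\]
and describe a 1dpda whose finite control enforces the regular block skeleton $a_1^*\cdots a_d^* b_1^*\cdots b_d^*$ while its stack pushes one symbol per $a_k$ and pops one per $b_k$, accepting iff the stack is emptied on or before the end of the $b_k$-block. Since this 1dpda is deterministic, $A_k\in\dcfl$, and $L_d^{(\leq)} = \bigcap_{k=1}^d A_k\in\dcfl(d)$. For $NPal^{\#}_d$, the analogous witness is $B_k = \{w_1\#\cdots\#w_d\#v_1\#\cdots\#v_d\mid v_k\neq w_k^R\}$, recognized by a 1dpda that counts $\#$-separators in its state, pushes $w_k$ onto its stack, skips the remaining first-half blocks, pops-and-compares against $v_k$, and accepts upon detecting either a symbol mismatch or a length mismatch (then consuming the rest of the input). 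Hence $NPal^{\#}_d = \bigcap_{k=1}^d B_k \in \dcfl(d)$.

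For step (ii), Theorem \ref{two-langauge-proof} supplies $L_d^{(\leq)} \notin \dcfl(d-1)$, so $L_d^{(\leq)} \in \dcfl(d)\setminus\dcfl(d-1)$ and the intersection hierarchy is strictly increasing at every level $d\geq 2$. For step (iii), recall that $\dcfl$ is closed under complementation, so De Morgan's laws yield $\co\dcfl(d)=\dcfl[d]$ for every $d\geq 1$. Therefore $\dcfl[d-1]=\dcfl[d]$ would entail $\dcfl(d-1)=\dcfl(d)$ upon complementing, contradicting step (ii). Hence $\dcfl[d-1]\subsetneq\dcfl[d]$ as well, and the union hierarchy is also infinite.

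Since Theorem \ref{two-langauge-proof} absorbs all the analytic difficulty (via the two pumping lemmas), the corollary itself poses no real obstacle; the only point requiring care is the determinism of the 1dpda's in step (i), and this is routine because each $A_k$ (respectively $B_k$) uses its stack exclusively for one counter (respectively one palindrome check), with the remaining block structure handled entirely by the finite control.
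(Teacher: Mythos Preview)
Your proposal is correct and follows essentially the same route as the paper: the paper derives the corollary from Theorem~\ref{two-langauge-proof} together with the membership $L_d^{(\leq)},NPal^{\#}_d\in\dcfl(d)$ (noted in the text just before the corollary and again at the start of the proof of Theorem~\ref{two-langauge-proof}), and then invokes Lemma~\ref{hierarchy-properties}(1), whose proof is exactly the complementation argument you give in step~(iii). Your explicit description of the witnessing 1dpda's in step~(i) is more detailed than what the paper spells out, but the underlying decomposition $L_d^{(\leq)}=\bigcap_k A_k$ and $NPal^{\#}_d=\bigcap_k B_k$ is the same.
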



\begin{figure}[t]
\centering
\includegraphics*[height=5.3cm]{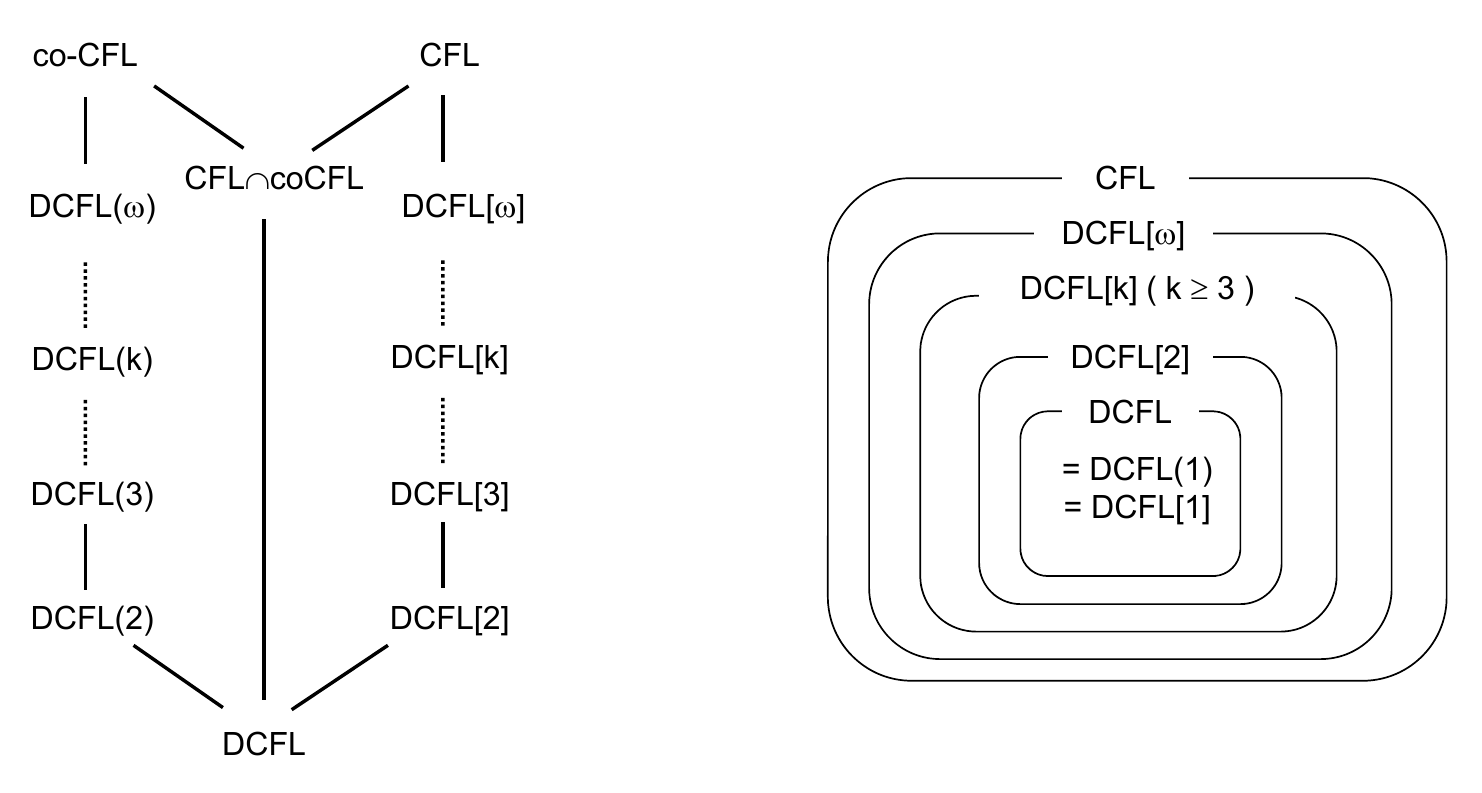}
\caption{Containment of language families. All families in this illustration are known to be distinct.}\label{fig:hierarchy-class}
\end{figure}


Concerning the limitation of $\dcfl(\omega)$ and $\dcfl[\omega]$ in recognition power, it is relatively easy to find highly complex languages that are not in $\dcfl[\omega]$. However, those languages may not serve themselves to separate $\cfl$ from $\dcfl(\omega)\cup\dcfl[\omega]$. As noted in Section \ref{sec:intersection-hierarchy}, the language $Pal$ of even-length palindromes
is known to be outside of $\dcfl[\omega]$ \cite{GG66,HH74}.
We can obtain the same result by way of
a direct application of our pumping lemmas.
Furthermore, we can show that another context-free language $MPal^{\#} =\{w_1\# w_2\# \cdots \# w_m \#^2  v_1\# v_2 \# \cdots \# v_n\mid m,n\geq1,
\exists i\in[\min\{m,n\}](|v_i|\neq |w_i|\vee v_i=w_i^R)\}$, where $w_1,w_2,\ldots,w_m,v_1,v_2,\ldots,v_n \in\{0,1\}^*$, is indeed outside of $\dcfl(\omega)\cup \dcfl[\omega]$.

\begin{theorem}\label{Pal-DCFL}
\renewcommand{\labelitemi}{$\circ$}
\begin{enumerate}
  \setlength{\topsep}{-2mm}%
  \setlength{\itemsep}{1mm}%
  \setlength{\parskip}{0cm}%

\item The language $Pal$ is in $\cfl\cap\co\cfl$ but not in $\dcfl[\omega]$.

\item The language $MPal^{\#}$ is in neither $\dcfl(\omega)$ nor $\dcfl[\omega]$.
\end{enumerate}
\end{theorem}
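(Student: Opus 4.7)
The plan is to prove the two parts of the theorem in sequence, using the pumping lemmas (Lemma \ref{pumping-lemma} and Lemma \ref{second-pumping-lemma}) together with the closure properties of the hierarchies provided by Lemmas \ref{closure-REG} and \ref{hierarchy-properties}.

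For part (1), the containment $Pal \in \cfl \cap \co\cfl$ is standard: a nondeterministic PDA accepts $Pal$ by guessing the midpoint, pushing the first half, and then popping to match the second half, while the context-free grammar $S \to 0 S 0 \mid 1 S 1 \mid 0 X 1 \mid 1 X 0$, $X \to 0 X \mid 1 X \mid \varepsilon$ generates $\overline{Pal}$ as the set of non-palindromes of the form $u \cdot a X b \cdot u^R$ with $a\neq b$. To establish $Pal \notin \dcfl[\omega]$, I argue by contradiction: assume $Pal \in \dcfl[d]$ for some $d \geq 1$, let $c$ denote the constant of Lemma \ref{second-pumping-lemma}, pick $N$ much larger than $c$, and consider the $d+1$ palindromes $w_i = 0^N \, 1 \, 0^{N+i} \, 1 \, 0^N$ for $i \in [d+1]$, which share the common prefix $x = 0^N \, 1 \, 0^{N+1}$ of length $2N+2 > c$. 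Lemma \ref{second-pumping-lemma} then yields two distinct indices $j_1, j_2 \in [d+1]$ such that for every admissible triple $(x',y,z)$ one of its five conditions holds. Each condition is systematically ruled out by exploiting the rigid palindromic symmetry of $x'y$ and $x'z$: any nondegenerate iterative pair forced by the lemma must place its two segments at mirror positions in the respective palindromes and with matching lengths, but since the middle blocks $0^{N + j_1}$ and $0^{N + j_2}$ have different lengths, a single common $x_2 \subseteq x'$ cannot mirror consistently into both $y$ and $z$; strategically choosing $(x',y,z)$ (for instance $x' = x$) then delivers the contradiction.

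For part (2), the strategy is analogous. To prove $MPal^{\#} \notin \dcfl[\omega]$, I pick $d+1$ instances $w^{(i)}$ of $MPal^{\#}$ whose unique matching-palindromic coordinate is position $i$, for example $w^{(i)} = u^{(1)} \# u^{(2)} \# \cdots \# u^{(d+1)} \#^2 v^{(i)}_1 \# v^{(i)}_2 \# \cdots \# v^{(i)}_{d+1}$, where the $u^{(j)}$'s are chosen so that the identity $v^{(i)}_j = (u^{(j)})^R$ holds exactly when $j = i$. Applying Lemma \ref{second-pumping-lemma} to these instances, the common $x_2$ extracted by the lemma would have to simultaneously preserve the palindromic-matching witness at coordinate $j_1$ in $w^{(j_1)}$ and the one at coordinate $j_2$ in $w^{(j_2)}$; pumping $x_2$ alters the $u$-blocks in a way incompatible with two distinct matching coordinates, which is the desired contradiction. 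To prove $MPal^{\#} \notin \dcfl(\omega)$, I invoke the duality $L \in \dcfl(d) \Longleftrightarrow \overline{L} \in \dcfl[d]$ provided by Lemma \ref{hierarchy-properties}: the hypothesis $MPal^{\#} \in \dcfl(d)$ gives $\overline{MPal^{\#}} \in \dcfl[d]$, and after intersecting with a suitable regular language $R$ via Lemma \ref{closure-REG} to isolate the relevant combinatorial structure, I apply Lemma \ref{second-pumping-lemma} to reach the contradiction.

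The hard part, in both cases, will be the detailed case analysis of conditions (4) and (5) of Lemma \ref{second-pumping-lemma}, which permit iterative pairs straddling $x'$ with $y$ (or $z$) in several intricate configurations; for each configuration one must verify that the palindromic symmetry (or, for $MPal^{\#}$, the matching structure) together with the $j_1 \neq j_2$ distinction forbids the required nondegenerate iterative pair. The nondegeneracy requirement in the second pumping lemma is essential here: without it, trivial iterative pairs such as $(x_2, \varepsilon)$ with $x_2$ inside the variable middle block of a palindrome would satisfy the weaker first pumping lemma yet produce no contradiction, so the whole argument hinges on pushing the analysis through the stronger nondegenerate-pair formulation.
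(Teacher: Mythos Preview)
Your plan diverges from the paper's in all three substantive steps, and one of those divergences is a genuine gap.

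For Part~(1), the paper uses the \emph{first} pumping lemma (Lemma~\ref{pumping-lemma}) with the block strings $w_k = 0^n1^n0^n1^n\cdots 0^n$ ($2k+1$ blocks), which reduces the case analysis to just the two conditions of that lemma. You instead invoke the \emph{second} pumping lemma with the witnesses $w_i = 0^N 1 0^{N+i} 1 0^N$. Note that Lemma~\ref{second-pumping-lemma} carries no constant $c$ at all (your hypothesis ``let $c$ denote the constant of Lemma~\ref{second-pumping-lemma}'' is misplaced), and it forces you through five cases rather than two; you sketch only the outline and the burden falls precisely on conditions~(4) and~(5), which you yourself flag as ``the hard part.'' The paper's route here is both shorter and fully worked out. (A small side remark: your grammar for $\overline{Pal}$ generates only strings $s$ with $s\neq s^R$, but since $Pal=\{ww^R\}$ consists of even-length palindromes, $\overline{Pal}$ also contains all odd-length palindromic strings such as $0$ or $010$, which your grammar misses.)

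For $MPal^{\#}\notin\dcfl[\omega]$, the paper does not run a fresh pumping argument: it simply observes that $MPal^{\#}\in\dcfl[d]$ would imply $Pal\in\dcfl[d]$ (by intersecting with a suitable regular set), and then quotes Part~(1). This is a one-line reduction that you should use rather than repeating the case analysis.

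The real gap is your treatment of $MPal^{\#}\notin\dcfl(\omega)$. You propose to dualise to $\overline{MPal^{\#}}\cap R\in\dcfl[d]$ and then apply Lemma~\ref{second-pumping-lemma}, but you give no concrete choice of $R$, no witness strings, and no case analysis. The paper does \emph{not} use either pumping lemma here: after the dualisation it lands on $NPal^{\#}_2=\{w_1\#w_2\#v_1\#v_2 : v_1\neq w_1^R,\; v_2\neq w_2^R\}$ and then runs a Kolmogorov-complexity (incompressibility) argument directly against a 1dpda in ideal shape, showing that the machine cannot retain enough stack information to verify both non-matching conditions. Indeed, the paper's concluding discussion explicitly singles out this step as one where the pumping lemmas were not sufficient and a separate technique was required. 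The difficulty is structural: membership in $NPal^{\#}_2$ is a conjunction of two \emph{inequalities}, which is a very loose constraint, so almost any factorization admits nondegenerate iterative pairs and the pumping lemma yields no contradiction. Unless you can supply a specific regular restriction and witness family for which conditions~(1)--(5) provably all fail, this part of your plan does not go through as stated.
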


As an immediate consequence of Theorem \ref{Pal-DCFL}(1), we obtain Wotschke's separation result between $\cfl$ and $\dcfl[\omega]$. Here, we stress that, unlike the work of Wotschke \cite{Wot73,Wot78}, our proof does not depend on the main result (*) of Liu and Weiner. Actually, we can obtain a much stronger consequence (Corollary \ref{DCFL-omega}) from Theorem \ref{Pal-DCFL}(1) because $\cfl\cap\co\cfl\subseteq \dcfl(\omega)$ implies $\cfl\cap\co\cfl = \co(\cfl\cap\co\cfl) \subseteq \co\dcfl(\omega) = \dcfl[\omega]$, contradicting the theorem.

\begin{corollary}\label{DCFL-omega}
$\cfl\cap\co\cfl \nsubseteq \dcfl(\omega)\cup \dcfl[\omega]$.
\end{corollary}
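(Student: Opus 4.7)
My plan is to derive Corollary \ref{DCFL-omega} as an immediate dualization of Theorem \ref{Pal-DCFL}(1) using the closure of $\dcfl$ under complementation. Two standing observations are needed: (i) since $\dcfl$ is complement-closed, complementing an intersection of $d$ dcf languages produces a union of $d$ dcf languages, so $\co\dcfl(d)=\dcfl[d]$ for each $d\geq 1$ and hence $\co\dcfl(\omega)=\dcfl[\omega]$; and (ii) $\cfl\cap\co\cfl$ is itself complement-closed, because $L\in\cfl$ and $\overline{L}\in\cfl$ hold simultaneously, so $L\in\cfl\cap\co\cfl$ iff $\overline{L}\in\cfl\cap\co\cfl$.

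The non-inclusion $\cfl\cap\co\cfl\nsubseteq\dcfl[\omega]$ is then a direct restatement of Theorem \ref{Pal-DCFL}(1), with $Pal$ as the witness. For the dual non-inclusion $\cfl\cap\co\cfl\nsubseteq\dcfl(\omega)$, I would argue by contraposition along the lines sketched just before the corollary: were $\cfl\cap\co\cfl\subseteq\dcfl(\omega)$, then complementing both sides and invoking (i) and (ii) would yield
\[
\cfl\cap\co\cfl \;=\; \co(\cfl\cap\co\cfl) \;\subseteq\; \co\dcfl(\omega) \;=\; \dcfl[\omega],
\]
contradicting the first half. Concretely, the complement $\overline{Pal}$ lies in $\cfl\cap\co\cfl$ by (ii), and $\overline{Pal}\in\dcfl(\omega)$ would force $Pal\in\co\dcfl(\omega)=\dcfl[\omega]$ by (i), again contradicting Theorem \ref{Pal-DCFL}(1). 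Combining both non-inclusions yields Corollary \ref{DCFL-omega}.

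There is no real obstacle to overcome: all the combinatorial content has been packaged inside Theorem \ref{Pal-DCFL}(1), and what remains is a short complementation chase. The only points deserving explicit care are the two closure observations (i) and (ii), both of which are routine consequences of the closure of $\dcfl$ under complement together with de Morgan's laws; the witness can be taken to be $Pal$ for the $\dcfl[\omega]$ half and $\overline{Pal}$ for the $\dcfl(\omega)$ half.
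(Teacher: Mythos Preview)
Your argument is correct and mirrors the paper's own justification essentially verbatim: the paper derives the corollary from Theorem~\ref{Pal-DCFL}(1) by exactly the complementation chase you describe, noting that $\cfl\cap\co\cfl\subseteq\dcfl(\omega)$ would force $\cfl\cap\co\cfl=\co(\cfl\cap\co\cfl)\subseteq\co\dcfl(\omega)=\dcfl[\omega]$. The two closure observations (i) and (ii) you isolate and the witnesses $Pal$ and $\overline{Pal}$ are precisely what the paper uses (the latter implicitly).
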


As another application of our pumping lemmas,  we target languages associated with limited automata. For each index $d\geq1$, we write $\klda{d}$ for the family of all languages recognized by $d$-limited deterministic automata, in which their tape heads are allowed to rewrite tape symbols only during the first $d$ accesses (except that, in the case where tape heads make a turn, we treat each turn as double visits).
Hibbard demonstrated that $\klda{d}\neq \klda{(d-1)}$ for any integer $d\geq3$ \cite{Hib67}.
We expand this separation result to the following in connection to the intersection hierarchy of dcf languages.

\begin{proposition}\label{extension-Hibbard}
For any $d\geq2$, $\klda{d}\cap \dcfl[2^{d-1}] \nsubseteq \klda{(d-1)}\cup \dcfl[2^{d-1}-1]$.
\end{proposition}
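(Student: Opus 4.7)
The plan is to exhibit, for each $d \geq 2$, a single witness language $L_d^*$ lying in the difference $(\klda{d} \cap \dcfl[2^{d-1}]) \setminus (\klda{(d-1)} \cup \dcfl[2^{d-1}-1])$. The natural candidate, which aligns the ``$d$ rewriting passes'' budget of a $d$-lda with the ``$2^{d-1}$ dcf components'' budget of the union hierarchy, is an elaboration of $MPal^{\#}_{2^{d-1}}$ (or of the complement of $L_{2^{d-1}}$ restricted to an appropriate regular language) in which the $2^{d-1}$ palindrome-check or block-equality instances are organized as the leaves of a complete binary tree of depth $d-1$. The tree structure is essential: a $d$-lda can handle one tree level per rewriting pass, while the $2^{d-1}$ leaves correspond to $2^{d-1}$ independent dcf disjuncts that cannot be compressed into fewer.

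The two upper bounds are relatively direct. For $L_d^* \in \dcfl[2^{d-1}]$, I would write $L_d^* = \bigcup_{i=1}^{2^{d-1}} D_i$, where $D_i$ is the dcf language asserting that the $i$-th leaf witnesses the disjunctive condition; a 1dpda locates the two relevant blocks through the binary markers and performs a standard stack-based comparison. For $L_d^* \in \klda{d}$, I would design a $d$-lda whose first $d-1$ passes implement a binary-tournament aggregation of per-leaf verdict bits (each pass compresses pairs of adjacent verdicts into a parent verdict written on a dedicated cell), with the final pass extracting the root bit; since each cell is touched only a bounded number of times per pass, the $d$-visit budget is respected.

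The two lower bounds invoke the machinery of the paper together with Hibbard's hierarchy. For $L_d^* \notin \dcfl[2^{d-1}-1]$: assume the contrary, take complements (using closure of $\dcfl$ under complementation), intersect with the regular ``skeleton'' of $L_d^*$ via Lemma~\ref{closure-REG} and Lemma~\ref{hierarchy-properties}, and apply Theorem~\ref{two-langauge-proof} with the index parameter set to $2^{d-1}$ to derive a contradiction; the key pumping-lemma input is Lemma~\ref{second-pumping-lemma}. For $L_d^* \notin \klda{(d-1)}$: invoke Hibbard's separation $\klda{d} \setminus \klda{(d-1)} \neq \emptyset$ via a marker-preserving reduction that embeds a Hibbard-style separating language into $L_d^*$, so that any $(d-1)$-lda for $L_d^*$ would yield one for Hibbard's language.

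The main obstacle is arranging the binary-tree structure so that \emph{both} lower bounds go through for the \emph{same} $L_d^*$. The pumping-lemma application demands that the binary markers do not interfere with the iterative-pair analysis (requiring care in choosing block separators and in defining the restricting regular language so that the pull-back to $NPal^{\#}_{2^{d-1}}$ is exact), whereas the $\klda{(d-1)}$ lower bound demands that the tree markers are genuinely sequential so that a $(d-1)$-lda cannot ``skip'' a level. Reconciling these two requirements is the most delicate part of the construction, and may force a modified $L_d^*$ that grafts a Hibbard gadget onto a Liu-Weiner skeleton, with a correspondingly more involved pumping-lemma case analysis.
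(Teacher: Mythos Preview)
Your proposal has a genuine gap: you never actually define $L_d^*$, and the palindrome-based candidate you sketch has no evident connection to the $\klda{}$ hierarchy. A language built from $2^{d-1}$ palindrome instances arranged in a binary tree might well be recognizable by a $2$-lda (or even a $1$-lda variant), since palindrome checking is a single-turn stack task; there is no reason the tree organization of the \emph{disjunction} should force $d$ rewriting passes. Your proposed fix --- a ``marker-preserving reduction'' embedding a Hibbard separating language into $L_d^*$ --- is precisely the hard part, and you leave it entirely unspecified. The closing paragraph of your proposal essentially concedes that you do not know how to make the two lower bounds hold for a single language.

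The paper avoids this obstacle by reversing the order of attack: instead of inventing a new language and then reducing Hibbard's language to it, it takes (a slight modification of) Hibbard's own separating language $L_k$ directly as the witness. The $\klda{k}\setminus\klda{(k-1)}$ membership is then free from \cite{Yam19}. The language $L_k$ consists of strings $w_2\#w_4\#\cdots\#w_k\#w_{k-1}\#\cdots\#w_3\#w_1$ with each $w_i=a^{n_i}b^{m_i}c^{p_i}$, subject to a chain of conditions: the relation holding at level $j$ (either $n_j=m_j$, $m_j=p_j$, $n_j<m_j$, or $m_j<p_j$) determines which relation must hold at level $j-1$. This branching structure yields $L_k\in\dcfl[2^{k-1}]$ by an easy induction (each step contributes a factor of $4$, dropping two levels), and the non-membership $L_k\notin\dcfl[2^{k-1}-1]$ is obtained by a direct application of the first pumping lemma (Lemma~\ref{pumping-lemma}), not via Theorem~\ref{two-langauge-proof}. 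The point you missed is that the $2^{d-1}$ in the statement is not an artifact of a binary-tree construction grafted onto palindromes, but arises intrinsically from the case structure of Hibbard's chain.
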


The proofs of all of the above-mentioned assertions will be given in Section \ref{sec:proof-separation} after introducing necessary notions and notation in the subsequent section.

\section{Preparations: Notions and Notation}\label{sec:preparation}

We begin with a detailed explanation of basic notions and notation, which are crucial in the rest of this exposition.

\subsection{Fundamental Notions and Notation}\label{sec:notions-notation}

The set of all \emph{natural numbers} (including $0$) is denoted by $\nat$. An \emph{integer interval} $[m,n]_{\integer}$ for two integers $m$ and $n$ with $m\leq n$ is the set $\{m,m+1,m+2,\ldots,n\}$.
For any integer $n\geq1$, $[1,n]_{\integer}$ is succinctly abbreviated as $[n]$.
Given a set $S$, the notation $|S|$ indicates the \emph{cardinality} of $S$ and the notation $\PP(S)$ refers to the \emph{power set} of $S$.

An \emph{alphabet} is a finite nonempty set of ``symbols'' or ``letters''. A finite sequence of symbols in alphabet $\Sigma$ is called a \emph{string} over $\Sigma$. The \emph{length} of such a string $x$, denoted $|x|$, is the total number of symbols occurring in $x$. The special symbol $\varepsilon$ is used to denote the \emph{empty string} of length $0$. We write $\Sigma^*$ for the set of all strings over $\Sigma$ and we set $\Sigma^{+}$ to be $\Sigma^*-\{\varepsilon\}$. Given a number $n\in\nat$, the notation $\Sigma^n$ (resp., $\Sigma^{\leq n}$) further denotes the set of all strings of length exactly $n$ (resp., at most $n$) in $\Sigma^*$.
We say that $x$ is a \emph{substring} of $y$ (denoted $x\sqsubseteq y$) if $y=uxv$ holds for certain strings $u$ and $v$.
For a string $x$ of length $n$ and any index $i\in [n]$, the notation $x[i]$ expresses the $i$th symbol of $x$.
When a nonempty string $x$ is expressed as $x=x_1x_2\cdots x_n$ with strings $x_1,x_2,\ldots,x_n$, we call such an expression a \emph{factorization} of $x$. For clarity, we also write $(x_1,x_2,\ldots,x_n)$.

Any subset of $\Sigma^*$ is called a \emph{language} over $\Sigma$.
For a language $L$ over $\Sigma$,  $\Sigma^*-L$ is the \emph{complement} of $L$ and expressed as $\overline{L}$ as long as $\Sigma$ is clear from the context. Given a family $\FF$ of languages, $\co\FF$ expresses the complement family, which consists of the complements $\overline{L}$ of all languages $L\in\FF$.

For two language families $\FF_1$ and $\FF_2$, the notation $\FF_1\wedge \FF_2$ (resp., $\FF_1\vee\FF_2$) denotes the family of all languages $L$  such that there are two languages $L_1\in\FF_1$ and $L_2\in\FF_2$ over a certain common alphabet satisfying $L=L_1\cap L_2$ (resp., $L=L_1\cup L_2$). Generally, for a given  $k$-ary operation, say, $op$ over $k$ languages, we say that a family $\CC$ of languages \emph{is closed under} $op$ if, for any $k$ languages in $\CC$, $op(L_1,L_2,\ldots,L_k)$ also belongs to $\CC$.

\subsection{Deterministic Pushdown Automata}\label{sec:DFAs}

The machine model of \emph{one-way deterministic pushdown automata} (abbreviated as 1dpda's) was introduced in 1966 by Ginsburg and Greibach \cite{GG66}. Formally, a 1dpda $M$ is a nonuple $(Q,\Sigma,\{\cent,\dollar\},\Gamma,\delta,q_0,Z_0,Q_{acc},Q_{rej})$, where $Q$ is a finite set of inner states, $\Sigma$ is an input alphabet, $\Gamma$ is a stack alphabet, $\delta$ is a transition function from $(Q-Q_{halt})\times\check{\Sigma}_{\varepsilon} \times\Gamma$ to $\PP(Q\times\Gamma^*)$ with $Q_{halt} = Q_{acc}\cup Q_{rej}$, $\check{\Sigma} = \Sigma\cup\{\cent,\dollar\}$, and $\check{\Sigma}_{\varepsilon} = \check{\Sigma}\cup\{\varepsilon\}$ satisfying $|\delta(p,\sigma,a)|\leq 1$ for any $(p,\sigma,a)$, $q_0$ is the initial state in $Q$, $Z_0$ is the bottom marker in $\Gamma$, and $Q_{acc}$ and $Q_{rej}$ are respectively sets of accepting states and of rejecting states.
Remember that $\cent$ and $\dollar$ are respectively called the left-endmarker and the right-endmarker. Let $\Gamma^{(-)} = \Gamma-\{Z_0\}$.
Any inner state in $Q_{halt}$ is called a \emph{halting state}.
The \emph{push size} of a 1ppda is the maximum length of any string pushed into a stack by any single move.
The 1dpda $M$ must satisfy the following \emph{deterministic requirement}:  $|\delta(p,\sigma,a)\cup \delta(p,\varepsilon,a)|=1$ holds for any $p\in Q$, any $a\in\Gamma$, and any symbol $\sigma\in\check{\Sigma}$.
When $\delta(q,\sigma,a)$ is a singleton, say, $\{(p,w)\}$, we intentionally write ``$\delta(p,\sigma,a)=(p,w)$''. This transition indicates that, when $M$ is in inner state $q$ scanning $\sigma$ on an input tape and $a$ on the top of a stack, $M$ changes $q$ to $p$, replaces $a$ by $w$, and moves its tape head to the right whenever $\sigma\neq\varepsilon$.
Moreover, we require that the bottom marker $Z_0$ is not removable or replaceable at any time, and $Z_0$ is not allowed to push on top of any stack symbol.
A content of the stack (or a \emph{stack content}, for short) is expressed as $a_1a_2\cdots a_k$ so that $a_1$ is the topmost stack symbol, $a_k$ is the bottom marker, and all $a_i$'s are placed in order from the top to the bottom in the stack.
The \emph{stack height} refers to the size of a stack content, namely, the total number of symbols stored in the stack.
For instance, a stack content $a_1a_2\cdots a_k$ with $a_k=Z_0$ has stack height $k$. Since $Z_0$ is not removable, the stack height is always at least $1$ during any computation. A consecutive series of the transitions of stack contents produced by $M$ is
generally referred to as a \emph{stack history}.

When $M$ enters a halting state, $M$ \emph{halts} (by the definition of $\delta$). In the rest of this exposition, we always demand that $M$ halts on all inputs. Given a string $w$, we say that $M$ \emph{accepts} (resp., \emph{rejects}) $w$ if $M$ is in an accepting (resp., rejecting) state when it halts. We use the notation $L(M)$ to denote the set of all strings accepted by $M$. If a language $L$ satisfies $L=L(M)$, then $M$ is said to \emph{recognize} $L$. Such a language is called a \emph{deterministic context-free (dcf) language}.

Given any number $d\in\nat^{+}$, a \emph{$d$-intersection deterministic context-free (dcf) language} refers to an intersection of $d$ dcf  languages. Let $\dcfl(d)$ denote the family of all such $d$-intersection dcf languages. Similarly, we define \emph{$d$-union dcf languages} and $\dcfl[d]$ by substituting ``union'' for  ``intersection'' in the above definition.
Remark that $\dcfl = \dcfl(1)=
\dcfl[1]$. It then follows that $\dcfl[d] = \co(\dcfl(d))$ because of $\dcfl=\co\dcfl$.

The following two lemmas will be quite useful in proving Theorems \ref{two-langauge-proof} and \ref{Pal-DCFL} together with Proposition \ref{extension-Hibbard}.

\begin{lemma}\label{closure-REG}{\rm \cite{Wot73,Wot78}}
$\dcfl(d)$ is closed under union and intersection with regular languages.  In other words, $\dcfl(d)\wedge \reg \subseteq \dcfl(d)$ and $\dcfl(d)\vee \reg \subseteq \dcfl(d)$. A similar statement holds for $\dcfl[d]$.
\end{lemma}

\begin{proof}
For completeness, we provide the proof of the lemma. Let us take any language $L\in\dcfl[d]$ and a regular language $A$. For this $L$, take $d$ appropriate languages $L_1,L_2,\ldots,L_d\in \dcfl$ satisfying  $L=\bigcap_{i\in[d]} L_i$. It then follows that $L\cap A=\bigcap_{i\in[d]}(L_i\cap A)$ and $L\cup A=\bigcap_{i\in[d]}(L_i\cup A)$. Since $L_i\cap A$ and $L_i\cup A$ are both deterministic context-free, we conclude that $L\cap A$ and $L\cup A$ both belong to $\dcfl(d)$. The case for $\dcfl[d]$ is similarly treated.
\end{proof}

\begin{lemma}\label{hierarchy-properties}
Let $d\geq1$ denote any integer.
\renewcommand{\labelitemi}{$\circ$}
\begin{enumerate}\vs{-1}
  \setlength{\topsep}{-2mm}%
  \setlength{\itemsep}{1mm}%
  \setlength{\parskip}{0cm}%

\item $\dcfl(d)=\dcfl(d+1)$ iff $\dcfl[d]=\dcfl[d+1]$.

\item For any language $L\in\dcfl(d)$, it follows that $A\cap\overline{L}\in\dcfl[d]$ for any language $A\in\reg$.
\end{enumerate}
\end{lemma}

\begin{proof}
(1) Assume that $\dcfl(d)=\dcfl(d+1)$. By taking complementation, we obtain  $\co(\dcfl(d)) = \co(\dcfl(d+1))$. This is equivalent to $\dcfl[d]=\dcfl[d+1]$. The other direction is similarly proven.

(2) Assume that $L\in\dcfl(d)$. From this follows $\overline{L}\in\dcfl[d]$. Lemma \ref{closure-REG} then yields $A\cap\overline{L}\in\dcfl[d]$ for any regular language $A$.
\end{proof}

From Lemma \ref{hierarchy-properties}(1) follows Corollary  \ref{separation-proof}, provided that Theorem \ref{two-langauge-proof} is true. Theorem \ref{two-langauge-proof} itself will be proven in Section \ref{sec:proof-separation}.

\subsection{Pushdown Automata in an Ideal Shape}\label{sec:ideal-shape}

We will introduce two pumping lemmas for $\dcfl[d]$ as Lemmas \ref{pumping-lemma} and \ref{second-pumping-lemma} in Section \ref{sec:new-tool} and prove them in Section \ref{sec:proof-pumping}. To make these proofs simpler, we intend to exploit the fact that any 1dpda can be converted into a specific form.
Let us recall from \cite{Yam19,Yam21} a special ``push-pop-controlled''  form (called an \emph{ideal shape}), in which (i) a pop operation always takes  place by first reading an input symbol and then making  a series (one or more) of the pop operations without reading any further input symbol and
(ii) a push operation always adds a single symbol
without altering any existing stack content.
The original notion given in \cite{Yam19,Yam21} was meant for \emph{one-way probabilistic pushdown automata} (or 1ppda's); however, in this exposition, we apply this notion to 1dpda's.
To be more precise, a 1ppda \emph{in an ideal shape} is a 1ppda restricted to the following moves regarding its stack operations.
(1) Scanning $\sigma\in\Sigma$, preserve the topmost stack symbol (called a \emph{stationary operation}).  (2) Scanning  $\sigma\in\Sigma$,  push a new symbol $u$ ($\in\Gamma^{(-)}$) without changing any other symbol in the stack. (3) Scanning  $\sigma\in\Sigma$, pop the topmost stack symbol. (4) Without scanning an input symbol (i.e., $\varepsilon$-move), pop the topmost stack symbol. (5) The stack operation (4) comes only after either (3) or (4).

It was shown in \cite{Yam19,Yam21} that any 1ppda can be converted into its ``equivalent'' 1ppda in an ideal shape.  We say that two 1ppda's are \emph{error-equivalent} if, for any input $x$, their acceptance probabilities coincide.

\begin{lemma}{\rm [Ideal Shape Lemma for 1ppda's \cite{Yam19,Yam21}]}\label{ideal-shape}
Let $n\in\nat^{+}$. Any $n$-state 1ppda $M$ with stack alphabet size $m$ and push size $e$ can be converted into another error-equivalent 1ppda $N$ in an ideal shape with $O(en^2m^2(2m)^{2enm})$ states and stack alphabet size $O(enm(2m)^{2enm})$.
\end{lemma}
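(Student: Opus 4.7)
The plan is to transform $M$ in three successive stages: (i)~replace every multi-symbol push by a single-symbol push using compound stack symbols, (ii)~collapse ``stationary'' $\varepsilon$-moves (those that neither push nor pop) into subsequent input-reading moves, and (iii)~rearrange each reading move that would push-and-then-immediately-$\varepsilon$-pop into a form consistent with condition~(5), namely that $\varepsilon$-pops occur only after a reading-pop or another $\varepsilon$-pop.

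For stage~(i), I would enlarge the stack alphabet to $\Gamma^{\leq e}$, interpreting a push of the string $a_1 a_2 \cdots a_k$ (with $a_1$ destined for the top) as a push of a single compound symbol $[a_1 a_2 \cdots a_k]$. The ``topmost stack symbol'' is reinterpreted as the leftmost component of the topmost compound; a pop either strips this component or removes the entire compound when only one component remains. This produces an error-equivalent machine with stack alphabet of size $O(m^e)$ whose every push adds a single symbol and leaves the lower contents untouched. For stage~(ii), I would precompute, for each pair $(q,a)\in Q\times\Gamma$, the distribution over outcomes of the maximal $\varepsilon$-chain of stationary moves initiated at $(q,a)$, which must be finite lest $M$ fail to halt; these chains can then be absorbed directly into the transition table of $N$, introducing only a polynomial blowup in the state count.

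The main obstacle is stage~(iii). On a single input-reading move, $M$ may legitimately push symbols and then $\varepsilon$-pop them without ever passing through a reading-pop, violating~(5). To enforce the condition, I would have $N$ carry in its finite control a \emph{profile} summarizing, for each pair $(q,a)\in Q\times\Gamma$, whether the pending $\varepsilon$-computation from the current configuration can reach state $q$ with $a$ on top, where the up to $e$ symbols potentially pushed but not yet committed are held in the control rather than on the stack. Such a profile is a subset of a set of size at most $nm$, so there are $2^{nm}$ possibilities at each slot, and threading the profile through the up to $e$ potential push-pop pairs in one reading move produces the $(2m)^{2enm}$ factor in both the state and stack alphabet sizes. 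Whenever $M$ would push a symbol and then $\varepsilon$-pop it back, $N$ instead absorbs the net stationary effect into its finite control and defers any genuine stack-modifying operation until a future reading move legitimately initiates a pop phase.

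The final step would be to verify error-equivalence by induction on computation length: every modification preserves the accept/reject outcome on each input, only the shape of the stack behavior changes. Counting the states and stack symbols accumulated across the three stages, with the blowup at stage~(iii) dominant, yields exactly the claimed bounds $O(en^2m^2(2m)^{2enm})$ and $O(enm(2m)^{2enm})$.
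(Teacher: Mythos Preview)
The paper does not prove this lemma at all: it is quoted verbatim as a result of \cite{Yam19,Yam21} and used as a black box, so there is no ``paper's own proof'' to compare against. Your task was therefore to reproduce an argument that lives in those external references, and your sketch, while pointing in a plausible direction, has a concrete gap.

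The problem is in stage~(i). When you pack a push of $a_1a_2\cdots a_k$ into a single compound symbol $[a_1a_2\cdots a_k]$, a subsequent pop of $a_1$ becomes the operation ``replace $[a_1\cdots a_k]$ by $[a_2\cdots a_k]$''. This is neither a push, nor a pop, nor a stationary move in the sense of the ideal-shape conditions (stationary means the topmost symbol is \emph{preserved}, not merely that the height is unchanged). Your stage~(ii) addresses only $\varepsilon$-moves that are stationary, and your stage~(iii) addresses only push-then-$\varepsilon$-pop sequences; neither covers an \emph{input-reading} move that replaces the top symbol by a different one of the same height. So after all three stages you still have non-ideal-shape transitions. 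The actual construction in \cite{Yam19,Yam21} does not decompose the problem this way; rather, it analyzes the entire behavior of $M$ between two consecutive input reads (a bounded-length pop phase followed by a bounded-length push phase, with the bounds coming from a pigeonhole argument on $(q,a)$ pairs to rule out loops), and encodes that net effect directly into enlarged states and stack symbols. That is where the $(2m)^{2enm}$ factor genuinely arises, and your attempt to derive it from ``threading a profile through up to $e$ push-pop pairs'' does not reproduce the arithmetic: $2^{nm}$ profiles at $e$ slots gives $2^{enm}$, not $(2m)^{2enm}$.
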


As noted in \cite{Yam19,Yam21}, by setting an error probability of 1ppda's to be $0$, Lemma \ref{ideal-shape} becomes applicable to 1dpda's. Therefore, it suffices in the rest of this exposition to deal only with 1dpda's in an ideal shape.

\subsection{Deterministic Limited Automata}

Hibbard \cite{Hib67} discussed another machine model of scan limited automata in 1967. In this exposition, we follow a reformulation of his model by Pighizzini and Pisoni \cite{PP14,PP15} and Yamakami \cite{Yam19} under the name of \emph{deterministic $d$-limited automata} (or $d$-lda's, for short). Given $d\geq1$, a $d$-lda $M$ is formally an octuple $(Q,\Sigma,\{\cent,\dollar\}, \{\Gamma^{(e)}\}_{e\in[d]}, \delta,q_0, Q_{acc},Q_{rej})$ with a transition function $\delta$ mapping $(Q-Q_{halt})\times \check{\Sigma}\times \Gamma$ to $Q\times \Gamma \times D$, where $D=\{-1,+1\}$,  $\Gamma=\bigcup_{e\in[0,d]_{\integer}} \Gamma^{(e)}$ with $\Gamma^{(0)}=\Sigma$, $\{\cent,\dollar\}\subseteq\Gamma^{(d)}$, and $\Gamma^{(i)}\cap \Gamma^{(j)}=\setempty$ for any distinct pair $i,j\in[0,d]_{\integer}$.
We remark that $M$ has a single rewritable input/work tape and moves its tape head in both directions without making any $\varepsilon$-move. If $\delta(q,\sigma) = (p,\tau,l)$, then $M$ changes its inner state $q$ to $p$, writes $\tau$ over $\sigma$, and moves its tape head in direction $l$.
Assuming further that $\sigma\in\Gamma^{(i)}$ and $\tau\in\Gamma^{(j)}$ for $i,j\in[0,d]_{\integer}$, we demand that (1) if $i=d$, then $\sigma=\tau$ and $j=d$, (2) if $i<d$ and $i$ is even, then $j=i+2^{(1-l)/2}$, and (3) if $i<d$ and $i$ is odd, then $j=i+2^{(1+l)/2}$. These requirements imply that no symbol in $\Gamma^{(d)}$ is replaced by any other symbol.

Here, we reuse the basic terminology introduced in Section \ref{sec:DFAs} for 1dpda's and we do not intend to restate similar definitions for $d$-lda's.
The notation $\klda{d}$ is used for the family of all languages recognized by $d$-lda's. It is known that $\dcfl =\klda{2}$ \cite{PP15} and $\klda{d}\neq \klda{(d+1)}$ for any index $d\in\nat^{+}$ \cite{Hib67}.

\section{New Technical Tools: Pumping Lemmas for DCFL[$d$]}\label{sec:new-tool}

In the past literature, numerous pumping lemmas (as well as iteration theorems) have been proposed. Interestingly, the form of these pumping lemmas varies. For instance, the iteration theorem  of Harrison and Havel \cite[Theorem 2.3]{HH74} for $\dcfl$ focuses on a single input string in a target language, whereas Yu's pumping lemma for $\dcfl$ \cite[Lemma 1]{Yu89} deals with two input strings in the language. We wish to take the latter approach in this exposition.

A pumping lemma generally asserts the existence of a pair of ``repeatable'' portions in a given string. Such a pair is known as an \emph{iterative pair} \cite{Boa73}. Given a language $L$ and a string $w$ of the form $uxvyz$ (called a \emph{factorization}), a pair $(x,y)$ is called an \emph{iterative pair}\footnote{The original notion in \cite{Boa73} does not include the case of $i=0$. This current notion appeared in Berstel's textbook \cite{Ber79} and it is also called ``strong iterative pair'' in \cite{ER85}. The notation $(x,y)$ used here, instead of $(u,x,v,y,z)$, is adopted from Section VIII.4 of Berstel's textbook \cite{Ber79}, in which a similar succinct notation was used to denote an iterative pair.}
of $w$ for $L$ if $|xy|\geq1$ and $ux^ivy^iz\in L$ for any nonnegative integer $i$.

We describe the first pumping lemma for $\dcfl[d]$ using the notion of iterative pairs.

\begin{lemma}{\rm [The First Pumping Lemma for {DCFL[$d$]}]}\label{pumping-lemma}
Let $d$ be any positive integer and let $L$ denote any infinite $d$-union dcf language. There exist a constant $c>0$ such that, for any $d+1$ strings $w_1,w_2,\ldots,w_{d+1}\in L$,
if $w_i$ has the form $xy^{(i)}$ with $|x|>c$ and $|y^{(i)}|\geq1$ for any index $i\in[d+1]$, then there exist two distinct indices $j_1,j_2\in[d+1]$ for which the following statement holds. For any strings $x'$, $y$, and $z$ with $|x'|>c$, if $x'y = xy^{(j_1)}$ and $x'z=xy^{(j_2)}$, then one of the following conditions (1)--(2) must hold.
\renewcommand{\labelitemi}{$\circ$}
\begin{enumerate}\vs{-2}
  \setlength{\topsep}{-2mm}%
  \setlength{\itemsep}{1mm}%
  \setlength{\parskip}{0cm}%

\item There is a factorization $x' = x_1x_2x_3x_4x_5$ with $|x_2x_4|\geq1$ and $|x_2x_3x_4|\leq c$ such that $(x_2,x_4)$ is an iterative pair of both $x'y$ and $x'z$ for $L$.

\item There are three factorizations $x' = x_1x_2x_3$, $y = y_1y_2y_3$, and $z = z_1z_2z_3$ with $|x_2|\geq1$ and $|x_2x_3|\leq c$ such that $(x_2,y_2)$ and $(x_2,z_2)$ are respectively iterative pairs of $x'y$ and $x'z$ for $L$.
\end{enumerate}
\end{lemma}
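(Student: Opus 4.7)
The plan is to reduce the claim to Yu's Pumping Lemma for $\dcfl$ via a pigeonhole argument over the components of the union. Since $L \in \dcfl[d]$, fix a decomposition $L = L_1 \cup L_2 \cup \cdots \cup L_d$ with each $L_i \in \dcfl$. For every infinite $L_i$, Yu's lemma supplies a constant $c_i$; let $c$ be the maximum of these $c_i$, inflated if necessary so that $c$ also exceeds the length of every string in any \emph{finite} component $L_i$. Given the $d+1$ input strings $w_1,\ldots,w_{d+1}$ of the form $xy^{(i)}$ with $|x|>c$ and $|y^{(i)}|\geq 1$, each $w_i$ is longer than $c$ and thus must lie in some infinite component $L_{\ell(i)}$; since $\ell:[d+1]\to[d]$, the pigeonhole principle yields distinct $j_1,j_2\in[d+1]$ with $\ell(j_1)=\ell(j_2)=k$. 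These indices serve as the required witnesses.

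Now fix any triple $x',y,z$ satisfying $|x'|>c$, $x'y=xy^{(j_1)}$, and $x'z=xy^{(j_2)}$. Both $x'y$ and $x'z$ sit inside the single dcf language $L_k$, so Yu's Pumping Lemma applies to this pair with common prefix $x'$ and returns one of its two structural alternatives: either a factorization $x'=x_1x_2x_3x_4x_5$ producing an iterative pair $(x_2,x_4)$ internal to $x'$, or simultaneous factorizations of $x'$, $y$, and $z$ producing linked iterative pairs $(x_2,y_2)$ and $(x_2,z_2)$ at the seam. These are exactly our conditions (1) and (2). Moreover, since $L_k \subseteq L$, pumping any iterative pair for $L_k$ keeps the string inside $L_k$ and therefore inside $L$, so the pairs transfer verbatim from $L_k$ to $L$ without modification.

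The main obstacle is the additional hypothesis $y[1]=z[1]$ built into Yu's lemma, which is not present in our statement. This condition is automatic whenever $x'$ falls strictly inside the longest common prefix of $xy^{(j_1)}$ and $xy^{(j_2)}$, since both strings already extend the fixed prefix $x$. The borderline case arises when $x'$ coincides with this longest common prefix, so that $y[1]\neq z[1]$. Here I would invoke Yu's lemma at the prefix $x''$ obtained by deleting the final symbol of $x'$ (after inflating $c$ by one, so that $|x''|>c$ is preserved), and then re-attach the dropped symbol to the non-pumped tail block in the resulting factorization, namely to $x_5$ in condition (1) or to $x_3$ in condition (2). Since the appended symbol lies outside both halves of the iterative pair, pumpability is unaffected, and the required factorization of $x'$ itself is recovered. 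The uniformity of $c$ across infinite and finite components, together with this borderline surgery, are the only pieces of bookkeeping needed to complete the argument.
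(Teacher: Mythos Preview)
Your pigeonhole reduction from $d+1$ strings to two strings lying in a single component $L_k$ is exactly the paper's ``general case $d\geq 2$'' step, so at that level your plan and the paper's coincide. The divergence is in how the single-language case is handled. The paper does \emph{not} invoke Yu's lemma as a black box; instead it spends the entire ``basis case $d=1$'' on a new machine-based argument (state--stack pairs of an $\varepsilon$-1dpda in ideal shape, analysis of turns, Lemmas~\ref{crossing-property} and~\ref{mutual-matching}), precisely because the present lemma must hold for arbitrary $x',y,z$ with $|x'|>c$, whereas Yu's statement carries the extra hypothesis $y[1]=z[1]$. In effect, the paper's $d=1$ case is a strict strengthening of Yu's lemma, and this strengthening is advertised as one of the contributions.

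You correctly isolate the $y[1]=z[1]$ obstacle, but your ``delete-the-last-symbol'' surgery has a gap in Yu's alternative~(2). After stripping the final symbol $a$ from $x'$ to form $x''$, the new suffixes are $y''=ay$ and $z''=az$, and Yu returns factorizations $y''=y''_1y''_2y''_3$ and $z''=z''_1z''_2z''_3$. Nothing in Yu's lemma forces $y''_1\neq\varepsilon$; when $y''_1=\varepsilon$ the prepended symbol $a$ is the first symbol of the \emph{pumped} block $y''_2$, not of the tail. Writing $y''_2=aw$, the pumped family is $x_1x_2^ix_3(aw)^iy''_3$, and this cannot in general be rewritten as $X_1X_2^i(X_3a)\,Y_1Y_2^iY_3$ with $Y_1Y_2Y_3=y=wy''_3$ (try $i=0$: the left side would contain the symbol $a$ after $X_3$, the right side $x_1x_3y''_3$ need not). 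Your assertion that ``the appended symbol lies outside both halves of the iterative pair'' is precisely the point at issue and is not guaranteed by Yu's conclusion. The same problem arises symmetrically for $z$. The paper sidesteps this entirely by proving the $d=1$ case from scratch without the $y[1]=z[1]$ hypothesis; if you want to keep your shorter route, you would need either a refinement of Yu's lemma that controls where $y_1$ sits, or a separate argument covering the $y''_1=\varepsilon$ sub-case.
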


As a special case of $d=1$ in Lemma \ref{pumping-lemma}, we instantly obtain Yu's pumping lemma for $\dcfl$ \cite[Lemma 1]{Yu89}, which was proven by a \emph{grammar-based argument}.
By sharp contrast, a few \emph{machine-based arguments} were employed in the past literature in order to prove generic properties of $\cfl$ and $\dcfl$. As a recent example, the \emph{swapping lemma for context-free languages}  was proven in \cite[Lemma 4.1]{Yam08}
and \cite[Corollary 4.2]{Yam16} by exploring various properties of one-way nondeterministic pushdown automata (or 1npda's).
Such machine-based arguments as well as various properties of machines that these arguments have brought in are not only interesting in its own right but also quite useful to analyze families of languages recognized by different types of machines, such as one-way depth-bounded storage automata \cite{Yam21b,Yam22,Yam25}, which  naturally expand 1npda's and 1dpda's but no grammar characterizations of them are currently known to exploit.

At first glance, Yu's grammar-based argument for his pumping lemma looks quite simple but in fact it heavily relies on his so-called \emph{left-part theorem for LR($k$) grammars} whose proof (not available for general public) seems quite lengthy (by measuring the size of its alternative proof given in Section \ref{sec:grammatical-tree}). In the end, a machine-based argument (presented in Section \ref{sec:lemma-verify}) looks no more complex than his grammar-based argument.
From this perspective, one of the important aspects of this exposition is a clear demonstration of how to explore new behavioral properties of 1dpda's in hopes of developing new analytical techniques that can be extendable to  other types of machines.
Here, we intend to restate Yu's lemma in a slightly stricter form using $n$-state 1dpda's as Lemma \ref{pumping-lemma-state}. The notion of an \emph{ideal shape form} in the lemma, which comes from \cite{Yam19,Yam21}, is explained in Section \ref{sec:ideal-shape}.

\begin{lemma}\label{pumping-lemma-state}
Let $M$ denote any $n$-state 1dpda in an ideal shape. For any two strings $xy$ and $xz$ in $L(M)$ with $|x|> 2^{6n^6}$, one of the following conditions (1)--(2) must hold.
\renewcommand{\labelitemi}{$\circ$}
\begin{enumerate}\vs{-2}
  \setlength{\topsep}{-2mm}%
  \setlength{\itemsep}{1mm}%
  \setlength{\parskip}{0cm}%

\item There is a factorization $x = x_1x_2x_3x_4x_5$ with $|x_2x_4|\geq1$ and $|x_2x_3x_4|\leq 2^{6n^6}$ such that $(x_2,x_4)$ is an iterative pair of both $xy$ and $xz$ for $L(M)$.

\item There are three factorizations $x = x_1x_2x_3$, $y = y_1y_2y_3$, and $z = z_1z_2z_3$ with $|x_2|\geq1$ and $|x_2x_3|\leq 2^{6n^6}$ such that $(x_2,y_2)$ and $(x_2,z_2)$ are respectively iterative pairs of $xy$ and $xz$ for $L(M)$.
\end{enumerate}
\end{lemma}


The proof of Lemma \ref{pumping-lemma-state} also exploits an early result of \cite{Yam19,Yam21} on an ideal shape form together with an approach with \emph{$\varepsilon$-enhanced machines} by analyzing transitions of \emph{state-stack pairs} in Section \ref{sec:state-stack}. These additional notions will be explained in Section \ref{sec:proof-pumping} and their basic properties will be explored therein.


Unfortunately, Lemma \ref{pumping-lemma} is so restrictive that it cannot be directly applied to the verification of, for example, Theorem \ref{two-langauge-proof}(2). We therefore need a more powerful pumping lemma to prove it.

An iterative pair $(x,y)$ is further said to be \emph{nondegenerate} if either $\{ux^ivy^jz\in L \mid j\geq0\}$ is finite for every $i\geq0$ or $\{ux^ivy^jz\in L \mid i\geq0\}$ is finite for every $j\geq0$. Otherwise, $(x,y)$ is said to be \emph{degenerate}. When either $x$ or $y$ is empty, the iterative pair $(x,y)$ is said to be \emph{empty}. Every nondegenerate iterative pair is obviously nonempty.
For instance, the language $L=\{ab^nca^mb\mid m,n\in\nat\}$ has an iterative pair $(b,a)$ but it is obviously degenerate.
The nondegenerate iterative pairs for dcf languages have been discussed in the past literature (e.g., \cite{Rub18,Sen90}).

The second pumping lemma for $\dcfl[d]$ is described as follows using nondegenerate iterative pairs.

\begin{lemma}{\rm [The Second Pumping Lemma for {DCFL[$d$]}]}\label{second-pumping-lemma}
Let $d$ be any positive integer and let $L$ denote any infinite $d$-union dcf language. There exists a constant $c>0$ that satisfies the following statements. For arbitrary $d+1$ strings $w_1,w_2,\ldots,w_{d+1}\in L$,
if $w_i$ has the form $xy^{(i)}$ with $|x|,|y^{(i)}|\geq1$ for any index $i\in[d+1]$, then
there exist two distinct indices $j_1,j_2\in[d+1]$ such that, for any three nonempty strings $x'$, $y$, and $z$, if $x'y = xy^{(j_1)}$ and $x'z=xy^{(j_2)}$ with $|x'|>c$, then one of the following conditions (1)--(5) holds.
\renewcommand{\labelitemi}{$\circ$}
\begin{enumerate}\vs{-2}
  \setlength{\topsep}{-2mm}%
  \setlength{\itemsep}{1mm}%
  \setlength{\parskip}{0cm}%

\item There exists a factorization $x'=x_1x_2x_3x_4x_5$ with $|x_2x_4|\geq1$ and $|x_2x_3x_4|\leq c$ such that $(x_2,x_4)$  is an iterative pair of $x'u$ of $L$ for any string $u\in\{y,z\}$.

\item There exist a string $u\in\{y,z\}$ and its factorization $u=u_1u_2u_3u_4u_5$ with $|u_2u_4|\geq1$ and $|u_2u_3u_4|\leq c$ such that $(u_2,u_4)$ is an iterative pair of $x'u$ of $L$.

\item There are four factorizations $x'=x_1x_2x_3 = x'_1x'_2x'_3$, $y=y_1y_2y_3$, and $z=z_1z_2z_3$ with $1\leq |x_2|,|x'_2|\leq c$ such that $(x_2,y_2)$ and $(x'_2,z_2)$ are respectively nondegenerate iterative pairs of $x'y$ and $x'z$ for $L$.

\item There are four factorizations $x'=x_1x_2x_3x_4 = x'_1x'_2x'_3x'_4$, $y=y_1y_2y_3$, and $z=z_1z_2z_3$ with $|x_4|,|x'_4|,|y_1|,|z_1|\geq1$ and $|x_2|,|x'_2|\leq c$ such that $(x_2,x_4y_1)$ and $(x'_2,x'_4z_1)$ are respectively nondegenerate iterative pairs of $x'y$ and $x'z$ for $L$.

\item There exist a string $u\in\{y,z\}$ and three factorizations $x'=x_1x_2x_3x_4x_5x_6x_7$, $u=u_1u_2u_3$, and $u^{(op)}=u'_1u'_2u'_3$ with $|x_7|\geq1$ and $|x_4x_5|\leq c$ such that (i) $(x_4x_5,x_7u_1)$ is a nondegenerate iterative pair of $x'u$ for $L$ and
    (ii) either $(x_2,u'_2)$ with $|x_2|\leq c$ or $(x_3x_4,u'_2)$ with $|x_3x_4|\leq c$ is a nondegenerate iterative pair of $x'u^{(op)}$ for $L$, where $u^{(op)}$ denotes a unique element in $\{y,z\}-\{u\}$.
\end{enumerate}
\end{lemma}

This second pumping lemma for $\dcfl[d]$ (Lemma \ref{second-pumping-lemma}) was originally proven by way of a machine-based argument in the preliminary conference version \cite{Yam20}. For the reader who are familiar with the grammar characteristics of dcf languages, we dare to employ a grammar-based argument for the proof of the lemma in Section \ref{sec:second-lemma-verify}.

\section{Proofs of Three Separation Claims}\label{sec:proof-separation}

The proofs of the two pumping lemmas for $\dcfl[d]$ (i.e., Lemmas \ref{pumping-lemma} and \ref{second-pumping-lemma}) require a lengthy argument and we postpone the proofs of the lemmas until
Sections \ref{sec:proof-pumping} and \ref{sec:second-lemma-verify}.
Meanwhile, we intend to concentrate on the three separation claims (Theorems \ref{two-langauge-proof} and \ref{Pal-DCFL} and Proposition \ref{extension-Hibbard}) announced in Section \ref{sec:main-contribution} and provide their detailed proofs throughout this section.
To understand these proofs better, we first demonstrate a simple and easy example of how to apply the first pumping lemma for $\dcfl[d]$ (Lemma \ref{pumping-lemma}) to obtain the desired separation between $\dcfl[d]$
and $\dcfl[d-1]$.


\begin{proposition}
Let $d\geq2$ and set $L_{(d)}=\{a^nb^{kn}\mid k\in[d],n\geq0\}$. It then follows that $L_{(d)}\in \dcfl[d] - \dcfl[d-1]$.
\end{proposition}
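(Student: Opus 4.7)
The plan has two parts. Membership $L_{(d)} \in \dcfl[d]$ is immediate: write $L_{(d)} = \bigcup_{k=1}^{d} A_k$ with $A_k = \{a^n b^{kn} \mid n \geq 0\}$, each of which is a dcf language (a 1dpda pushes $k$ stack markers per scanned $a$ and pops one per scanned $b$, accepting iff the stack is empty at $\dollar$). For the separation $L_{(d)} \notin \dcfl[d-1]$, I would argue by contradiction, invoking the First Pumping Lemma (Lemma \ref{pumping-lemma}) with its parameter specialised to $d-1$.

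Suppose $L_{(d)} \in \dcfl[d-1]$ and let $c>0$ be the constant granted by the lemma. Fix any $n > c$ and feed the lemma the $d$ witnesses $w_k = a^n b^{kn}$ ($k\in[d]$), written as $w_k = x y^{(k)}$ with common prefix $x = a^n$ and tails $y^{(k)} = b^{kn}$. The lemma returns two distinct indices $j_1,j_2 \in [d]$ such that every triple $(x',y,z)$ with $|x'|>c$, $x'y = w_{j_1}$, and $x'z = w_{j_2}$ must satisfy one of alternatives (1), (2). The key strategic choice is to take $x' = a^n b^c$, $y = b^{j_1 n - c}$, $z = b^{j_2 n - c}$ (rather than the naive $x' = a^n$); this padding forces the short factor $x_2 x_3$, whose length is $\leq c$, to sit entirely inside the $b^c$ tail of $x'$, so that $x_2 \in b^{+}$.

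Under this choice I would rule out both alternatives. In alternative (1), the pair $(x_2,x_4)$ lies inside a window of length $\leq c$ in $a^n b^c$. A short case split on whether the pair sits wholly in the $a$-block, wholly in the $b$-block, or straddles the boundary leaves only the spanning subcase as non-trivial: pumping $w_{j_1}$ forces the number of $b$'s contributed by $x_4$ to equal $j_1$ times the number of $a$'s contributed by $x_2$, while pumping $w_{j_2}$ forces the same quantity to equal $j_2$ times it. Since $j_1 \neq j_2$, both contributions must vanish, contradicting $|x_2 x_4| \geq 1$. The purely-$a$ and purely-$b$ subcases are killed immediately, because only one of the two symbol counts grows under pumping, so the ratio $b$-count$/a$-count exits the bounded set $[d]$ for large pumping exponent. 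In alternative (2), the containment $x_2 \sqsubseteq b^c$ means that pumping $(x_2,y_2)$ leaves the $a^n$ prefix intact and merely enlarges the $b$-count, producing strings of the form $a^n b^{j_1 n + (s+t)(i-1)}$ whose $b/a$ ratio surpasses $d$ for large $i$, contradicting membership in $L_{(d)}$.

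The main obstacle is recognising why the padded witness $x' = a^n b^c$ is needed. With the naive $x' = a^n$, alternative (2) admits the self-consistent factorization $(x_2,y_2,z_2) = (a^s, b^{j_1 s}, b^{j_2 s})$ that pumps legitimately inside $L_{(d)}$ and thus blocks any contradiction. Shifting the ``short window'' into the $b^c$ tail by padding is the conceptual move that simultaneously collapses alternative (2) and sharpens the slope mismatch in alternative (1); once this move is in place, the remaining case analysis is routine.
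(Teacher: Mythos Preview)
Your proposal is correct and follows essentially the same route as the paper: both invoke the First Pumping Lemma for $\dcfl[d-1]$ on the $d$ witnesses $w_k=a^nb^{kn}$, and both pad the common prefix into the $b$-block so that in alternative~(2) the constraint $|x_2x_3|\le c$ forces $x_2\in b^{+}$. The only cosmetic difference is the amount of padding: the paper takes $x'=a^nb^{j_1n-1}$ (leaving $y=b$ a single symbol), whereas you take the minimal $x'=a^nb^{c}$; either choice kills alternative~(2) for the same reason, and your case analysis for alternative~(1) is in fact slightly more complete than the paper's (you explicitly dispose of the all-$a$ and all-$b$ subcases, which the paper glosses over).
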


\begin{proof}
Let $d\geq2$. Since $L_{(d)}$ can be expressed as $\bigcup_{k\in[d]} L^{(k)}$ using the $d$ languages $L^{(k)}=\{a^nb^{kn}\mid n\geq0\}$ for any fixed number $k\in[d]$, $L_{(d)}$ clearly belongs to $\dcfl[d]$.
Toward the non-membership $L_{(d)}\notin \dcfl[d-1]$, we assume otherwise and
apply to $L_{(d)}$ the first pumping lemma for $\dcfl[d]$.
Take a pumping-lemma constant $c>0$ that satisfies the lemma. Let $n=c+1$ and consider $x=a^n$ and $y^{(i)}=b^{in}$ for each index $i\in[d]$. Since each string $w_i= a^nb^{in}$ belongs to $L_{(d)}$, there is a special index pair $j,k\in[d]$ with $j<k$ for which $w_j$ and $w_k$ satisfy the lemma.

Let us consider the condition (1) of the pumping lemma. Let $x'=a^nb^{jn-1}$, $y=b$, and $z=b^{(k-j)n+1}$. Firstly, let us focus on a factorization $x'=x_1x_2x_3x_4x_5$ with $|x_2x_4|\geq1$ and $|x_2x_3x_4|\leq c$. Since $x_1x_2^ix_3x_4^ix_5y\in L_{(d)}$ holds for any $i\in\nat$, it follows that $x_2\in\{a\}^*$ and $x_4\in\{b\}^*$.
Because of the definition of $L_{(d)}$, we also conclude that $x_2\neq\varepsilon$ and $x_4\neq\varepsilon$.
Since $|x_2x_3x_4|\leq c$, we can assume that $x_2=a^m$ and $x_4 = b^r$ for certain numbers $m,r\in[c]$.
Note that, for any $i\in\nat$,  $x_1x_2^ix_3x_4^ix_5y$ equals $a^{n+(i-1)m}b^{jn+(i-1)r}$.
Since this string is in $L_{(d)}$,  $jn+(i-1)r = g(n+(i-1)m)$ holds for an appropriately chosen number $g\in[d]$.
This implies that $(g-j)n = (r-mg)(i-1)$. We fix such a number $g$ so that this equality holds for infinitely many $i\in\nat$. We then obtain $g=j$ and $r=mg$. From $x_1x_2^ix_3x_4^ix_5z\in L_{(d)}$, we observe that, for each $i\in\nat$, $kn+(i-1)r = g'(n+(i-1)m)$ holds for a certain number $g'\in[d]$. We choose such a number $g'$ satisfying $(g'-k)n=(r-mg')(i-1)$ for infinitely many $i\in\nat$. It thus follows that   $g'=k$ and $r=mg'$. As a result, we obtain $r=mj=mk$, which implies $j=k$. Since $j\neq k$, we obtain a clear contradiction.

Secondly, we assume that the condition (1) fails. The condition (2) then provides three factorizations $x' =x_1x_2x_3$, $y=y_1y_2y_3$, and $z=z_1z_2z_3$ with $|x_2|\geq1$ and $|x_2x_3|\leq c$ such that $x_1x_2^ix_3 y_1y_2^i y_3\in L_{(d)}$ and $x_1x_2^ix_3 z_1z_2^i z_3\in L_{(d)}$ hold for any number $i\in\nat$. Since $|x_2x_3|\leq c$, we obtain $x_2\in\{b\}^{+}$. Assume that $x_2=b^{m}$ for a certain number  $m\in[c]$.
Since the condition (1) fails, we obtain $y_2\neq \varepsilon$, and thus $y_2=b$ follows. Since $x_1x_2^ix_3y_1y_2^iy_3$ has the form $a^nb^{jn+(i-1)(m+1)}$, it follows that, for any number $i\in\nat$, there exists a number $g\in[d]$ satisfying $jn+(i-1)(m+1) = gn$, which implies $(g-j)n=(i-1)(m+1)$. We fix a number $g$ satisfying this equality for infinitely many numbers $i$. However, this is impossible because the right-hand side of the equality has infinitely many values.
Therefore, $L_{(d)}$ is not in $\dcfl[d-1]$.
\end{proof}

Let us recall the languages $L_d^{(\leq)} =\{a_1^{n_1}\cdots a_d^{n_d}b_1^{m_1}\cdots b_d^{m_d}\mid \forall i\in[d](n_i\leq m_i)\}$ over the alphabet $\Sigma_d=\{a_1,a_2,\ldots,a_d, b_1,b_2,\ldots,b_d\}$ and $NPal^{\#}_d = \{w_1\# \cdots \# w_d \# v_1 \# \cdots \# v_d \mid \forall i\in[d] (w_i,v_i\in\{0,1\}^* \wedge v_i\neq w_i^R)\}$ over the alphabet $\{0,1,\#\}$. Theorem \ref{two-langauge-proof} asserts their  non-membership to $\dcfl(d-1)$.
Since our pumping lemmas are concerned with $\dcfl[d]$, we first need to take the complements of those languages, restricted to suitable regular languages, and we then apply the pumping lemmas to verify the theorem.

\ms
\n{\bf Theorem \ref{two-langauge-proof}.} (rephrased)
{\it Let $d\geq2$.
\renewcommand{\labelitemi}{$\circ$}
\begin{enumerate}\vs{-2}
  \setlength{\topsep}{-2mm}%
  \setlength{\itemsep}{1mm}%
  \setlength{\parskip}{0cm}%

\item The language $L_d^{(\leq)}$ is not in $\dcfl(d-1)$.

\item The language $NPal^{\#}_d$ is not in $\dcfl(d-1)$.
\end{enumerate}}

\begin{proof}
Let $d\geq2$ be any positive integer.

(1)
Our first target is the language $L^{(\leq)}_{d}$ over the alphabet $\Sigma_d$. For each index $i\in[d]$, we set $L^{(i)}$ to be $\{a_1^{n_1}\cdots a_d^{n_d}b_1^{m_1}\cdots b_d^{m_d}\mid 0\leq n_i \leq m_i\}$. It is clear that $L_d^{(\leq)}$ coincides with $\bigcap_{i\in[d]}L^{(i)}$, and therefore $L_d^{(\leq)}$ belongs to $\dcfl(d)$.
Our goal here is to verify the non-membership of $L_d^{(\leq)}$ to $\dcfl(d-1)$. To lead to a contradiction, let us assume that $L_d^{(\leq)} \in\dcfl(d-1)$.
Take the special regular language $A=a_1^*a_2^*\cdots a_d^*b_1^*b_2^*\cdots b_d^*$ and define $L'=A\cap (\Sigma_d^* - L_d^{(\leq)})$; in other words, $L'=\{a_1^{n_1}\cdots a_d^{n_d}b_1^{m_1}\cdots b_d^{m_d}\mid \exists i\in[d](n_i > m_i\geq0)\}$.
Note by Lemma \ref{hierarchy-properties}(2) that, since $L^{(\leq)}_d \in\dcfl(d-1)$, $L'$ belongs to $\dcfl[d-1]$.
This makes it possible to apply to this $L'$ the first pumping lemma for $\dcfl[d-1]$ (Lemma \ref{pumping-lemma}).

We take a pumping-lemma constant $c>0$ that satisfies the pumping lemma, and we then set  $n=c+1$. Let us focus on $d$ strings $xy^{(1)},xy^{(2)},\cdots, xy^{(d)}$, where
$x = a_1^{n}a_2^{2n} \cdots a_d^{dn}$ and $y^{(k)} = b_1^{n}b_2^{2n} \cdots b_{k-1}^{(k-1)n} b_{k}^{kn-1} b_{k+1}^{(k+1)n} \cdots b_d^{dn}$ for any number $k\in[d]$. In other words, $y^{(k)}$ is similar to $x$ except that the $k$th block $b_k^{kn-1}$ is different from $b_k^{kn}$.
The pumping lemma then guarantees the existence of a distinct pair $j_1,j_2\in [d+1]$ with $j_1<j_2$ satisfying
the lemma's conditions (1)--(2).
Let $x'= x b_1^{n}\cdots b_{j_1-1}^{(j_1-1)n} b_{j_1}^{j_1n-2}$, $y =  b_{j_1} b_{j_1+1}^{(j_1+1)n} \cdots b_d^{dn}$, and $z = b_{j_1+1}^{(j_1+1)n} \cdots b_{j_2-1}^{(j_2-1)n} b_{j_2}^{j_2n-1} b_{j_2+1}^{(j_2+1)n} \cdots b_d^{dn}$. Since $|x'|>c$, it then follows that $x'y = xy^{(j_1)}$ and $x'z = xy^{(j_2)}$.

Let us examine each condition of the pumping lemma.
Firstly, we consider the condition (1), in which there is a factorization $x'=x_1x_2x_3x_4x_5$ with $|x_2x_4|\geq1$ and $|x_2x_3x_4|\leq c$ satisfying $x_1x_2^ix_3x_4^ix_5u\in L'$ for any $u\in\{y,z\}$ and any number $i\in\nat$.
By the definition of $L'$, $x_2$ cannot contain a substring of the form $a_db_1$ as well as $a_ja_{j+1}$ for a certain index $j\in[d-1]$. A similar claim holds for $x_4$. Assuming $x_2\neq\varepsilon$, we take a number $k\in[d]$ for which $x_2\in\{a_k\}^{+}$. This implies that $x_4$ must contain $b_k$. However, this is impossible because of $|x_2x_3x_4|\leq c$.  The case of $x_4\neq\varepsilon$ is similarly handled.

Next, we consider the condition (2), in which there are three factorizations $x'= x_1x_2x_3$, $y=y_1y_2y_3$, and $z=z_1z_2z_3$ with $|x_2|\geq1$ and $|x_2x_3|\leq c$ such that $x_1x_2^ix_3y_1y_2^iy_3\in L'$ and $x_1x_2^ix_3z_1z_2^iz_3\in L'$ for any number $i\in\nat$.
Since $|x_2x_3|\leq c$, $x_2\in\{b_{j_1}\}^+$ follows.
Let $x_2=b_{j_1}^{e}$ with $e\geq1$ and set $i=3$.
Clearly, $x_1x_2^3x_3y_1y_2^3y_3$ has factors $a_{j_1}^{j_1n}$ and $b_{j_1}^{j_1n-2+3e}$, and thus we obtain $j_1n>j_1n+3e-2$, a contradiction.

(2)
We next discuss the non-membership of $NPal_d^{\#}$ to $\dcfl(d-1)$.
Let $A$ denote the regular language $\{w_1\# w_2\# \cdots \# w_{2d} \mid \forall i\in[2d](w_i\in\{0,1\}^*)\}$ and define $L' = A \cap (\{0,1,\#\}^* - NPal_d^{\#})$. Note that $L'$ equals $\{w_1\# \cdots \# w_d \# v_1\# \cdots \# v_d \mid \exists i\in[d](v_i=w_i^R)\}$. If $NPal_d^{\#}\in\dcfl(d-1)$, then Lemma \ref{hierarchy-properties}(2) implies that $L'\in\dcfl[d-1]$.
It therefore suffices to prove that $L'\notin\dcfl[d-1]$.

Assume to the contrary that $L'$ is in $\dcfl[d-1]$. We then apply to $L'$ the second pumping lemma for $\dcfl[d-1]$ (Lemma \ref{second-pumping-lemma}).
Let us choose a sufficiently large integer $n$ and take $d$ distinct strings $w_1,w_2,\ldots,w_d$ of length $n$. For each index $i\in[d]$, we set $v_i$ to be $w_i^Rs_i$, where $s_i$ is any string of length $n$.
We then define $x = w_1\#w_2\# \cdots \# w_d\#$ and $y^{(i)}= v_1\# v_2\# \cdots \# v_{i-1}\# w_i^R \# v_{i+1} \# \cdots \# v_d$ for any index $i\in[d]$.
Clearly, $xy^{(i)}$ is in $L'$ and $(w_i,w_i^R)$ is a nondegenerate iterative pair of $xy^{(i)}$ for $L'$.
The second pumping lemma provides two distinct indices $j_1,j_2\in[d]$ that satisfy one of the conditions (1)--(5) of the lemma.
Assuming $j_1<j_2$, we set $x'=w_1\# \cdots \# w_d\# v_1\#\cdots \# v_{j_1-1} \# w_{j_1}^R$, $y=\#v_{j_1+1}\# \cdots \# v_d$ and $z= s_{j_1}\#v_{j_1+1}\# \cdots \# v_{j_2-1}\# w_{j_2}^R \# v_{j_2+1}\# \cdots \# v_d$ so that $x'y = xy^{(j_1)}$ and $x'z = xy^{(j_2)}$ hold.

Obviously, $(w_{j_1},w_{j_1}^R)$ and $(w_{j_2},w_{j_2}^R)$ are respectively nondegenerate iterative pairs of $x'y$ and $x'z$ for $L'$.
Since there are no other iterative pairs of $x'y$ and $x'z$, the conditions (1)--(4) fail.
We then turn our attention to the condition (5). By assuming $u=y$, the condition (5) provides three factorizations $x'=x_1x_2x_3x_4x_5x_6x_7$, $y=y_1y_2y_3$, and $z=z_1z_2z_3$ such that (i) $(x_4x_5,x_7y_1)$ is a  nondegenerate iterative pair of $x'y$ for $L'$ and (ii) either $(x_2,z_2)$ or $(x_3x_4,z_2)$ is a nondegenerate iterative pair of $x'z$ for $L'$. It then follows that $x_4x_5=w_{j_1}$, $x_7=w_{j_1}^R$, and $y_1=\varepsilon$.
By the condition (5), it follows from (ii) that $z_2=w_{j_2}^R$ and either $x_2=w_{j_2}$ or $x_3x_4=w_{j_2}$. This contradicts the fact that $j_1<j_2$. The other case of $u=z$ is similarly handled.
\end{proof}


Recall that $MPal^{\#} =\{w_1\# w_2\# \cdots \# w_m \#^2  v_1\# v_2 \# \cdots \# v_n\mid m,n\geq1,
\exists i\in[\min\{m,n\}](|v_i|\neq |w_i| \vee v_i=w_i^R)\}$, which extends the language $L'$ defined in the proof of Theorem \ref{two-langauge-proof}(2), where $w_1,w_2,\ldots,w_m,v_1,v_2,\ldots,v_n \in\{0,1\}^*$.
Theorem \ref{Pal-DCFL} is rephrased in the following way.


\ms
\n{\bf Theorem \ref{Pal-DCFL}.} (rephrased)
\renewcommand{\labelitemi}{$\circ$}
\begin{enumerate}\vs{-2}
  \setlength{\topsep}{-2mm}%
  \setlength{\itemsep}{1mm}%
  \setlength{\parskip}{0cm}%
{\it
\item The language $Pal$ is in $\cfl\cap\co\cfl - \dcfl[\omega]$.

\item The language $MPal^{\#}$ is not in $\dcfl(\omega)\cup \dcfl[\omega]$. }
\end{enumerate}
As noted in Section \ref{sec:intersection-hierarchy}, the non-membership of $Pal$ to $\dcfl[\omega]$ was already proven in \cite[Example 2.1]{HH74} by applying their special iteration theorem for $\dcfl$ \cite[Theorem 2.3]{HH74}. Here, we wish to reprove the same claim by employing our own pumping lemma.

\ms
\begin{proof}
(1)
It is well-known that $Pal$ is in $\cfl$. Hence,
we wish to verify that $Pal$ is also in $\co\cfl$. Let $ODD=\{w\in\{0,1\}^*\mid |w| \text{ is odd }\}$.
Note that $\overline{Pal}$ coincides with $\{xy\mid |x|=|y|, x,y\in\{0,1\}^*,y\neq x^R\} \cup ODD$.
Consider the following 1npda. Given an input $w$,  choose either $0$ or $1$ nondeterministically. If $0$ is chosen, then we check whether $w\in ODD$. Otherwise, we split $w$ nondeterministically into $xy$, store $x$ into a stack, pop $x$ symbol by symbol while reading $y$, and check that $y=x^R$.
If $|x|\neq|y|$, then we reject $w$. Assume otherwise. If we discover any discrepancy between $y$ and $x^R$, then we accept $w$; otherwise, we reject it. It is obvious that this machine correctly recognizes  $\overline{Pal}$, and thus $Pal$ belongs to $\co\cfl$.

Next, we intend to verify that $Pal\notin \dcfl[d]$ for an arbitrary number  $d\geq2$ because this implies $Pal\notin\dcfl[\omega]$.
Toward a contradiction, we assume that $Pal\in\dcfl[d]$ for a certain index $d\geq2$. We apply  to $Pal$ the first pumping lemma for $\dcfl[d]$. Take a pumping-lemma constant $c>0$ guaranteed by the lemma and set $n=2c$.

Here, we call $0^n$ and $1^n$ a \emph{$0$-block} and a \emph{$1$-block}, respectively. For each index $k\in[d+1]$, we define $w_k = 0^n1^n0^n1^n \cdots 0^n$ ($4k+4d+3$ blocks). For example, $w_1=0^n1^n0^n$ and $w_2=0^n1^n0^n1^n0^n$.
There exists a pair $j_1,j_2\in[d+1]$ with $j_1<j_2$ satisfying the first pumping lemma.
Let $x'=0^n1^n0^n\cdots 0^n$ ($2j_2+2d+1$ blocks), $y=1^n0^n\cdots 0^n$ ($4j_1-2j_2+2d+2$ blocks), and $z=1^n0^n\cdots 0^n$ ($2j_2+2d+2$ blocks) so that $w_{j_1}=x'y$ and $w_{j_2}=x'z$.
Consider the condition (1) of the lemma. There exists a factorization $x'=x_1x_2x_3x_4x_5$ with $|x_2x_4|\geq1$ and $|x_2x_3x_4|\leq c$ such that $x_1x_2^ix_3x_4^ix_5y, x_1x_2^ix_3x_4^ix_5z\in Pal$ for any number $i\geq0$. Since $|x_2x_3x_4|\leq c$, $x_2x_3x_4$ must be a substring of $0^n$, $1^n$, $0^n1^n$, or $1^n0^n$.

Consider the case of $x_2x_3x_4\sqsubseteq 0^n$. Since $|x_2x_3x_4|\leq c$, $x_2x_3x_4$ must be included in the $(2j_1+2d+2)$th block. If we take $i=0$, then the size of this block is less than $n$. Since this block is situated in the left side of the center of $x'z$, $x_1x_3x_5z$ does not belong to $Pal$.
The other cases are similarly treated.

In the case where the condition (1) fails, let us consider three factorizations $x'=x_1x_2x_3$, $y=y_1y_2y_3$, and $z=z_1z_2z_3$ with $|x_2|\geq1$ and $|x_2x_3|\leq c$ satisfying $x_1x_2^ix_3y_1y_2^iy_3, x_1x_2^ix_3z_1z_2^iz_3\in Pal$ for any number $i\geq0$.  We conclude that $x_2$ is in the $(2j_2+2d+1)$th block of $x'$ because of $|x_2x_3|\leq c$. By taking $i=0$, the size of this block becomes less than $n$. Since this block is located in the right side of the center of $x'y$, we obtain $x_1x_3y_1y_3\notin Pal$, a contradiction.

(2)
We wish to show that $MPal^{\#}\notin \dcfl[\omega]$. Assume otherwise and choose a number $d\geq1$ satisfying that $MPal^{\#}\in \dcfl[d]$. We then restrict $MPal^{\#}$ and define $MPal^{\#}_{d+1}$ to be $\{w_1\# \cdots \# w_{d+1} \#  v_1\# \cdots \# v_{d+1} \mid
\exists i\in[d+1](|v_i|\neq |w_i| \vee v_i=w_i^R)\}$. An argument for the language $L'$ presented in the proof of Theorem \ref{two-langauge-proof}(2) can be modified appropriately to show that $MPal^{\#}_{d+1}\in \dcfl[d]$.

Finally, we want to prove that $MPal^{\#}\notin \dcfl(\omega)$. Toward a contradiction, we assume otherwise. For convenience, we define $FORM=\{x_1\# \cdots \# x_m \#^2 y_1\# y_2\# \cdots \# y_n\mid m,n\geq1,  x_1,\cdots, x_m,y_1,\ldots,y_n\in\{0,1\}^*\}$ and set  $NPal^{\#}_{=} = \{w_1\# \cdots \# w_m\#^2 v_1\# \cdots \# v_n\mid m,n\geq1,\forall i\in[\min\{m,n\}](|v_i|=|w_i|\wedge v_i\neq w_i^R)\}$, where  $x_1,\cdots x_m,y_1,\ldots,y_n\in\{0,1\}^*$.
Note that $NPal^{\#}_{=} = FORM\cap \overline{MPal^{\#}}$.
Since $FORM\in\reg$, by Lemma \ref{hierarchy-properties}(2), it follows from our assumption that $NPal^{\#}_{=} \in\dcfl[\omega]$.
We then take the minimum index $d\in\nat^{+}$ for which $NPal^{\#}_{=}$ belongs to $\dcfl[d]$.
As a special case of $NPal^{\#}_{=}$, we consider the language $L'' =
\{w_1\# \cdots \# w_{d+1}\#^2 v_1\# \cdots \# v_{d+1}\mid \forall i\in[d+1](|v_i|=|w_i|\wedge v_i\neq w_i^R)\}$.
Since $NPal^{\#}_{=}\in\dcfl[d]$ implies $L''\in\dcfl[d]$, $L''$ must be  written as the $d$-union $\bigcup_{e\in[d]} L_e$ for appropriate dcf languages $L_1,\ldots,L_d$.

Let $z$ denote any nonempty string of length $n$. Recall that $z[i]$ expresses the $i$th symbol of $z$ for any number $i\in[n]$. We introduce two notations. Let $z\pair{i}$ denote the string obtained from $z$ by flipping the $n$th bit of $z$. For instance, $0110\pair{1}=1110$, $0110\pair{2}=0010$, and $0110\pair{3}=0100$. Given two numbers $i,j\in[n]$ with $i< j$, we define $z\pair{i,j}$ by inductively setting  $z\pair{j,j}=z\pair{j}$ and $z\pair{i,j} = (z\pair{i+1,j})\pair{i}$. When $n$ is an even number, let $z^{(-)}_{\text{first}}$ (resp., $z^{(-)}_{\text{second}}$) denote the string obtained from $z$ by deleting the first half (resp., the second half) of symbols in $z$. Notice that $z= z^{(-)}_{\text{first}} z^{(-)}_{\text{second}}$.

From the second pumping lemma for $\dcfl[d]$ for $L''$, we take a pumping-lemma constant $c>0$. Choose a sufficiently large even number $n \geq 3c$ and  $d+1$ distinct strings $w_1,w_2,\ldots,w_{d+1}\in\{0,1\}^n$. For each index $i\in[d+1]$, we define $h_i$ to be the string of the form $w_1\# w_2\# \cdots \# w_{d+1} \#^2 v_1\# v_2\# \cdots \# v_{i-1} \# \hat{v}_i \# v_{i+1} \# \cdots \# v_{d+1}$, where $v_j=(w_j^R)\pair{n}$ and $\hat{v}_j = (w_j^R)\pair{n/2+1,n}$ for any $j\in[d+1]$.
These strings $h_1,h_2,\ldots,h_{d+1}$ obviously belong to $L''$. It follows that (*) for any index $i\in[d+1]$, any nondegenerate iterative pair $(u_1,u_2)$ of $h_{i}$ for $L''$ satisfies both $u_1\sqsubseteq w_i$ and $u_2\sqsubseteq v_i$ because of the requirement of $|v_i|=|w_i|$.

The second pumping lemma then ensures the existence of two indices $j_1,j_2\in[d+1]$ with $j_1<j_2$ such that $(h_{j_1},h_{j_2})$ satisfies one of the conditions (1)--(5) of the lemma.
We then define $x'=w_1\# \cdots w_{d+1}\#^2 v_1\# \cdots \# (v_{j_1})^{(-)}_{\text{first}}$, $y= (\hat{v}_{j_1})^{(-)}_{\text{second}} \# v_{j_1+1}\# \cdots \# v_{d+1}$, and $z= (v_{j_1})^{(-)}_{\text{second}} \# v_{j_1+1}\# \cdots \# v_{j_2-1}\# \hat{v}_{j_2} \# v_{j_2+1} \# \cdots \# v_{d+1}$. Note that $h_{j_1} = x'y$ and $h_{j_2}=x'z$. By the above statement (*), the conditions (1)--(4) do not hold. We thus look into the condition (5). A nondegenerate iterative pair $(x_4x_5,x_7u_1)$ of $x'y$ for $L''$ in the condition (5) must satisfy that $x_4x_5\sqsubseteq w_{j_1}$ and $x_7u_1\sqsubseteq \hat{v}_{j_1}$. Similarly, if a nondegenerate iterative pair $(x_3x_4,u'_2)$ of $x'z$ for $L''$ exists, then we obtain  $x_3x_4\sqsubseteq w_{j_2}$ and $u'_2\sqsubseteq \hat{v}_{j_2}$. However, these contradict the assumption of $j_1<j_2$. The case of $(x_2,u'_2)$ in the condition (5) is similarly treated.
Therefore, $L''$ is not in $\dcfl[d]$, as requested.
This completes the proof.
\end{proof}


Proposition \ref{extension-Hibbard} sharpens the hierarchy separation of Hibbard \cite{Hib67}, who proved $\klda{k}\neq \klda{(k+1)}$ for all  $k\in\nat^{+}$.

\ms
\n{\bf Proposition \ref{extension-Hibbard}.} (rephrased)
{\it For any $k\geq2$, $\klda{k}\cap \dcfl[2^{\floors{k/2}}] \nsubseteq \klda{(k-1)}\cup \dcfl[2^{\floors{k/2}}-1]$.}
\ms

\begin{proof}
In this proof, we intend to use a series of special languages, each of which is a slight modification of the language used in \cite[arXiv version]{Yam19}, introduced directly from Hibbard's \cite[Section 4]{Hib67} example language, and is shown to be in $\klda{k}$ but not in $\klda{(k-1)}$ for each $k\geq2$ \cite{Hib67,Yam19}. We call this new language by $L_k$.

Let $k\geq2$ be any integer. For each fixed index $i\in[k]$, we write $w_i$ for an arbitrary string of the form $a^{n_i}b^{m_i}c^{p_i}$ for any numbers $n_i,m_i,p_i\in\nat$.
The language  $L_k$ is composed of all strings $w$ of the form $w_2\#w_4\#\cdots \# w_{k-2} \# w_{k}\# w_{k-1} \# \cdots \#w_5\#w_3\# w_1$ if $k$ is even, and $w_2\#w_4\#\cdots \# w_{k-1} \# w_{k}\# w_{k-2} \# \cdots \#w_5\#w_3\# w_1$ if $k$ is odd, together with the following conditions (a)--(e).

\begin{enumerate}\vs{-2}
  \setlength{\topsep}{-2mm}%
  \setlength{\itemsep}{1mm}%
  \setlength{\parskip}{0cm}%

\item[(a)] For any $j\in[2,k]_{\integer}$, either $n_j\leq m_j\leq n_j+1$ or $m_j\leq p_j\leq m_j+1$ holds.

\item[(b)] It holds that $n_1\leq m_1 \leq n_1+1$.

\item[(c)] If $n_2\leq m_2\leq n_2+1$, then $n_1=m_1$ holds. If $m_2\leq p_2\leq m_2+1$, then $m_1=n_1+1$ holds.

\item[(d)] Let $j$ denote any index in $[3,k]_{\integer}$. If $n_j\leq m_j\leq n_j+1$, then either $n_{j-1}=m_{j-1}$ or $m_{j-1}=p_{j-1}$ holds. If $m_j\leq p_j\leq m_j+1$, then either $m_{j-1}= n_{j-1}+1$ or $p_{j-1} = m_{j-1}+1$ holds.

\item[(e)] Either $n_k=m_k$ or $m_k=p_k$ holds.
\end{enumerate}\vs{-2}

For instance, $L_2$ equals $\{a^{n_2}b^{m_2}c^{p_2}\# a^{n_1}b^{m_1}c^{p_1}\mid (n_1=m_1\wedge n_2=m_2) \text{ or } (m_1=n_1+1\wedge m_2=p_2)\}$. For convenience, we also define another language $L'_k$ obtained from $L_k$ by replacing the last term $w_1= a^{n_1}b^{m_1}c^{p_1}$ of $L_k$ with $a^{p_1}b^{m_1}c^{n_1}$ but we treat $(p_1,m_1,n_1)$ as $(n_1,m_1,p_1)$ in (a)--(e).

Following \cite[arXiv version]{Yam19}, $L_k$ can be proven to belong to $\klda{k}$ but it is not in $\klda{(k-1)}$. It thus suffices to show by induction on $k$ that $L_k\in\dcfl[2^{\floors{k/2}}]$ and $L_k\notin \dcfl[2^{\floors{k/2}}-1]$. To do so, we want to take the following two steps (1)--(2). For readability, we assume in (1)--(2) that $n_i$, $m_i$, and $p_i$
express numbers in $\nat$ for all $i$'s.

(1) As the first step, we prove the membership of $L_k$ to $\dcfl[2^{\floors{k/2}}]$.

(i) When $k=2$, we define $A_{(1)}=\{a^{n_2}b^{m_2}c^{p_2}\# a^{n_1}b^{m_1}c^{p_1} \mid n_1=m_1, n_2=m_2\}$ and $A_{(2)}=\{a^{n_2}b^{m_2}c^{p_2}\# a^{n_1}b^{m_1}c^{p_1} \mid m_1=n_1+1, m_2=p_2\}$. Clearly,  $L_2$ equals $A_{(1)}\cup A_{(2)}$, and thus $L_2$ belongs to $\dcfl[2]$. Similarly, we can prove that $L'_2$ is in $\dcfl[2]$.

(ii) In the case of $k=3$, we set $A_{(1)}=\{w \mid n_1=m_1, (n_2=m_2\wedge n_3=m_3) \text{ or } (m_2=n_2+1\wedge m_3=p_3)\}$ and $A_{(2)} =\{w\mid m_1=n_1+1, (m_2=p_2\wedge n_3=m_3) \text{ or } (p_2=m_2+1\wedge m_3=p_3)\}$. It follows that $L_3 = A_{(1)}\cup A_{(2)}\in \dcfl[2]$. A similar argument proves that $L'_3$ is in $\dcfl[2]$.

(iii) Consider the case where $k$ is an even number with $k\geq4$. Given a string $w$ of the form $w_2\# w_4 \# \cdots \# w_{k-2}\# w_k \# w_{k-1}\# \cdots \# w_3\# w_1$, we succinctly write $\tilde{w}^{(-)}$ for the string $w_4 \# \cdots \# w_{k-2}\# w_k \# w_{k-1}\# \cdots \# w_3$ so that $w$ equals $w_2 \# \tilde{w}^{(-)} \# w_1$.
Let $A_{(1)} = \{w\mid (n_1=m_1\wedge n_2=m_2 \wedge \tilde{w}^{(-)}\in L_{k-2}) \text{ or } (n_1=m_1 \wedge m_2=n_2+1\wedge \tilde{w}^{(-)}\in L'_{k-2}) \}$ and
$A_{(2)} = \{w\mid (m_1=n_1+1\wedge m_2=p_2 \wedge \tilde{w}^{(-)}\in L_{k-2}) \text{ or } (m_1= n_1+1\wedge p_2=m_2+1\wedge \tilde{w}^{(-)}\in L'_{k-2})\}$,  where $w$ is of the above-mentioned form.
By induction hypothesis, $L_{k-2},L'_{k-2}\in\dcfl[2^{\floors{(k-2)/2}}]$ implies that $A_{(j)}\in \dcfl[2^{\floors{(k-2)/2}}]$ for every index  $j\in[2]$. Since $L_k = A_{(1)}\cup A_{(2)}$, it instantly follows that $L_k\in\dcfl[2\cdot 2^{\floors{(k-2)/2}}]=\dcfl[2^{\floors{k/2}}]$.
A similar argument proves that $L'_k\in \dcfl[2^{\floors{k/2}}]$.

(iv) The case of odd $k\geq3$ is treated similarly to (iii).

(2)
As the second step, we prove that $L_k\notin \dcfl[2^{\floors{k/2}}-1]$. Assuming to the contrary that $L_k\in\dcfl[2^{\floors{k/2}}-1]$, we apply the pumping lemmas for $\dcfl[2^{\floors{k/2}}-1]$ to $L_k$.

As for the base case of $k=2$, the first pumping lemma for $\dcfl$ easily leads to $L_2\notin \dcfl$, from which we obtain $L_2\notin \dcfl[1]$.
Next, we focus on the case of $k\geq3$. Toward a contradiction, we assume that $L_k\in\dcfl[2^{\floors{k/2}}-1]$. In the case where $k$ is even, we first define the index set $I_k=\{1,2,3,4\}^{k-1}\times \{1,2\}$.
We then take $k$ sufficiently large (and thus larger than $c$) positive integers $n_1,n_2,\ldots,n_k$, which  are all different from each other. For each index $j\in[k]$, we set $w_j^{(1)} = a^{n_j}b^{n_j}c^{n_j-1}$, $w_j^{(2)}=a^{n_j}b^{n_j+1}c^{n_j}$, $w_j^{(3)} = a^{n_j}b^{n_j-1}c^{n_j-1}$, and $w_j^{(4)} = a^{n_j}b^{n_j-1}c^{n_j}$.
For a given string $s$ in $I_k$ of the form $s_2s_4\cdots s_{k-2}s_ks_{k-1}\cdots s_3s_1$, we
define $\alpha_s = w_2^{(s_2)} \# w_4^{(s_4)}\# \cdots \# w_{k-2}^{(s_{k-2})} \# w_{k}^{(s_{k})} \# w_{k-1}^{(s_{k-1})} \# \cdots \# w_3^{(s_3)}\# w_1^{(s_1)}$.
In what follows, we concentrate on such a series $s$ that makes $\alpha_s$ belong to $L_k$.
For the language $L_k$, let us recall the $2^{\floors{k/2}}$ dcf languages, say, $A_1,A_2,\ldots, A_{2^{\floors{k/2}}}$, which have been constructed recursively by (i)--(iv). For each $A_i$, we choose the smallest $s$ from $I_k$ such that $\alpha_s$ falls in $A_i$, where $s$ is viewed as an integer vector of dimension $k$. We then form a set of those minimum vectors $s$ and call it by $A_{vec}$. For instance, when $k=3$, $A_{vec}$ equals $\{111,312\}$; when $k=4$, $A_{vec}$ is $\{1111,1321,3112,3322\}$; when $k=5$, $A_{vec}$ matches $\{11111,13121,31112,33122\}$.

The second pumping lemma for $\dcfl[2^{\floors{k/2}}-1]$ provides a constant $c>0$ and a distinct pair $s,s'\in A_{vec}$.
Let $s=s_2s_4\cdots s_{k-2}s_ks_{k-1}\cdots s_3s_1$ and $s'=s'_2s'_4\cdots s'_{k-2}s'_ks'_{k-1}\cdots s'_3s'_1$.
We take the minimum even number $j$ satisfying $s_j\neq s'_j$.
Without loss of generality, we assume that $s_j<s'_j$.
Let us consider two corresponding strings $\alpha_s$ and $\alpha_{s'}$.
For any even number $j\in[k-2]$, we define $\alpha_s^{(j)} = w_2^{(s_2)} \# w_4^{(s_4)}\# \cdots \# w_{j}^{(s_j)}\#$ and $\beta_s^{(j)} =  w_{j+2}^{(s_{j+2})}\# \cdots \# w_3^{(s_3)} \#w_1^{(s_1)}$. When $j=k$, we instead set $\beta_s^{(j)} = w_{j-1}^{(s_{j-1})}\# \cdots \# w_3^{(s_3)}\#w_1^{(s_1)}$. Notice that $\alpha_s= \alpha_s^{(j)}\beta_s^{(j)}$ holds.

Our proof strategy is that, by choosing $\hat{x}$, $\hat{y}$, and $\hat{z}$ properly, we define    $x'=\alpha_s^{(j-2)}\#\hat{x}$, $y=\hat{y}\#\beta_s^{(j)}$, and $z=\hat{z}\#\beta_s^{(j)}$ to form $\alpha_s=x'y$ and $\alpha_{s'}=x'z$. We then apply the conditions (1)--(5) of the second pumping lemma to draw a contradiction.\footnote{For this purpose, we actually need to slightly modify the original lemma so that the conditions (3)--(4) always hold among all degenerate iterative pairs satisfying the conditions without any length requirement on $x_2$ and $x'_2$.}

There are four cases to consider separately.
(i$'$) If $(s_j,s'_j)=(1,3)$, then we set $\hat{x}=a^{n_j}b^{n_j-1}$, $\hat{y}=bc^{n_j-1}$, and $\hat{z}=c^{n_j-1}$.
(ii$'$) If $(s_j,s'_j)=(1,4)$, then we set $\hat{x}=a^{n_j}b^{n_j-1}$, $\hat{y}=bc^{n_j-1}$, and $\hat{z}=c^{n_j}$.
(iii$'$) If $(s_j,s'_j)=(2,3)$, then we set $\hat{x}=a^{n_j}b^{n_j-1}$, $\hat{y}=b^2c^{n_j}$, and $\hat{z}=c^{n_j-1}$.
(iv$'$) If $(s_j,s'_j)=(2,4)$, then we set $\hat{x}=a^{n_j}b^{n_j-1}$, $\hat{y}=b^2c^{n_j}$, and $\hat{z}=c^{n_j}$.

Finally, let us discuss the validity of the case (i$'$). Consider any  degenerate iterative pair $(x'_2,z_2)$ of $x'z$ for $L_k$, where $x'_2$ is in $\hat{x}$ with $|x'_2|\leq c$ and $z_2$ is in $\hat{z}$.
However, there is no degenerate iterative pair for $x'y$ in $\hat{x}\hat{y}$, a contradiction with the condition (4).
For the other cases, we can use a similar argument to draw a contradiction. 

This completes the proof of the proposition.
\end{proof}


\section{Proof of the First Pumping Lemma for DCFL[$d$]}\label{sec:proof-pumping}

Throughout this section, we will provide the proof of the first pumping lemma for $\dcfl[d]$ (Lemma \ref{pumping-lemma}).
As a caviar of this exposition, we will employ  a novel machine-based argument to prove Lemma \ref{pumping-lemma-state}, which leads to the basis case of  $d=1$ in Lemma \ref{pumping-lemma}.
This also provides an alternative proof to Yu's pumping lemma.


In Section \ref{sec:state-stack}--\ref{sec:mutual-correlation}, we will explain fundamental notions needed for the proofs.
These notions loosely come from \cite{Yam08,Yam16}. The  proof of Lemma \ref{pumping-lemma-state} will be given in Section \ref{sec:lemma-verify} and the proof of Lemma \ref{pumping-lemma} will be provided in Section \ref{sec:first-lemma-proof}.

\subsection{Boundaries, Turns, and $\varepsilon$-Enhanced Strings}\label{sec:state-stack}

Let us begin with two elementary notions of \emph{boundaries} and \emph{boundary blocks}, which are necessary to introduce other important notions.
For this purpose, we visualize a \emph{single move} of a 1dpda $M$ as a series of three consecutive actions: (i) firstly, replacing a topmost stack symbol, (ii) secondly, updating an inner state, and (iii) thirdly, either moving a tape head to the right or making it stay still (in the case of an  $\varepsilon$-move).

A \emph{boundary} is a borderline between two adjacent tape cells (except for the case of the leftmost tape cell). We index all such boundaries from $0$ to $|\cent x\dollar|$ as follows. The boundary $0$ is located at the left borderline of cell $0$ and boundary $i+1$ is located  between  cell $i$ and cell $i+1$ for each number $i\in\nat$.
We often use both ``cell indices'' and ``boundaries'' to specify an area of the input tape interchangeably.
When a string $xy$ is written in $|xy|$ consecutive cells, the \emph{$(x,y)$-boundary} refers to the boundary $|x|+1$, which separates between $x$ and $y$.
A \emph{boundary block} between two boundaries $t_1$ and $t_2$ with $t_1<t_2$ is a consecutive series of boundaries between $t_1$ and $t_2$ (including $t_1$ and $t_2$). These numbers $t_1$ and $t_2$ are called the \emph{ends} of this boundary block. For brevity, we write $[t_1,t_2]$ to denote such a boundary block between $t_1$ and $t_2$.
For each boundary block $[t_1,t_2]$, its \emph{fringe} is either the boundary $t_1-1$ or the boundary $t_2+1$ if they actually exist.
The \emph{$(t_1,t_2)$-region} indicates all the consecutive cells located in the boundary block $[t_1,t_2]$.  When an input string $x$ is written in this $(t_1,t_2)$-region, we conveniently call the region the \emph{$x$-region} as long as $t_1$ and $t_2$ are clear from
the context. On the contrary, $x$ is called the \emph{$(t_1,t_2)$-regional string}.


\begin{figure}[t]
\centering
\includegraphics*[height=4.8cm]{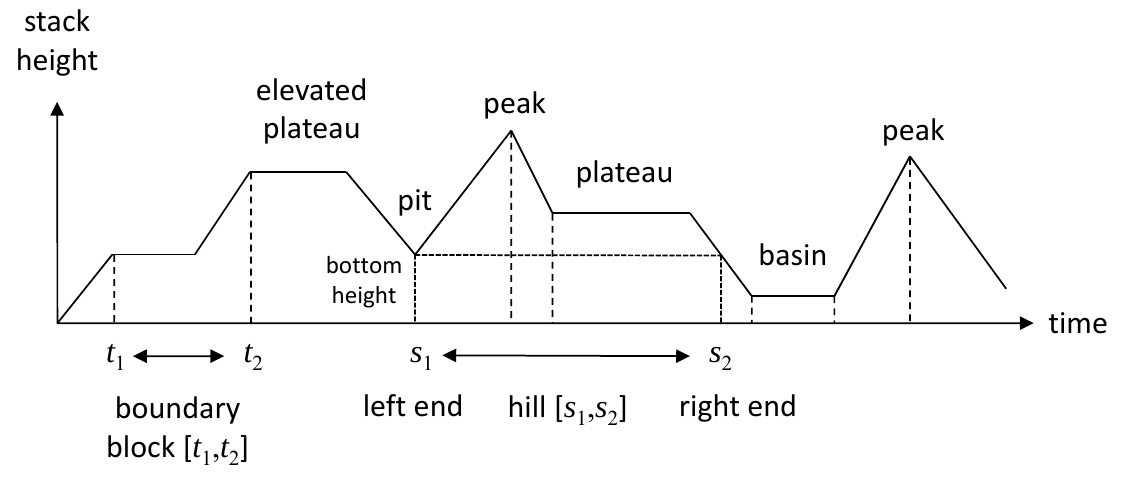}
\caption{A history of the changes of stack heights.}\label{fig:stack-height}
\end{figure}


The \emph{stack height of $M$ at boundary $t$} is the length of the stack content while the tap head is passing though this boundary $t$. Recall that the stack operation must be performed before the tape head move.
Given $t_1,t_2\in\nat$ with $t_1<t_2$,
the boundary block $[t_1,t_2]$ is \emph{flat} if the stack height does not change in the $(t_1,t_2)$-region. A boundary block $[t_1,t_2]$ is called \emph{convex} if there is a boundary $s$ between $t_1$ and $t_2$ (namely, $s\in[t_1,t_2]$) such that there is no pop operation in the $(t_1,s)$-region and there is no push operation in the $(s,t_2)$-region.
A boundary block $[t_1,t_2]$ is \emph{pseudo-convex} if the stack height at any boundary $s\in[t_1,t_2]$ does not go below $h_1 + \frac{h_2-h_1}{t_2-t_1}(s-t_1)$, where $h_i$ is the stack height at the boundary $t_i$ for any index $i\in\{1,2\}$. By their definitions, either convex or flat boundary blocks are pseudo-convex.

A \emph{peak} (resp., a \emph{pit}) is a boundary $t$ for which the stack heights at the boundaries $t-1$ and $t+1$ are smaller (resp., greater) than the stack height at the boundary $t$. A \emph{plateau} is a flat boundary block and the stack height of this boundary block is called the \emph{height} of the plateau.
A \emph{basin} (resp., an \emph{elevated plateau}) $[t,t']$ is a plateau  for which the stack heights at all fringes of this plateau are greater (resp., smaller) than the height of the plateau whenever the fringes exist.
A \emph{hill} is a boundary block $[t,t']$ such that (i) the stack height at the boundary $t$ and the stack height at the boundary $t'$ coincide, (ii) there exists either a peak or an elevated plateau, but not both, in $[t,t']$, (iii) there is no pit or basin in $[t,t']$, and (iv) one of the ends of this block is either a pit or the left edge of a basin. The lowest stack height of the hill is particularly called the \emph{bottom height}.
The \emph{height of the hill} is the difference between its bottom height and its highest stack height.
We briefly call either a peak or an elevated plateau existing in a hill by a \emph{hill top}.
Those concepts are illustrated in
Fig.~\ref{fig:stack-height}.

An \emph{upward slope} means a non-flat boundary block in which the stack height never decreases. In contrast, a \emph{downward slope} is a non-flat boundary block in which no increase of the stack height occurs. A \emph{slope} refers to either an upward slope or a downward slope.

A \emph{turn} is intuitively a change of stack height from ``nondecreasing'' to ``decreasing'' discussed in 1966 by Ginsburg and Spanier  \cite{GS66} (who actually defined ``turn'' in a slightly different way). Here, we
describe this notion in terms of boundary blocks and use it to partition an entire stack history.
A \emph{turning point} is either a peak or the right edge of an elevated plateau.
A \emph{turn} refers to a boundary block $[t_1,t_2]$ in which there is exactly one turning point  such that the left fringe (if any) is a part of a downward slope and the right fringe is a part of either an upward slope or a basin.
The \emph{height of a turn} is the difference between the maximum stack height and the minimum stack height in the turn. The minimum stack height is observed at either end of the turn and is also called the \emph{bottom height} of the turn.

Given strings over an alphabet $\Sigma$, \emph{$\varepsilon$-enhanced strings} are strings over the extended alphabet $\Sigma_{\varepsilon}$ ($= \Sigma\cup\{\varepsilon\}$). Here, the notation $\varepsilon$ is treated as a special input symbol expressing the ``absence'' of symbols, not the usual ``empty string.''
For such an $\varepsilon$-enhanced string $x$, we keep the notation $|x|$ to denote the total number of non-$\varepsilon$-symbols in $x$; on the contrary, we use the new notation $|x|_{\varepsilon}$ to denote the length of $x$ by counting $\varepsilon$ as an independent  symbol. For instance, if $x=001\varepsilon 0110\varepsilon10$, then we obtain $|x|=9$ and $|x|_{\varepsilon}=11$.
An \emph{$\varepsilon$-enhanced 1dpda} (or an $\varepsilon$-1dpa, for short) is a 1dpda that takes $\varepsilon$-enhanced input strings and works as a standard 1dpda except that a tape head always moves to the right without stopping by treating $\varepsilon$ as an actual symbol. This tape head movement is sometimes called ``real time.''

\begin{lemma}\label{enhanced-lemma}
For any $n$-state 1dpda $M$ with stack alphabet size $m$ and push size $e$, there exists an $\varepsilon$-1dpda $N$ with the same $n,m,e$ such that, for any input string $x$, there is an appropriate $\varepsilon$-enhanced string $\hat{x}$ for which $M$ accepts (resp., rejects) $x$ iff $N$ accepts (resp., rejects) $\hat{x}$, where $x$ is obtainable from $\hat{x}$ by deleting all $\varepsilon$ in $\hat{x}$. Moreover,
$\hat{x}$ is uniquely determined from $x$ and $M$. In this case, $\hat{x}$ is said to be \emph{induced} from $x$ by $M$. In addition, if $M$ is in an ideal shape, then $N$ can be made to be in an ideal shape as well.
\end{lemma}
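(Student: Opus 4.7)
The plan is to reinterpret every $\varepsilon$-move of the 1dpda $M$ as a read-move on a new input letter $\varepsilon$ that belongs to the enhanced alphabet. First I would define $N$ to share the state set $Q$, stack alphabet $\Gamma$, bottom marker $Z_0$, and accept/reject partitions with $M$, and to take inputs over $\check{\Sigma}_{\varepsilon}$ viewed as a genuine input alphabet (here $\varepsilon$ is no longer the empty string but a dedicated symbol). The transition function of $N$ is obtained by ``flattening'' that of $M$: for every non-$\varepsilon$ move $\delta_M(q,\sigma,a)=(p,w)$ set $\delta_N(q,\sigma,a)=(p,w)$, and for every $\varepsilon$-move $\delta_M(q,\varepsilon,a)=(p,w)$ set $\delta_N(q,\varepsilon,a)=(p,w)$, in each case moving the head right by one cell of the enhanced input. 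Because $M$ is deterministic, this assignment is well-defined and $N$ operates in real time as required of an $\varepsilon$-1dpda.

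Next I would construct $\hat{x}$ from $x$ by recording the sequence of symbols scanned along the unique computation of $M$ on $\cent x\dollar$. Since $M$ is deterministic and halts on every input, this sequence is finite and determined entirely by the pair $(x,M)$; inserting one $\varepsilon$ at each step where $M$ makes an $\varepsilon$-move and keeping the actual tape symbols otherwise yields a unique $\hat{x}\in\check{\Sigma}_{\varepsilon}^{*}$. A short induction on the step count then shows that $N$ on $\hat{x}$ passes through exactly the same sequence of inner states and stack contents as $M$ on $x$, so the two machines agree on acceptance and rejection. For the final clause I would apply the Ideal Shape Lemma (Lemma~\ref{ideal-shape}) directly to $N$, treated as a 1ppda with zero error probability, obtaining an error-equivalent $N'$ in ideal shape.

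The main obstacle I anticipate lies in reconciling the ideal-shape conversion with the real-time constraint. The construction of Lemma~\ref{ideal-shape} can replace a single compound move of $N$ by a short deterministic sequence of micro-moves (stationary, single-symbol push, and $\varepsilon$-pops), so the induced string for $N'$ must contain extra occurrences of $\varepsilon$ compared to the na\"ive construction above. I would verify that these micro-moves can be aligned so that $N'$ still advances its head by exactly one cell per step on the suitably padded enhanced input, and that the resulting $\hat{x}$ remains uniquely determined by $x$ and $M$. Determinism of both the ideal-shape decomposition and of $M$ itself makes this alignment canonical, but it does require a careful accounting of where the additional $\varepsilon$'s are inserted.
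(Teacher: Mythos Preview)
Your core construction---reinterpreting each $\varepsilon$-move of $M$ as a genuine read of the letter $\varepsilon$, and defining $\hat{x}$ as the trace of symbols scanned in $M$'s unique run on $x$---is exactly what the paper does. The difference lies in the order of operations, and the paper's order dissolves the obstacle you anticipate at the end.

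The paper first applies Lemma~\ref{ideal-shape} to $M$ itself, so that $M$ is already in ideal shape before $N$ is built. Once $M$ is in ideal shape, every move of $M$ (including every $\varepsilon$-move) either pushes one symbol, pops one symbol, or leaves the stack unchanged. The $\varepsilon$-1dpda $N$ then inherits this property verbatim: each of its real-time steps performs a single push, a single pop, or a stationary operation, which is precisely what ``ideal shape'' means for a machine with no non-reading moves. No second application of the Ideal Shape Lemma is needed, and no extra padding of $\hat{x}$ arises.

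By contrast, your plan builds $N$ first from an arbitrary $M$ and only then invokes Lemma~\ref{ideal-shape} on $N$. That conversion may reintroduce genuine $\varepsilon$-moves (type~(4) pops that do not consume input), so the output $N'$ need not be real-time, and you are forced into the delicate re-padding you describe. This is not wrong, but it is an avoidable complication: simply swap the order and convert $M$ before constructing $N$.
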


\begin{proof}
Let $M$ denote any $n$-state 1dpda with stack alphabet size $m$ and push size $e$. The desired machine $N$ is defined as follows. On any $\varepsilon$-enhanced input string, whenever $M$ makes a non-$\varepsilon$-move, $N$ simulates this single step.
To simulate $M$'s $\varepsilon$-move, if $N$ can read a symbol $\varepsilon$ on an input tape, then $N$ simulates $M$'s move and then moves its tape head to the right; otherwise, $N$ instantly enters a rejecting state. This forces the input string to match $M$'s $\varepsilon$-moves and thus ensures its uniqueness. We then define an $\varepsilon$-enhanced string so that $N$ reads $\varepsilon$ exactly when $M$ makes an $\varepsilon$-move.
If $M$ is in an ideal shape, then the above transformation of $M$ together with an associated $\varepsilon$-enhanced string makes $N$ be in an ideal shape.
\end{proof}


\begin{figure}[t]
\centering
\includegraphics*[height=4.3cm]{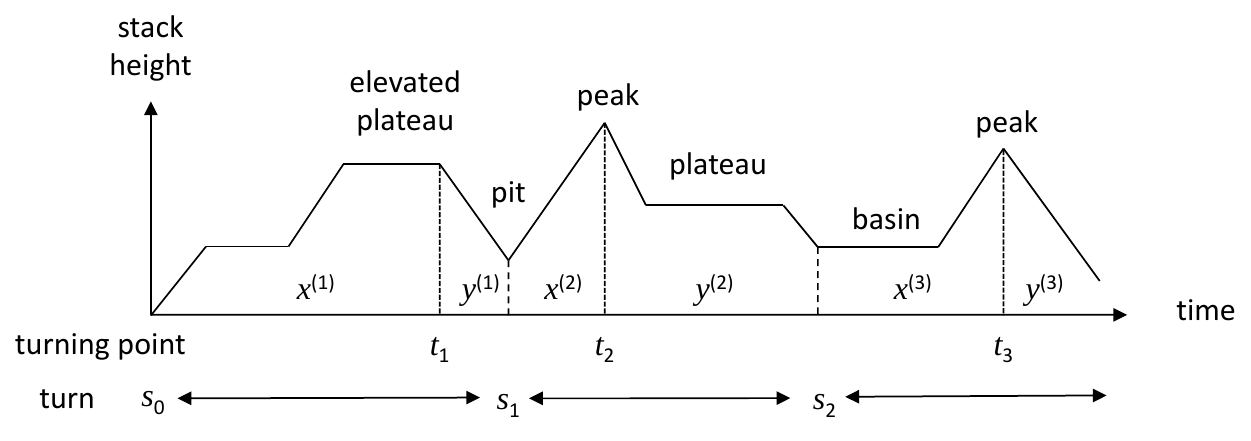}
\caption{A turn partitioning of a stack history. Each $x^{(i)}$ is the left area of a turn and $y^{(i)}$ is the right area of the same  turn.}\label{fig:turn-partition}
\end{figure}


The ideal shape condition of a 1dpda or an $\varepsilon$-1dpda makes it  possible to partition an entire stack history into consecutive turns, as shown in Fig.~\ref{fig:turn-partition}.  We call such a partition a \emph{turn partition}.

\subsection{State-Stack Pairs and Mutual Correlations}\label{sec:mutual-correlation}

We next introduce another important notion of state-stack pairs.
Let $M$ denote either a 1dpda or an $\varepsilon$-1dpda with a set $Q$ of inner states and a stack alphabet $\Gamma$, and assume that $M$ is in an ideal shape. A \emph{state-stack pair} at boundary $i$ is a pair $(q,\gamma)$ of inner state $q$ and stack content $\gamma$ (which is expressed as $a_1a_2\cdots a_k$ from the top to the bottom of the stack with $a_k=Z_0$),
namely, $(q,\gamma)$ in $Q\times (\Gamma^{(-)})^*Z_0$.
In particular, $(q_0,Z_0)$ is the special state-stack pair at the boundary $0$.
In a computation of $M$ on input $x$, a state-stack pair $(q,\gamma)$ at boundary $i$ with $i\geq1$ refers to the machine's current status where
$M$ is reading a certain input symbol, say, $\sigma$ at cell $i-1$ in a certain inner state, say, $p$ with a certain stack content $a\gamma'$ for $a\in \Gamma$, and $M$ then changes $p$ to $q$, modifying $a$ by either pushing another symbol $b$ satisfying $\gamma=ba\gamma'$ or popping $a$ to make $\gamma=\gamma'$ (due to the ideal shape condition).
It is important to remark that any computation of $M$ on $x$ can be expressed as a series of state-stack pairs at every boundary in the $\cent x\dollar$-region.
Consider a sequence $(\alpha_0,\alpha_1,\alpha_2,\ldots,\alpha_m)$ in which each $\alpha_i$ is a state-stack pair at step $i\in[0,m]_{\integer}$. Such a sequence completely describes a computation of $M$ if $\alpha_0=(q_0,Z_0)$ and $M$ halts in exactly $m$ steps.

Two boundaries $t_1$ and $t_2$ with $t_1<t_2$ are \emph{mutually correlated} if (1) there exist two state-stack pairs $(q,\gamma)$ and $(p,\gamma)$ at the boundaries $t_1$ and $t_2$, respectively, and (2) the boundary block $[t_1,t_2]$ is pseudo-convex. For four boundaries $t_1,t_2,t_3,t_4$ with $t_1<t_2<t_3<t_4$, the two boundary blocks $[t_1,t_2]$ and $[t_3,t_4]$ are \emph{mutually correlated} if (1$'$) $[t_1,t_2]$, $[t_2,t_3]$, and $[t_3,t_4]$ are all pseudo-convex, (2$'$) there exist two state-stack pairs $(q,\gamma)$ and $(p,\alpha\gamma)$ at the boundaries $t_1$ and $t_2$, respectively, and (3$'$) there exist two state-stack pairs $(s,\alpha\gamma)$ and $(r,\gamma)$ at the boundaries $t_3$ and $t_4$, respectively, for properly chosen $p,q,r,s\in Q$, $\gamma\in(\Gamma^{(-)})^*Z_0$, and $\alpha\in(\Gamma^{(-)})^*$.
Mutually correlated boundaries or boundary blocks are closely related to the iteration of certain parts of strings.
Furthermore, a pair $([t_1,t_2],[t_3,t_4])$ of boundary blocks with $t_1<t_2<t_3<t_4$ is called \emph{good} if the two boundary blocks $[t_1,t_2]$ and $[t_3,t_4]$ are mutually correlated, inner states at the boundaries $t_1$ and $t_2$ coincide, and inner states at the boundaries $t_3$ and $t_4$ coincide.


\begin{lemma}\label{crossing-property}
Let $M$ denote any $\varepsilon$-1dpda in an ideal shape and let $w$ be any $\varepsilon$-enhanced string with $|w|\geq1$.
\renewcommand{\labelitemi}{$\circ$}
\begin{enumerate}\vs{-2}
  \setlength{\topsep}{-2mm}%
  \setlength{\itemsep}{1mm}%
  \setlength{\parskip}{0cm}%

\item Let $t_1,t_2\in\nat$ with $1\leq t_1<t_2\leq |w|+1$. Let $w=x_1x_2x_3$ with $|x_2|\geq1$ be a factorization of string $w$ such that $t_1$ is the $(x_1,x_2)$-boundary and $t_2$ is the $(x_2,x_3)$-boundary. If the boundaries $t_1$ and $t_2$ are mutually correlated and two inner states at these boundaries coincide, then it follows that, for any number $i\in\nat$, $x_1x_2x_3\in L(M)$ iff $x_1x_2^ix_3\in L(M)$.

\item Let $t_1,t_2,t_3,t_4\in\nat$ with $1\leq t_1<t_2<t_3<t_4\leq |w|+1$. Let $w=x_1x_2x_3x_4x_5$ so that each $t_i$ is the    $(x_i,x_{i+1})$-boundary for each index $i\in [4]$. If the pair $([t_1,t_2],[t_3,t_4])$ is good, then $(x_2,x_4)$ is an iterative pair of $w$ for $L(M)$.
\end{enumerate}
\end{lemma}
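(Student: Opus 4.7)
The plan is to convert $M$ into an $\varepsilon$-1dpda in ideal shape by invoking Lemma \ref{ideal-shape}, and then combine the ideal-shape restrictions on push, pop, and stationary moves with the pseudo-convex height bounds to control how the stack content evolves during each relevant subcomputation. The proof of both parts rests on the principle that a subcomputation whose stack height never falls below a given threshold leaves the stack content below that threshold untouched, so its moves may be replayed on any stack with matching content above the threshold.

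For part (1), the argument is quite direct. Since the state-stack pair at both $t_1$ and $t_2$ is $(q,\gamma)$ and the pseudo-convex block $[t_1,t_2]$ has equal endpoint heights $|\gamma|$, the stack height stays at or above $|\gamma|$ throughout the reading of $x_2$. In an ideal-shape machine this forces the base $\gamma$ to remain untouched during $x_2$, so reading $x_2$ starting from $(q,\gamma)$ is a closed loop returning to $(q,\gamma)$. Consequently, for any $i\geq 1$ each pass of $x_2$ begins from the identical state-stack pair and behaves identically to the original, ending back at $(q,\gamma)$; for $i=0$ the machine simply remains at $(q,\gamma)$ after reading $x_1$, matching the boundary-$t_2$ configuration. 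In either case the subsequent reading of $x_3$ proceeds exactly as in the original and leads to acceptance.

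For part (2), the three pseudo-convex blocks $[t_1,t_2]$, $[t_2,t_3]$, and $[t_3,t_4]$ together force the stack height to remain at or above $|\gamma|$, $|\alpha\gamma|$, and $|\gamma|$ respectively. Hence the base $\gamma$ survives the whole of $x_2 x_3 x_4$, the $\alpha$-layer pushed during $x_2$ is preserved throughout $x_3$, and $\alpha$ is removed during $x_4$. Applying the replay principle layer by layer, the moves executed by $M$ while reading $x_2$ from $(q,\gamma)$ can be replayed from $(q,\alpha^j\gamma)$ with the stack ending up as $(q,\alpha^{j+1}\gamma)$, because the topmost symbol encountered at each corresponding step of the replayed pass agrees with that of the original pass (the newly pushed segment $\alpha$ sits atop whatever base is present). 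An analogous replay argument shows that reading $x_3$ from $(q,\alpha^i\gamma)$ reaches $(s,\alpha^i\gamma)$ and that each copy of $x_4$ strips one $\alpha$. An induction on $i\geq 1$ therefore establishes $x_1 x_2^i x_3 x_4^i x_5\in L(M)$; once the stack returns to $\gamma$, reading $x_5$ drives the machine to the original accepting configuration.

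The main obstacle is the boundary case $i=0$ of part (2), where both $x_2$ and $x_4$ are dropped and $M$ must reach $(s,\gamma)$ after reading $x_1 x_3$ starting from $(q,\gamma)$ rather than $(q,\alpha\gamma)$. Here the naive replay argument does not apply, because the topmost symbol of $\gamma$ may differ from that of $\alpha\gamma$, so interactions of the original $x_3$-computation with the top of $\alpha$ do not translate literally to interactions with the top of $\gamma$. The plan is to exploit the fact that the flat pseudo-convex bound for $[t_2,t_3]$ rules out any pop operation during $x_3$ at the threshold height $|\alpha\gamma|$, so each such interaction is either a stationary move or a push. A careful step-by-step surgery on the original stack trace, using the ideal-shape classification of moves and rule (5) governing how $\varepsilon$-move pops chain with other pops, then allows one to match these interactions with analogous interactions at the top of $\gamma$ and thereby replay the $x_3$-computation on the truncated stack. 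Making this substitution argument precise, and reconciling it with the matching-states conditions $p=q$ and $r=s$ at the two endpoint pairs, is the most technically delicate step of the proof.
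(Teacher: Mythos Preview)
Your treatment of part~(1) and of part~(2) for $i\geq 1$ is essentially the paper's argument: pseudo-convexity keeps the stack at or above the threshold, so the base is untouched and the relevant blocks can be replayed verbatim. The paper says exactly this in two sentences and stops.

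Where you diverge is the $i=0$ case of part~(2). You are right that this case is not covered by a naive replay: after $x_1$ the machine is at $(q,\gamma)$, and you must show that reading $x_3$ from $(q,\gamma)$ lands at $(s,\gamma)$, whereas the only information you have is that reading $x_3$ from $(q,\alpha\gamma)$ lands at $(s,\alpha\gamma)$ without the height dropping below $|\alpha\gamma|$. The paper's proof does not single this case out; it simply writes ``repeat $[t_1,t_2]$ and $[t_3,t_4]$ for the same number of times,'' tacitly including zero. So you have identified a point the paper glosses over.

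However, your proposed resolution does not close the gap. The obstacle is not that a pop might occur at the threshold height---you correctly rule that out---but that the \emph{stationary and push} transitions at the threshold are keyed on the top stack symbol, which is the top symbol of $\alpha$ in the original run and the top symbol of $\gamma$ in the truncated run. These symbols need not agree, and nothing in the ideal-shape rules (in particular rule~(5), which only constrains where $\varepsilon$-pops may occur) forces them to. A ``step-by-step surgery'' that ``matches these interactions with analogous interactions at the top of $\gamma$'' therefore has no hook: the very first step of $x_3$, taken in state $q$ on the first symbol of $x_3$ with stack top $\mathrm{top}(\alpha)$, may produce a different state and a different pushed symbol than the same step with stack top $\mathrm{top}(\gamma)$, and the discrepancy propagates. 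As written, your plan does not explain how to overcome this, so the $i=0$ case remains unproved. (The paper's own proof is no more explicit here, so you are not behind the paper---but the surgery you outline will not work without a further idea.)
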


\begin{proof}
(1) Assume that $1\leq t_1<t_2\leq |w|+1$. Since the boundaries $t_1$ and $t_2$ are mutually correlated, the boundary block $[t_1,t_2]$ is pseudo-convex and there exist two state-stack pairs of the form  $(q,\gamma)$ and $(p,\gamma)$ at the boundaries $t_1$ and $t_2$, respectively.
By the premise of the lemma, we also obtain $p=q$. Since any stack height in $[t_1,t_2]$ does not go below $|\gamma|$, it is possible to repeat this boundary block $[t_1,t_2]$ for an arbitrary number of times without affecting the rest of $M$'s computation; in particular, the outcomes of the computation does not alter. It therefore follows that, for any number $i\in\nat$,  $x_1x_2x_3\in L(M)$ iff $x_1x_2^i x_3\in L(M)$.

(2) Given any $t_1,t_2,t_3,t_4\in\nat$ with $1\leq t_1<t_2<t_3<t_4\leq |w|+1$, since the boundary blocks $[t_1,t_2]$ and $[t_3,t_4]$ are mutually  correlated, the boundary blocks $[t_1,t_2]$, $[t_2,t_3]$, and $[t_3,t_4]$ are all pseudo-convex and there are state-stack pairs $(q,\gamma)$ and $(p,\alpha\gamma)$ at the boundaries $t_1$ and $t_2$ and also $(s,\alpha\gamma)$ and $(r,\gamma)$ at the boundaries $t_3$ and $t_4$ for appropriate $p,q,s,r,\alpha,\gamma$.
The premise of the lemma then requires that $p=q$ and $r=s$. The pseudo-convexity of $[t_1,t_2]$, $[t_2,t_3]$, and $[t_3,t_4]$ makes it possible to repeat $[t_1,t_2]$ and $[t_3,t_4]$ for the same number of times without tampering the outcomes of $M$. Thus, $(x_2,x_4)$ is an iterative pair of $w$ for $L(M)$.
\end{proof}


To support the proof of Lemma \ref{pumping-lemma-state} in Section \ref{sec:lemma-verify}, we present another useful lemma. In this lemma, for readability, we abbreviate a concatenation $\alpha_{i}\alpha_{i-1} \cdots \alpha_1$ of $i$ strings as $\alpha_{[i,1]}$. In particular, $\alpha_{[1,1]}$ coincides with $\alpha_1$.

\begin{lemma}\label{mutual-matching}
Let $M$ be any $\varepsilon$-1dpda with $Q$ and $\Gamma$ and let $m$ denote  any integer with $m\geq2$. We fix an $\varepsilon$-enhanced input string $w$ and a computation of $M$ on $w$.  Let $p,q,p_i,q_i\in Q$ for all $i\in[m]$, $\gamma\in (\Gamma^{(-)})^*Z_0$, and $\alpha_i\in(\Gamma^{(-)})^*$ for any index $i\in[m-1]$.
Assume that there are boundaries $t_1,t_2,\ldots,t_m$ with $1\leq t_1<t_2<\cdots <t_m$ for which $(q,\gamma),(q,\alpha_1\gamma), \ldots, (q,\alpha_{[m-2,1]}), (q,\alpha_{[m-1,1]}\gamma)$ are state-stack pairs at the boundaries $t_1,t_2,\ldots,t_m$, respectively, and that there are boundaries $r_1,r_2,\ldots,r_m$ with $t_m\leq r_1<r_2<\cdots <r_m$ for which $(p_1,\alpha_{[m-1,1]}\gamma),(p_2,\alpha_{[m-2,1]}\gamma),\ldots, (p_{m-1},\alpha_1\gamma), (p_m,\gamma)$ are state-stack pairs at the boundaries $r_1,r_2,\ldots,r_m$, respectively. Moreover, each pair of boundary blocks $[t_i,t_{i+1}]$ and $[r_i,r_{i+1}]$ are pseudo-convex for each index $i\in[m-1]$.

\renewcommand{\labelitemi}{$\circ$}
\begin{enumerate}\vs{-2}
  \setlength{\topsep}{-2mm}%
  \setlength{\itemsep}{1mm}%
  \setlength{\parskip}{0cm}%

\item If $m>|Q|$, then there exist two distinct indices $i,j\in[m]$ satisfying $p_i=p_j$.

\item If $m>|Q|^2$, then there exists a subset $C$ of $[m]$ of size more than $|Q|$ such that $p_i=p_j$ holds for any pair $i,j\in C$.
\end{enumerate}
\end{lemma}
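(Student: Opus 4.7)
The plan is to observe that Lemma~\ref{mutual-matching} is, in its present formulation, a pure pigeonhole statement about the states $p_1,\ldots,p_m$, and the elaborate stack-content hypothesis is not used for drawing the stated combinatorial conclusion. (Presumably the detailed matching assumption is set up here so that the conclusion can immediately be fed into Lemma~\ref{crossing-property} in later applications; for the lemma itself, only $\{p_i\}\subseteq Q$ matters.) I would therefore keep the proof short and purely counting-based.

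First, for part~(1), I would simply note that $p_1,p_2,\ldots,p_m$ is a sequence of $m$ elements of the finite set $Q$. Under the hypothesis $m>|Q|$, the classical pigeonhole principle yields two distinct indices $i,j\in[m]$ with $p_i=p_j$. Nothing further is needed: the pseudo-convexity and the descending stack contents $\alpha_{[m-i,1]}\gamma$ are irrelevant to the counting.

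Next, for part~(2), with $m>|Q|^2$, I would apply the stronger (``averaging'') form of pigeonhole: there must exist some state $p\in Q$ with
\[
|\{i\in[m]\mid p_i=p\}|\;\geq\;\left\lceil \tfrac{m}{|Q|}\right\rceil\;\geq\;\left\lceil\tfrac{|Q|^2+1}{|Q|}\right\rceil\;=\;|Q|+1\;>\;|Q|.
\]
Taking $C=\{i\in[m]\mid p_i=p\}$ yields the desired subset, since $p_i=p_j=p$ holds for every pair $i,j\in C$. This finishes the proof.

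Because the argument is purely a pigeonhole count, I expect no real obstacle. The only thing to be mildly careful about is that the conclusion in~(2) asks for ``more than $|Q|$'' (i.e.\ strictly), which is why the threshold is $m>|Q|^2$ rather than $m\geq|Q|^2$; the computation above confirms the strict inequality. It is worth pointing out in the write-up that all the structural hypotheses about $\gamma$, the $\alpha_i$'s, and the pseudo-convex pairs $[t_i,t_{i+1}]$, $[r_i,r_{i+1}]$ are carried along for use at the call sites (where, combined with Lemma~\ref{crossing-property}, the coincidence $p_i=p_j$ will produce mutually correlated boundary blocks and hence iterative pairs), but they play no role in establishing the claim itself.
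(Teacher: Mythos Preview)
Your proposal is correct and matches the paper's own proof essentially line for line: both parts are pure pigeonhole arguments on the sequence $p_1,\ldots,p_m\in Q$, with the structural hypotheses playing no role in the lemma itself. The only cosmetic difference is that the paper phrases part~(2) by contradiction (if every $C_q=\{i\mid p_i=q\}$ had size at most $|Q|$ then $m\le|Q|^2$), whereas you use the equivalent averaging form directly.
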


\begin{proof}
(1) Consider the set $P=\{p_1,p_2,\ldots,p_m\}$ of inner states given in the premise of the lemma. Since $P\subseteq Q$, it follows that $|P|\leq|Q|<m$. Thus, there is a distinct pair $i,j\in[m]$ satisfying $p_i=p_j$ by the pigeonhole principle.

(2) Assume that there is no such subset $C$ stated in the lemma. For each inner state $q\in Q$, we introduce $C_q=\{i\in[m]\mid p_i=q\}$.
Note that $\{C_q\}_{q\in Q}$ forms a partition of $[m]$. Since $|C_q|\leq |Q|$ for all $q\in Q$, it follows that $m=|\bigcup_{q\in Q}C_q| = \sum_{q\in Q}|C_q| \leq |Q|^2$. This equality is in contradiction with $m>|Q|^2$.
\end{proof}

\subsection{Proof of Lemma \ref{pumping-lemma-state}}\label{sec:lemma-verify}

With the useful lemmas shown in Section \ref{sec:mutual-correlation}, we hereafter describe the proof of  Lemma \ref{pumping-lemma-state} by analyzing the behaviors of 1dpda's.
Now, we fix an arbitrary 1dpda $M=(Q,\Sigma, \{\cent,\dollar\},\Gamma, \delta,q_0,Z_0,Q_{acc},Q_{rej})$ in an ideal shape recognizing $L(M)$ with $|Q|=n$ and expand it to an ``equivalent'' $\varepsilon$-1dpda by Lemma \ref{enhanced-lemma}. For the desired constant $c$, we set $c= 2^{6|Q|^6}$.
Firstly, we take two arbitrary strings $w_1=xy$ and $w_2=xz$ in $L$ over $\Sigma$ with $|y|,|z|\geq1$ and $|x|>c$ such that $w_1$ and $w_2$ have iterative pairs for $L(M)$.
To use the $\varepsilon$-enhanced machine, we also expand these input strings and introduce their corresponding $\varepsilon$-enhanced strings. For readability, we continue using the same notations $w_1$, $w_1$, $x$, $y$, $z$, and $M$ for the corresponding $\varepsilon$-enhanced strings and the corresponding $\varepsilon$-1dpda.

We assume that the condition (1) of the lemma fails. Our goal is then to verify that the condition (2) of the lemma indeed holds.
There are four specific cases to deal with. Hereafter, we intend to discuss them separately. Remember that, since $M$'s tape head moves in only one direction, any state-stack pair at each boundary in the $x$-region does not depend on the choice of $y$ and $z$.

{\bf Case 1:}
Let us consider the case where there are two boundaries $t_1,t_2$ with $1\leq t_1<t_2\leq |x|$ and $|t_2-t_1|\leq c$ such that (i) the boundaries $t_1$ and $t_2$ are mutually correlated and (ii) the inner states at the boundaries $t_1$ and $t_2$ coincide.
In this case, we can factorize $x$ into $x_1x_2x_3$ so that $t_1$ is the $(x_1,x_2x_3)$-boundary and $t_2$ is the $(x_1x_2,x_3)$-boundary. By Lemma \ref{crossing-property}(1), it then follows that
$x_1x_2^ix_3 y\in L$ and $x_1x_2^ix_3 z\in L(M)$ for any number $i\in\nat$. Thus, the condition (1) holds. This is a contradiction with our assumption.

{\bf Case 2:}
Consider the case where there are four boundaries $t_1,t_2,t_3,t_4$ with $1\leq t_1<t_2<t_3<t_4\leq |x|$ and $|t_4-t_1|\leq c$ and there are four elements $p,q\in Q$, $\gamma\in(\Gamma^{(-)})^*Z_0$, and $\alpha\in(\Gamma^{(-)})^*$ such that (i) $(q,\gamma)$ and  $(q,\alpha\gamma)$ are state-stack pairs at the boundaries $t_1$ and $t_2$, respectively, (ii)  $(p,\alpha\gamma)$ and $(p,\gamma)$ are two state-stack pairs at the boundaries $t_3$ and $t_4$, respectively, and
(iii) the boundary block $[t_i,t_{i+1}]$ for each index $i\in[3]$ is pseudo-convex.
We then factorize $x$ into $x_1x_2x_3x_4x_5$ so that $t_i$ is the $(x_i,x_{i+1})$-boundary for each index $i\in[4]$. Note that $|x_2x_4|\geq2$ because of $t_1<t_2$ and $t_3<t_4$.
Since $M$ is in an ideal shape, $|x_2|\geq1$ also follows. By an application of Lemma \ref{crossing-property}(2), we conclude that
$(x_2,x_4)$ is an iterative pair of both $xy$ and $xz$ for $L(M)$.
This also contradicts our assumption.

{\bf Case 3:}
For convenience, we set $R=(|x|-c,|x|)$. We first consider the simple case where there is no pop operation in the $R$-region. We can take a number $m\geq1$ and a series of $m$  boundaries $s_1,s_2,\ldots,s_m$ in the $R$-region with $1\leq s_1<s_2<\cdots <s_m$ such that, for appropriately chosen elements $q\in Q$,  $\gamma\in(\Gamma^{(-)})^*Z_0$, and $\alpha_1,\ldots,\alpha_{m-1}\in(\Gamma^{(-)})^*$, there exist state-stack pairs of the form $(q,\gamma),(q,\alpha_1\gamma),\ldots,(q,\alpha_{[m-1,1]}\gamma)$ respectively at the boundaries $s_1,s_2\ldots,s_m$,  where $\alpha_{[i,1]}$ is the abbreviation of $\alpha_i\alpha_{i-1}\cdots \alpha_1$ introduced in Section \ref{sec:mutual-correlation}.
Since no pop operation is performed, the boundary blocks $[s_1,s_2],[s_2,s_3],\ldots,[s_{m-1},s_m]$ are all convex. By maximizing the value of $m$, we hereafter assume that $m$ takes this maximum value. Notice that $R$-region contains more than $|Q|^3$ boundaries because of $|Q|^3<c$.
If, for each $q\in Q$, the corresponding maximum value $m_q$ is at most $|Q|^2$, then the size $c$ of the $R$-region must be at most $|Q|^3$. This is a contradiction, and thus we conclude that $m>|Q|^2$.

Next, we choose two series $\{t_i\}_{i\in[m]}$ and $\{r_i\}_{i\in[m]}$ of boundaries in the $y$-region and the $z$-region, respectively, for which
(a) $s_m\leq t_1<t_2<\cdots <t_m$ and $s_m\leq r_1<r_2<\cdots <r_m$ and
(b) for any index $i\in[m-1]$, $[s_i,s_{i+1}]$ is mutually correlated to $[t_{m-i},t_{m-i+1}]$ in the $y$-region and also to $[r_{m-i},r_{m-i+1}]$ in the $z$-region.
This is possible because $M$ is $\varepsilon$-enhanced and in an ideal shape.
Notice that the boundary blocks $[t_1,t_2],\ldots,[t_{m-1},t_m]$ and  $[r_1,r_2],\ldots,[r_{m-1},r_m]$ are all pseudo-convex.
Assume that, at the boundaries $t_1,t_2,\ldots,t_m$, the associated state-stack pairs are respectively of the form $(p_1,\alpha_{[m-1,1]}\gamma), (p_2,\alpha_{[m-2,1]}\gamma), \ldots,(p_{m-1},\alpha_1\gamma), (p_{m},\gamma)$ for appropriate inner states $p_1,p_2,\ldots,p_m\in Q$.
Similarly, assume that $(e_1,\alpha_{[m-1,1]}\gamma),(e_2,\alpha_{[m-2,1]}\gamma), \ldots,(e_{m-1},\alpha_1\gamma), (e_m,\gamma)$ are respectively state-stack pairs at the boundaries $r_1,r_2,\ldots,r_m$.

By Lemma \ref{mutual-matching}(2), since $m>|Q|^2$, there is a subset $C$ of $[m]$ of size more than $|Q|$ such that, for any index pair $i,j\in C$, the inner states at the boundaries  $t_i$ and $t_j$ coincide; namely, $p_i=p_j$. From this fact, it follows that there is a special pair $j_1,j_2\in C$ with $j_1<j_2$ for which the inner states at the boundaries $r_{j_1}$ and $r_{j_2}$ coincide; that is, $e_{j_1}=e_{j_2}$. We fix such a pair $(j_1,j_2)$.
We then factorize the strings $x$, $y$, and $z$ as $x=x_1x_2x_3$, $y=y_1y_2y_3$, and $z=z_1z_2z_3$ so that $s_{j_1}$ is the  $(x_1,x_2)$-boundary, $s_{j_2}$ is the $(x_2,x_3)$-boundary, $t_{j_1}$ is the $(y_1,y_2)$-boundary, $t_{j_2}$ is the $(y_2,y_3)$-boundary, $r_{j_1}$ is the $(z_2,z_3)$-boundary, and $r_{j_2}$ is the  $(z_2,z_3)$-boundary.
It then follows that
$(x_2,y_2)$ and $(x_2,z_2)$ are respectively iterative pairs of $xy$ and $xz$ for $L$. Therefore, the condition (2) follows.

{\bf Case 4:}
Assume that Cases 1--3 all fail. Hence, at least one pop operation must take place in the $R$-region. We will proceed our proof by following the steps (1)--(5) described below.
We hereafter focus our attention only on the $R$-region.

(1) We first claim that any flat boundary block has size at most $|Q|$.
This is because if a flat boundary block has size more than $|Q|$, then Case 1 occurs, a contradiction with  our assumption. Therefore, the claim is true. This implies that there are at least $c/(|Q|+1)$ operations of ``pop'' and ``push'' in the $R$-region.

(2)
Choose two boundaries $s$ and $s'$ satisfying the following three conditions:  $|x|-c\leq s < s' \leq |x|$, the boundary block $[s,s']$ is pseudo-convex, and $[s,s']$ can be  partitioned into a number of turns.
Such a turn partitioning is illustrated in Fig.~\ref{fig:turn-partition}. Hereafter, we discuss the specific  cases (a)--(b).

(a)
We first consider the case where $[s,s']$ is
a single turn. We focus on a hill, say, $[t_1,t_2]$ that is contained in this  turn and claim that this hill has height at most $|Q|^2$. This claim is shown as follows. If the hill has height more than $|Q|^2$, then
there exists an inner state $q\in Q$ that appears in at least $|Q|+1$ state-stack pairs in the left area of the hill. Let $(q,\gamma),(q,\alpha_1\gamma),\ldots,(q,\alpha_{[|Q|,1]}\gamma)$ denote those state-stack pairs for appropriately chosen strings $\gamma,\alpha_1,\ldots,\alpha_{|Q|+1}$. Associated with them, we also choose state-stack pairs $(p_1,\alpha_{[|Q|,1]}\gamma),\ldots, (p_{|Q|},\alpha_1\gamma),(p_{|Q|+1},\gamma)$ in the right area of the hill for certain inner states $p_1,\ldots,p_{|Q|+1}$.
Lemma \ref{crossing-property}(1) then guarantees the existence of a distinct pair $i,j\in[|Q|+1]$ satisfying $p_i=p_j$.
This leads to Case 2, a contradiction.

As a consequence, the total size of the boundary blocks, each of which  expresses one slope (either upward or downward) of the hill (excluding all plateaus), is at most $2|Q|^2$. Notice that each plateau has size at most $|Q|$ by (1).
No distinct pair of plateaus in the hill has the same height since, otherwise, those two plateaus contain two state-stack pairs with the same inner state, and thus Case 1 occurs, a contradiction. Thus, the hill includes no more than $|Q|^2$ plateaus. From this fact, it follows that  $|t_2-t_1|\leq 2|Q|^2+|Q|^2\times |Q| \leq 3|Q|^3$.
Note that the size $|s'-s|$ of the turn equals the size of the hill plus the size of a certain slope extending one end (either the right or the left) of the hill. Such a slope cannot have size more than $|Q|^2$ because, otherwise, we obtain Condition (1). The size of the turn is therefore at most $|t_2-t_1|+|Q|^2\leq 3|Q|^3+|Q|^2\leq 4|Q|^3$. Similarly, the height of the turn is at most $|Q|^2 + |Q|^2 \leq 2|Q|^2$.

Moreover,  the inner states at the boundaries $t_1$ and $t_2$ must be  different because, otherwise, the corresponding state-stack pairs coincide, and thus Case 2 follows, a contradiction.

(b)
Next, we consider the case where $[s,s']$ consists of a series $\SSS$ of turns $[s_1,s_2], [s_2,s_3], \ldots, [s_{m-1},s_m]$ with $s=s_1$, $s'=s_m$, and $s_1<s_2<\cdots <s_m$. For each index $i\in[m-1]$, we take a hill $[t_i,t'_i]$ lying in $[s_i,s_{i+1}]$.
The height of such a hill is at most $|Q|^2$, and thus its boundary-block size is at most $3|Q|^3$ by (2a).
Recall that, for each turn $[s_i,s_{i+1}]$, its bottom height is the stack height at either end of the turn (i.e., $s_i$ or $s_{i+1}$).
For convenience, we define the \emph{gain} of the turn, denoted by $gain(s_i,s_{i+1})$, to be the stack height at the boundary $s_i$ minus the stack height at the boundary $s_{i+1}$.
It is easy to see that, when the gain of a turn equals $0$, this turn is merely a hill.
Let $(p_i,\gamma_i)$ denote the state-stack pair at the boundary $s_i$. Assume that the gains of all the turns are zero and that their bottom heights are all equal. In this special case, we obtain $m\leq |Q|$. This is shown as follows. Since the stack height does not go below the bottom height and $M$ is in an ideal shape, $\gamma_i = \gamma_j$ follows instantly for any pair $i,j\in[m-1]$. By Lemma \ref{mutual-matching}(1), if $m\geq |Q|+1$, then there are distinct indices $i,j\in[m]$ satisfying $p_i=p_j$, leading to Case 1, a contradiction. Therefore, $m\leq |Q|$ follows and $|s'-s|$ is upper-bounded by $m\cdot 4|Q|^3$, which is at most $4|Q|^4$.

(3) We estimate the total number of the turns whose bottom heights are a fixed number $h$. Given such a number $h$, we define $bh(h)$ to be the set of all indices $i\in[m-1]$ for which the bottom height of $[s_i,s_{i+1}]$ is exactly $h$. Note that $|bh(0)|\leq |Q|$ follows by (2b). We first assume that $|bh(h+1)|> |\bigcup_{i\in[0,h]_{\integer}} bh(i)|\cdot |Q|$ holds for a certain number $h\in\nat$. This implies the existence of more than $|Q|$ turns in $bh(h+1)$, any pair of which  contains no turn in $\bigcup_{i\in[0,h]_{\integer}}bh(i)$. Consider  $|Q|+1$ state-stack pairs at the boundaries whose heights match the bottom heights of those specific turns.  By Lemma \ref{mutual-matching}(1), at least two of those state-stack pairs coincide, a contradiction.
Hence, we conclude that, for any value $h\in\nat$, $|bh(h+1)|\leq  |\bigcup_{i\in[0,h]_{\integer}} bh(i)|\cdot |Q|$. We then assert that $|bh(h)|\leq 2^{h-1}|Q|^{h+1}$ for any $h\in\nat^{+}$. Since  $|bh(i)|\leq 2^{i-1}|Q|^{i+1}$ holds for any $i\in\nat^{+}$, by induction hypothesis, it follows that $|bh(h)|\leq
|Q|\cdot |\bigcup_{i\in[0,h-1]_{\integer}} bh(i)| \leq
|Q|(|Q|+\sum_{i=1}^{h-1}2^{i-1}|Q|^{i+1}) \leq (1+\sum_{i=1}^{h-1}2^{i-1})|Q|^{h+1} = 2^{h-1}|Q|^{h+1}$.

(4)
We define another notion of ``true gain'' in the $(s,s')$-region and intend to estimate its value.
Let us consider a series of consecutive turns $[s_1,s_2],[s_2,s_3],\ldots,[s_{m-1},s_m]$ in the boundary block $[s,s']$, where $s\leq s_1$, $s_m\leq s'$, and $m$ is chosen to maximize the length of this series.

For each turn, we consider its gain. Given an index $k\in[m]$, the \emph{true gain} in $[s_1,s_k]$, denoted by $tg(s_1,s_k)$, is set to be the sum $\sum_{i\in[k-1]}gain(s_i,s_{i+1})$.
For convenience, we also set $tg(s,s')=tg(s_1,s_m)$. We want to evaluate the value of this true gain and prove that $tg(s,s')>|Q|^3$. Assuming that $tg(s,s')\leq |Q|^3$, we wish to reach a contradiction. For this purpose, we examine two separate cases (a)--(b).

(a)
Firstly, we consider the case where $tg(s_1,s_k)\geq 2|Q|^3$ holds for a certain index $k\in[m-1]$.
We intend to partition all the turns according to the ``sign'' of their gains. For any pair $k,l\in[m]$ with $k<l$, the notation $P_{s_k,s_l}$ (resp., $N_{s_k,s_l}$ and $Z_{s_k,s_l}$) denotes the set of all indices $i\in[k,l-1]_{\integer}$ such that $gain(s_i,s_{i+1})>0$ (resp., $gain(s_i,s_{i+1})<0$ and $gain(s_i,s_{i+1})=0$).

Since $tg(s_1,s_k)\geq 2|Q|^3$, we can take elements $q$, $\alpha_1,\ldots,\alpha_{|Q|}$, and $\gamma$ satisfying that there exist  $|Q|+1$ boundaries in $\bigcup_{i\in P_{s_1,s_k}} [s_i,s_{i+1}]$ for which the state-stack pairs have the form  $(q,\gamma),(q,\alpha_1\gamma),\ldots,(q,\alpha_{[|Q|,1]}\gamma)$.
Matching these state-stack pairs,  since $tg(s,s')\leq |Q|^3$,
there are also state-stack pairs $(p_1,\alpha_{[|Q|,1]}\gamma), (p_2,\alpha_{[|Q|-1,1]}\gamma),\ldots, (p_{|Q|+1},\gamma)$ at  $|Q|+1$ boundaries in $\bigcup_{i\in N_{s_k,s_m}} [s_i,s_{i+1}]$ for properly chosen inner states $p_1,\ldots,p_{|Q|+1}$.
By Lemma \ref{mutual-matching}(1), we can find a distinct pair $i,j\in[|Q|+1]$ satisfying $p_i=p_j$. This implies Case 3, a contradiction.

(b) Consider the case where $tg(s_1,s_k)< 2|Q|^3$ holds for any index $k\in[|Q|+1]$.
The total number $m$ of turns in the $R$-region is at most $|\bigcup_{i\in[0,2|Q|^3]_{\integer}} bh(i)|$. Thus, we conclude that $m\leq \sum_{i=0}^{2|Q|^3}|bh(i) \leq |Q|+\sum_{i=1}^{2|Q|^3}2^{i-1}|Q|^{i+1} \leq |Q|+2^{2|Q|^3}|Q|^{2|Q|^3+1} \leq 2^{4|Q|^4}$.

By (2b), there may be series of at most $|Q|$ turns having gain $0$ and those turns have size at most $4|Q|^4$.
Since $|s'-s|$ is the sum of the sizes of all the turns, it follows that  $|s'-s|\leq |Q|\times 4|Q|^4 + m \cdot 4|Q|^4 \leq 4|Q|^42^{4|Q|^4} \leq 2^{5|Q|^5}$.
Since each turn has size at most $4|Q|^4$ and $m$ is the maximum value, it follows that $|s-(|x|-c)|<4|Q|^4$ and $||x|-s'|<4|Q|^4$. From these inequalities, we obtain $|s'-s|\geq c-8|Q|^4$.
We then conclude that $c\leq 2^{5|Q|^5}+8|Q|^4$. This contradicts the definition of $c$.

From the cases (a)--(b), we conclude that $tg(s,s')>|Q|^3$.

(5) Since $tg(s,s')\geq |Q|^3+1$, there exists a series $(q,\gamma), (q,\alpha_1\gamma),\ldots,(q,\alpha_{[|Q|^2,1]}\gamma)$ of $|Q|^2+1$ state-stack pairs at the boundaries $t_1,t_2,\ldots,t_{|Q|^2+1}$ in the $R$-region, respectively, for properly selected $q,\alpha_1,\ldots,\alpha_{|Q|^2+1}$. Consider the $y$-region of $x'y$. Let $r_1,r_2,\ldots,r_{|Q|^2+1}$ denote boundaries in the $y$-region
satisfying that the state-stack pairs at these boundaries have the form $(p_1,\alpha_{[|Q|^2,1]}\gamma), (p_2,\alpha_{[|Q|^2-1,1]}\gamma), \ldots,(p_{|Q|^2+1},\gamma)$, respectively. By Lemma \ref{mutual-matching}(2), there exists a subset $C\subseteq[|Q|^2+1]$ of size more than $|Q|$ satisfying $p_i=p_j$ for any pair $i,j\in C$. Next, we consider the $z$-region of $x'z$. Let $r'_1,r'_2,\ldots,r'_{|Q|+1}$ denote boundaries in the $z$-region for which the state-stack pairs at these boundaries have the form $(p'_1,\alpha'_{[|Q|,1]}), (p_2,\alpha'_{|Q|-1,1]}\gamma), \ldots, (p_{|Q|+1},\gamma)$, respectively. Lemma \ref{mutual-matching}(2) further implies $p'_{i_0}=p'_{j_0}$ for an appropriate distinct pair $i_0,j_0\in[|Q|+1]$.
This clearly implies the condition (2) of the lemma.

\subsection{Proof of Lemma \ref{pumping-lemma}}\label{sec:first-lemma-proof}

We are now ready to describe the proof of the first pumping lemma for $\dcfl[d]$ (Lemma \ref{pumping-lemma}). Our proof has two distinguished  parts depending on the value of $d$. The first part of the proof targets the basis case of $d=1$.
As argued in Section \ref{sec:new-tool}, this special case directly follows from Lemma \ref{pumping-lemma-state}.

Let $\Sigma$ be any alphabet and take  any infinite dcf language $L$ over $\Sigma$ in $\dcfl[d]$.
Our proof argument is easily extendable to \emph{one-way nondeterministic pushdown automata} (or 1npda's) and to the ``standard'' pumping lemma for $\cfl$ (cf. \cite{HU79}).
The second part of the proof deals with the general case of $d\geq2$.
Hereafter, we discuss these two parts separately.


\s
\n{\bf Basis Case of $d=1$:}
We begin our proof for the basis case of $d=1$. This comes from Lemma \ref{pumping-lemma-state} by taking an appropriate finite-state 1dpda $M$ for a given $L$ in Lemma \ref{pumping-lemma}. This completes the proof of the basis case of $d=1$.


\s
\n{\bf General Case of $d\geq2$:}
We begin our argument by considering $d$ dcf languages $L_1,L_2,\ldots,L_d$ satisfying $L=\bigcup_{i\in[d]}L_i$.
Take $d+1$ strings $w_1,w_2,\ldots,w_{d+1}$ of the form $xy^{(k)}$  in $L$ with $|x|>c$ for appropriate strings $x$ and $y^{(k)}$.
Since there are only $d$ languages for the $d+1$ strings, there must be an index $e\in[d]$ and two distinct indices $j_1,j_2\in[d+1]$ for which  $w_{j_1},w_{j_2}\in L_e$.
We then fix such a triplet $(j_1,j_2,e)$ and set $w=w_{j_1}$, $w'=w_{j_2}$, and $L'=L_e$  for simplicity.
Let us take any two factorizations $w=x'y$ and $w'=x'z$ with $|x'|>c$.  We apply the basis case of $d=1$ to $w$ and $w'$ and obtain one of the conditions (1)--(2) of the lemma. This completes the proof of the general case.

\section{Proof of the Second Pumping Lemma}\label{sec:second-lemma-verify}

We have already proven in Section \ref{sec:first-lemma-proof} the first pumping lemma for $\dcfl[d]$ by conducting a detailed analysis of a stack history of a given 1dpda. As announced in Section \ref{sec:new-tool}, by sharp contrast, this section  intends to prove the second pumping lemma for $\dcfl[d]$ (Lemma \ref{second-pumping-lemma}) by way of a grammar-based argument.

\subsection{Grammars and Grammatical Trees}\label{sec:grammatical-tree}

We begin with a brief explanation of crucial notions and notation regarding formal grammars. We will loosely follow the terminology of Yu \cite{Yu89} and, in part, that of Harrison and Havel \cite{HH74}.

A \emph{(formal) grammar} is a quadruple $G=\pair{N,\Sigma,P,S}$, where $N$ is a finite set of terminal symbols (or terminals, for short), $\Sigma$ is a finite set of non-terminal symbols (or nonterminals), $P$ is a set of production rules (or productions), and $S$ is the start symbol. We use the notation $\Rightarrow$ to denote a \emph{derivation} and $\Rightarrow_r$ to indicate a \emph{rightmost derivation}, in which any production should be first applied to the rightmost nonterminal. The notation $\Rightarrow_r^*$ indicates the reflexive, transitive closure of $\Rightarrow_r$.
Given any nonterminal $A\in N$ and any string $w\in (N\cup\Sigma)^*$, $w$ is called a \emph{right canonical grammatical form} of $A$ if $A\Rightarrow_r^* w$.
A grammar $G$ is said to be \emph{reduced} if (i) there is no reduction of the form $A\to A$ for any $A\in N$, (ii) for any $A\in N$, there is a string $w\in\Sigma^*$ satisfying $A\Rightarrow^{+} w$, and (iii) for any $A\in N$, there are strings $x,y\in (N\cup\Sigma)^*$ for which $S\Rightarrow^* xAy$.

A context-free grammar $G=\pair{N,\Sigma,P,S}$ is called an \emph{LR(1)  grammar} if (i) $S\not\Rightarrow_r^{+} S$, (ii) if $S\Rightarrow_r^* \alpha A w\Rightarrow_r \alpha\beta w = \gamma w$, $S\Rightarrow_r^* \alpha' A' u \Rightarrow_r \alpha'\beta' u = \gamma w'$, and $w_{(1)}=w'_{(1)}$, then $A=A'$, $\beta=\beta'$, $\alpha=\alpha'$, where $A,A'\in N$, $w,w',u\in \Sigma^*$, and $\alpha,\alpha',\beta,\beta',\gamma\in (N\cup\Sigma)^*$ (cf. \cite{HH74}).

Let $T$ denote a rooted tree with a set $D$ of nodes. For two nodes $v,w\in D$, if $v$ is a \emph{parent of} $w$ in $T$, then we write $v\triangleright w$. The \emph{dependency relation} $\triangleright^*$ is the reflexive and transitive closure of $\triangleright$.
Note that $v\triangleright^* w$ indicates the existence of a (directed) path from $v$ to  $w$ in $T$ and we also write $v\leadsto w$. In this case, $v$ is called an \emph{ancestor} of $w$.
Given a node $u$ in $T$, $Anc(u)$ denotes the set of all ancestors of $w$ in $T$; that is, $Anc(u) = \{w\in D\mid w\leadsto u\}$. In addition, we set $Anc^{+}(u) = Anc(u)-\{u\}$.
For two nodes $v,w\in D$, the notation $nca(v,w)$ denotes the \emph{nearest common ancestor} of $v$ and $w$; that is, the node $x$ in $D$ such that (i) $x\in Anc(v)\cap Anc(w)$ and (ii) no node $y$ in $Anc^{+}(v)\cap Anc^{+}(w)$  satisfies $x\leadsto y$.

We say that $u$ is a \emph{left to} $v$ in $T$, denoted $u\prec v$, if either (i) there exists a node $w\in D$ and an index $i\geq1$ such that $u$ is the $i$th child of $w$ and $v$ is the $(i+1)$th child of $w$ or (ii) there exist two nodes $w_1\in Anc(u)$ and $w_2\in Anc(v)$ satisfying $w_1\prec w_2$, and there is no node $x\in D$ satisfying $w_1\prec w_3\prec w_2$ for certain three nodes $w_1\in Anc(u)$, $w_2\in Anc(v)$, and
$w_3\in Anc(x)$. The \emph{left-to-right order} $\prec^*$ is the reflexive and transitive closure of $\prec$.
We write $leaf(T)$ for the sequence consisting of all leaves of $T$ in the left-to-right order.

The \emph{height} of a node $v$ in $T$ is the maximum length of a path from $v$ to a leaf in $T$ and the \emph{height} of $T$ is the height of its root.
Any node other than leaves is called an \emph{internal node}.
For an internal node $v$ in $T$, the subset $\{w\in D\mid w=v \text{ or } v\triangleright w\}$ of $D$ naturally induces a subtree of $T$ of height at most $1$. This subtree is called an \emph{elementary subtree of $T$ rooted at $v$}.
In contrast, the subset $\{w\in D\mid v\leadsto w\}$ for each node $v$ induces a subtree of $T$, which is denoted $T(v)$.

Associated with a grammar $G$, we deal with a labeled rooted tree $T$ with  a set $D$ of nodes, the root $r$ as well as two binary relations $\triangleright$ and $\prec$, and a labeling function $\rho: D\to N\cup\Sigma\cup\{\varepsilon\}$.
We further expand $\rho$ to a function from $D^*$ to $(V\cup\Sigma)^*$ by setting $\rho(v_1,v_2,\ldots,v_m) = \rho(v_1) \rho(v_2) \cdots \rho(v_m)$ for all nodes $v_1,v_2,\ldots,v_m\in D$, where $D^*$ denotes the set of all finite sequences of nodes in $D$.
We say that a production of the form $A\to\sigma_1\sigma_2\cdots \sigma_m$ (resp., $A\to \varepsilon$) for $A\in N$ and $\sigma_1,\sigma_2,\ldots,\sigma_m\in N\cup \Sigma$ \emph{expresses} an elementary subtree $T$ for which the root of $T$ has label $A$ and all children have the labels $\sigma_1,\sigma_2,\ldots,\sigma_m$ from left to right (resp., have label $\varepsilon$).
Such a tree $T$ is called a \emph{grammatical (derivation) tree} of $G$ if (i) for every elementary subtree $T'$ of $T$, there is a production in $P$ that expresses $T'$ and (ii) the string $\rho(leaf(T))$ belongs to $\Sigma^*$. Such a grammatical tree is expressed as $(D,\triangleright, \prec, r,\rho)$. When the root $r$ satisfies $\rho(r)=S$, $T$ is called a \emph{derivation tree} of $G$.
If $T$ further satisfies $\rho(r)\Rightarrow^*_r \rho(leaf(T))$ in $G$, then $T$ is called \emph{right canonical}.
It is often useful to relax the above condition (ii) to the following condition: (ii$'$) $\rho(leaf(T))\in (N\cup\Sigma)^*$. With this new condition (ii$'$) together with (i), we call $T$ a \emph{partial grammatical tree} of $G$. Let $\TT_{G}$ denote the set of all partial grammatical trees of $G$.
We say that $T$ is \emph{extendable} to grammatical trees if there exists a rightmost canonical grammatical tree $T'$ of $G$ such that $T$ is a subtree of $T'$ with the same root.

A leaf of $T$ whose label is a symbol in $\Sigma\cup\{\varepsilon\}$ is called a \emph{terminal node} and a single-node tree with a terminal node is succinctly called a \emph{terminal tree}.
Any node of $T$ whose label is
$\varepsilon$ is called an \emph{$\varepsilon$-node}.

Given a partial grammatical tree $T$ and a number $n\in\nat$, the \emph{left $n$-part}\footnote{This notion comes from Yu's work \cite{Yu89}, which is formulated quite differently from Harrison and Havel's \cite{HH74}. Nevertheless, the definition given in \cite{Yu89} is erroneous and thus requires attention.}
of $T$, denoted ${}^{(n)}T$, is a forest (i.e., a sequence of trees) defined as follows. Let ${}^{(0)}T=\setempty$. Assume that $leaf(T) = (v_1,v_2,\ldots,v_m)$ for $m\geq1$. If $n\leq m$, then ${}^{(n)}T$ is a subgraph of $T$, consisting of all nodes $z$ in $T$ such that, for a certain index $i\in[n]$, $z\prec^* v_i$ holds. In the case of $n>m$, we automatically set ${}^{(n)}T$ to be $T$.
Since ${}^{(n)}T$ is a sequence of subtrees of $T$, we assume that those subtrees are enumerated according to the left-to-right order.

For two partial grammatical trees $T_1 =(D_1,\triangleright_1, \prec_1, r_1,\rho_1)$ and $T_2 =(D_2,\triangleright_2, \prec_2, r_2,\rho_2)$, we say that $T_1$ is \emph{isomorphic} to $T_2$, denoted $T_1\equiv T_2$, if there exists a bijection $f:D_1\to D_2$ such that, for all $v,w\in D_1$,  (1) $f(r_1)= r_2$; (2) $v\triangleright_1 w$ iff $f(v)\triangleright_2 f(w)$; (3) $v\prec_1 w$ iff $f(v)\prec_2 f(w)$; and (4) $\rho_1(v)=\rho_2(f(v))$.
In this case, $f$ is called an \emph{isomorphism}. When $f(v_1)=v_2$, we also say that $v_1$ is \emph{isomorphic} to $v_2$.
Given  two (ordered) forests $F_1=(T_{11},T_{12},\ldots,T_{1k})$ and $F_2=(T_{21},T_{22},\ldots,T_{2m})$,
$F_1$ is said to be \emph{isomorphic} to $F_2$, denoted $F_1\equiv F_2$, if both $k=m$ and $T_{1i}\equiv T_{2i}$ hold for all indices $i\in[k]$.
For convenience, even if $F_1$ and $F_2$ are empty, we write $F_1\equiv F_2$ as well.

Let $v_1$ and $v_2$ denote two nodes of $T$ with $v_1\leadsto v_2$ and let $\alpha$ and $\beta$ be two sequences of leaves of $T$. We say that $(v_1,v_2)$ \emph{generates} $(\alpha,\beta)$ in $T$ if $leaf(T(v_1)) = \alpha \cdot leaf(T(v_2))\cdot \beta$. Here, the dot ``\,$\cdot$\,'' indicates the concatenation of two sequences. For example, $(v_1,v_2,v_3)\cdot (w_1,w_2)$ indicates $(v_1,v_2,v_3,w_1,w_2)$.
Notice that $\alpha$ or $\beta$ (or both) in $(\alpha,\beta)$ may be empty.
Consider a path $p$ from the root $r$ to a leaf in $T$. Given two distinct nodes $v_1$ and $v_2$ lying on the path $p$ with $v_1\leadsto v_2$, we define $st[v_1,v_2]$ to be the string pair $(\rho(\alpha),\rho(\beta))$ obtained from a pair $(\alpha,\beta)$ that is generated by $(v_1,v_2)$ in $T$. Moreover, let $ST[p]$ be the set $\{st[v_1,v_2]\mid v_1,v_2\in p, v_1\leadsto v_2, v_1\neq v_2\}$.


Yu \cite{Yu89} claimed the so-called \emph{left-part theorem for $\mathrm{LR}(k)$ grammars} (or the $\mathrm{LR}(k)$ left-part theorem).
Since every dcf language has an $\mathrm{LR}(1)$ grammar \emph{in Greibach normal form}\footnote{A grammar $G$ is \emph{in Greibach normal form} if every production has one of the following forms: $A\rightarrow a$ and $A\rightarrow bXY$ for $A,X,Y\in N$, $a\in\Sigma\cup\{\varepsilon\}$, and $b\in\Sigma$.} \cite{GHH76,Lom70}, for the proof of Lemma \ref{second-pumping-lemma}, we  require only the property of $\mathrm{LR}(1)$ grammars in Greibach normal form.
In this exposition, we therefore restrict our attention to the special case of $k=1$ and we rephrase Yu's original theorem as the following restricted form.

\begin{lemma}\label{left-part-theorem}
Let $G=\pair{N,\Sigma,P,S}$ denote any $\mathrm{LR}(1)$ grammar in Greibach normal form. Let $T_1=(D_1,\triangleright_1,\vdash_1,\gamma_1,\rho_1)$ and $T_2=(D_2,\triangleright_2,\vdash_2,\gamma_2,\rho_2)$ be two right canonical partial grammatical trees of $G$ with  $leaf(T_1) = (c_1,c_2,\ldots,c_m)$ and $leaf(T_2)= (d_1,d_2,\ldots,d_n)$ for $m,n\geq1$. If there exist two numbers  $s\in[0,m]_{\integer}$ and $t\in[0,n]_{\integer}$ satisfying the following six conditions, then both $s=t$ and ${}^{(s)}T_1 \equiv {}^{(t)}T_2$ hold.
(1) $\rho_1(r_1)=\rho_2(r_2)$.
(2) $\rho_1(c_1,c_2,\ldots,c_{s+1}) = \rho_2(d_1,d_2,\ldots,d_{t+1})\in(N\cup\Sigma)^*$.
(3) For all $i\in[s+1,m]_{\integer}$ and all $j\in[t+1,n]_{\integer}$ follows $\rho_1(c_i),\rho_2(d_j)\in \Sigma\cup\{\varepsilon\}$.
(4) $\rho_1(c_{s+1}) = \rho_2(d_{t+1})$.
(5) If $s<m$ and  $t<n$, then $\rho_1(c_{s+1})\in\Sigma$ and  $\rho_2(d_{t+1})\in\Sigma$.
(6) $T_1$ and $T_2$ are extendable to right canonical grammatical trees of $G$.
\end{lemma}

Figure \ref{fig:class-inclusion} illustrates
two right canonical partial grammatical trees discussed in Lemma \ref{left-part-theorem}.


\begin{figure}[t]
\centering
\includegraphics*[height=5.3cm]{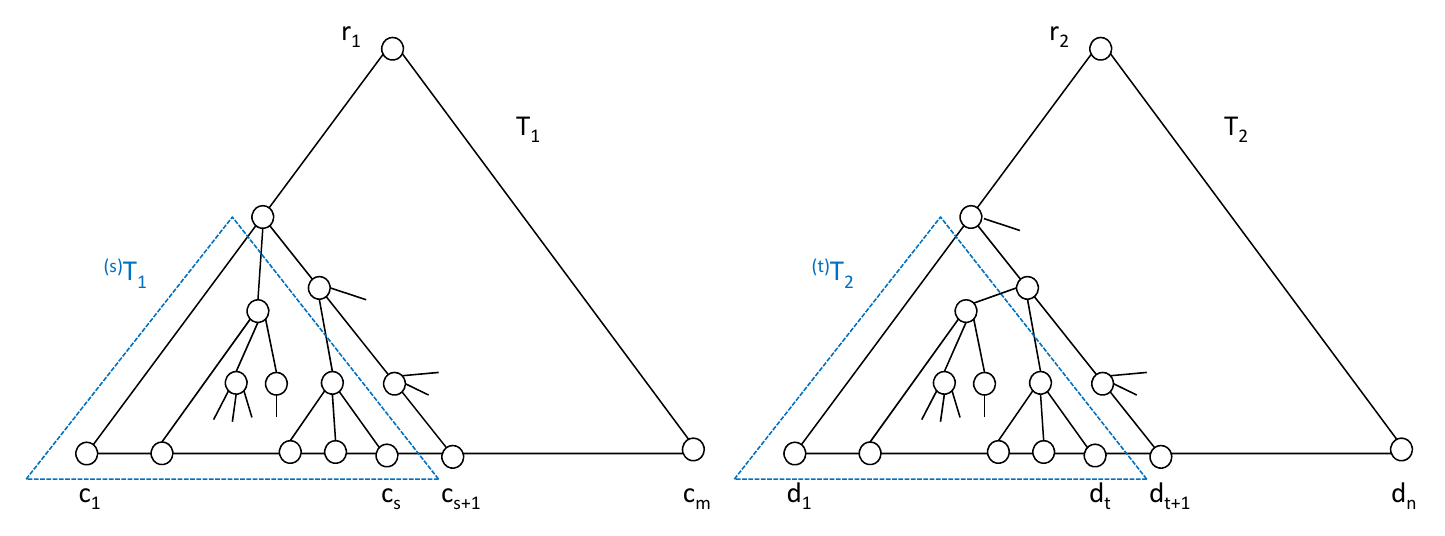}
\caption{An example of two right canonical partial grammatical trees given in Lemma \ref{left-part-theorem}.}\label{fig:class-inclusion}
\end{figure}


Since the proof of Yu's $\mathrm{LR}(k)$ left-part theorem is not available to general public, we need to  provide the proof of Lemma \ref{left-part-theorem} in order to prove the second pumping lemma.


\begin{proofof}{Lemma \ref{left-part-theorem}}
Let $G=\pair{N,\Sigma,P,S}$ be any $\mathrm{LR}(1)$ grammar in Greibach normal form and let $T_1= (D_1,\triangleright_1,{\prec_1}, r_1,\rho_1)$ and $T_2= (D_2,\triangleright_2,{\prec_2}, r_2,\rho_2)$ denote two right canonical partial grammatical trees of $G$. We assume that the conditions (1)--(6) of the premise of the lemma are satisfied by both $T_1$ and $T_2$.

Letting $leaf(T_1)=(c_1,c_2,\ldots,c_m)$ and $leaf(T_2)=(d_1,d_2,\ldots,d_n)$ with $m,n\geq1$, it follows by the premise of the lemma that   $\rho_1(c_i), \rho_2(d_j)\in N\cup\Sigma\cup\{\varepsilon\}$ for all $i\in[1,s]_{\integer}$ and all $j\in[1,t]_{\integer}$,
$\rho_1(c_{s+1}) = \rho_2(d_{t+1})\in\Sigma$, and
$\rho_1(c_i), \rho_2(d_j) \in\Sigma\cup\{\varepsilon\}$ for all $i\in[s+2,m]_{\integer}$ and all $j\in[t+2,n]_{\integer}$.
In particular, $S=\rho_1(r_1)$ and $\rho_1(c_1,\ldots,c_{s+1}) = \rho_2(d_1,\ldots,d_{t+1})$.


Consider the parents $v_0^{(1)}$ and $v_0^{(2)}$ of $c_{s+1}$ and $d_{t+1}$ in $T_1$ and $T_2$, respectively. We set $A_1=\rho_1(v_0^{(1)})$,  $a=\rho_1(c_{s+1})$, and $y=\rho_1(c_{s+2},\ldots,c_m)$.
Since $\rho_1(c_{s+1})\in\Sigma$, $v_0^{(1)}$ must have $c_{s+1}$ as its only child because $G$ is in Greibach normal form.
The corresponding production thus has the form $\rho_1(v_0^{(1)})\rightarrow \rho_1(c_{s+1})$, which is $A_1\rightarrow a$.
Similarly, $d_{t+1}$ is the only child of $v_0^{(2)}$.
Let $A_2=\rho_2(v_0^{(2)})$ and $z=\rho_2(d_{t+2},\ldots,d_n)$.
Notice that the corresponding production $\rho_2(v_0^{(2)})\rightarrow \rho_2(d_{t+1})$ is $A_2\rightarrow a$ since $\rho_2(d_{t+1})=a$.

Without loss of generality, we hereafter assume that $s\leq t$.
The following proof proceeds by induction on $s$.
We start with the basis case of $s=0$.
Each right canonical partial grammatical tree extendable to a right canonical grammatical tree can be viewed as a process of generating it by applying a series of rightmost derivations.

\begin{claim}\label{case-of-0}
If $s=0$, then $t=0$ follows. Therefore, ${}^{(s)}T_1={}^{(t)}T_2=\setempty$ holds.
\end{claim}

\begin{proof}
Assume that $s=0$. Toward a contradiction, we also assume that $t>0$.
It then follows that $\rho_1(c_1),\rho_2(d_{t+1})\in\Sigma$, $\rho_1(c_1)=\rho_2(d_1,\ldots,d_{t+1})$, and $\rho_1(c_1)=\rho_2(d_{t+1})$. From these conditions, we obtain $\rho_2(d_1)=\rho_2(d_2)=\cdots =\rho_2(d_t)=\varepsilon$.
The node $v_0^{(1)}$ is the parent of $c_1$ with $A_1=\rho_1(v_0^{(1)})$ and $a=\rho_1(c_1)$.
The last applied derivation for generating $T_1$ must be $A_1\rightarrow a$ since $c_1$ is the only child of $v_0^{(1)}$.
This implies that $S\Rightarrow_r^{*} A_1y \Rightarrow_r ay$.
In contrast, let us consider the parent $v'$ of $d_1$ and set  $A'=\rho_2(v')$. Notice that
$z=\rho_2(d_{t+1},\ldots,d_n)$. Since $\rho_2(d_1)=\varepsilon$, the last applied derivation for generating $T_2$ must be of the form $A'\rightarrow \rho_2(d_1)=\varepsilon$.
We then obtain $S\Rightarrow_r^* A'\rho_2(d_2,\ldots,d_{t+1}) z = A'az\Rightarrow_r az$.
By the LR(1) property of $G$, $A_1=A'$ and $a=\varepsilon$ follow. However, this is a clear contradiction with $a\in\Sigma$.
Therefore, we conclude that $t=0$.
\end{proof}

Next, we consider the case of $s>0$. This implies $t>0$ because, otherwise, Claim \ref{case-of-0} (by swapping between $s$ and $t$) yields $s=0$.

Take the maximum numbers $i_1\in[1,s]_{\integer}$ and $i_2\in[1,t]_{\integer}$ such that $c_{i_1}$ is a  leaf of $T_1$ whose label is also a non-terminal symbol and that $d_{i_2}$ is a leaf of $T_2$ whose label is a non-terminal symbol.
Let $v_1^{(1)}$ and $v_1^{(2)}$ respectively denote the parents of $c_{i_1}$ and $d_{i_2}$.
Since $\rho_1(c_{i_1})\in N$ and $\rho_2(d_{i_2})\in N$, the children of $v_1^{(1)}$ are expressed as $(c_{i_1-2},c_{i_1-1},c_{i_1})$ (from left to right) and the children of $v_1^{(2)}$ are $(d_{i_2-2},d_{i_2-1},d_{i_2})$ (from left to right).
Let $\beta'=\rho_1(c_{i_1-2},c_{i_1-1},c_{i_1})$ and $\beta''=\rho_2(d_{i_2-2},d_{i_2-1},d_{i_2})$.
Moreover, we obtain $\rho_1(c_{i_1-2}),\rho_2(d_{i_2-2})\in \Sigma$. For other nodes, we set $\alpha'=\rho_1(c_1,\ldots,c_{i_1-3})$ and $\alpha''=\rho_2(d_1,\ldots,d_{i_2-3})$ if $s>3$ and $t>3$.

\begin{claim}\label{N-cup-Sigma}
It follows that $\rho_1(c_j)\in N\cup \Sigma$ for all $j\in[i_1]$ and $\rho_2(d_j)\in N\cup\Sigma$ for all $j\in[i_2]$.
\end{claim}

\begin{proof}
We want to show only the first part of the claim because the second part can be similarly shown.
It suffices to prove that $\rho_1(c_j)\neq\varepsilon$. If $\rho_1(c_j)=\varepsilon$, then $c_j$ must have its parent, say, $u_j^{(1)}$  but it has no siblings. Let $B'_j=\rho_1(u^{(1)}_j)$.

Take an index $j\in[1,i_1]_{\integer}$ satisfying that $\rho_1(c_j)=\varepsilon$ and that the production $\rho_1(u_j^{(1)})\rightarrow \rho_1(c_j)$ (i.e., $B'_j\rightarrow \varepsilon$) is the most recently applied one while generating  $T_1$. Since $T_1$ is right canonical, the production $\rho_1(u_j^{(1)})\rightarrow \varepsilon$ cannot be applied earlier than the production $\rho_1(v_1^{(1)})\rightarrow  \rho_1(c_{i_1-2},c_{i_1-1},c_{i_1})$ (i.e., $B_1\rightarrow \beta'$).
Even after the application of $B_1 \rightarrow \beta'$,
we cannot apply $B'_j \rightarrow \varepsilon$ because $\rho_1(c_{i_1})\in N$. Hence, there is no chance of applying $B'_j \rightarrow \varepsilon$. This is a contradiction. Therefore, $\rho_1(c_j)$ cannot be $\varepsilon$.
\end{proof}

Recall that $1\leq i_1\leq s$ and $1\leq i_2\leq t$. Since $\rho_1(c_1,\ldots,c_{s+1}) = \rho_2(d_1,\ldots,d_{t+1})$, we derive $i_1=i_2$ from Claim \ref{N-cup-Sigma}. It thus follows that $\rho_1(c_j)=\rho_2(d_j)$ for all indices $j\in[i_1]$.

In what follows, let us consider two separate cases. The first case is $i_1=s$ and the second case is $i_1<s$.

(1) We intend to discuss the first case of $i_1=s$.

\begin{claim}\label{value-equal-s}
If $i_1=s$, then $i_2=t$ follows.
\end{claim}

\begin{proof}
Assume that $i_1=s$. Toward a contradiction, we further assume that $i_2<t$.
Since $c_s = c_{i_1}$ with $\rho_1(c_{i_1})\in N$, the children  $(c_{s-2},c_{s-1},c_s)$ of $v_1^{(1)}$ satisfy that $\rho_1(c_{s-2})\in \Sigma$ and  $\rho_1(c_{s-1}),\rho_1(c_{s})\in N$ because $G$ is in Greibach normal form.
This corresponds to the production of the form $B_1\rightarrow \beta'$, which is the last production to have been applied to generate $T_1$.
Now, we delete the node $v_1^{(1)}$ from $T_1$ and call the obtained subtree by $T'_1$. Recall that $\alpha' = \rho_1(c_1,\ldots,c_{i_1-3})$.
We then obtain $S\Rightarrow_r^* \rho_1(c_1,\ldots,c_{s-3}) \rho_1(v_1^{(1)}) \rho_1(c_{s+1},\ldots,c_m) = \alpha' B_1 ay \Rightarrow_r \alpha' \beta' a y$.

Let us look into $T_2$. Since $d_{i_2}$ is the rightmost leaf labeled with a nonterminal, we conclude that $\rho_2(d_j)=\varepsilon$ for all indices $j\in[i_2+1,t]_{\integer}$.
This means that, for each index $j\in[i_2+1,t]_{\integer}$, there is a node  $u^{(2)}_{j}$ for which $d_j$ is the only child of $u^{(2)}_j$.
For brevity, we write $B''_j$ for $\rho_2(u^{(2)}_j)$.
Its associated production $\rho_2(u^{(2)}_j)\rightarrow \rho_2(d_j)$ is $B''_j \rightarrow \varepsilon$.
We delete $d_{i_2+1}$ from $T_2$ and call the obtained subtree by $T'_2$. From this $T'_2$, we obtain $S\Rightarrow_r^* \rho_2(d_1,\ldots,d_{i_2}) \rho_2(u^{(2)}_{i_2+1}) \rho_2(d_{i_2+2},\ldots,d_{t}) \rho_2(d_{t+1},\ldots,d_n) = \alpha'' \beta'' B'_{i_2+1} az \Rightarrow_r \alpha'' \beta'' az$. By the LR(1) property of $G$, it then follows that $B_1 = B''_{i_2+1}$ and $\beta'=\varepsilon$. This is a contradiction because of  $\beta'=\rho_1(c_{s-2},c_{s-1},c_s)$. Therefore, $i_2=t$ must follow.
\end{proof}


\begin{figure}[t]
\centering
\includegraphics*[height=4.2cm]{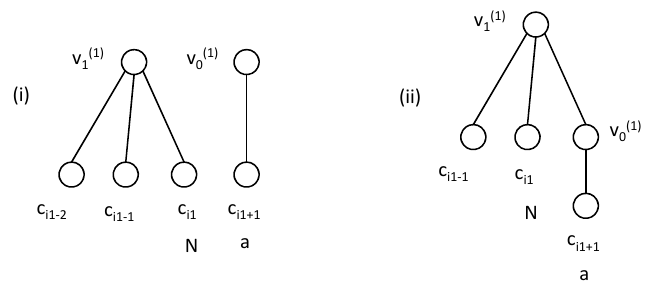}
\caption{Graphical descriptions of the cases (i) and (ii)}\label{fig:case-study}
\end{figure}


Note that, if $i_1=i_2$, then Claim \ref{value-equal-s} immediately yields $s=t$. In what follows, we focus on $v_1^{(1)}$ and $v_1^{(2)}$.
Recall that $B_1=\rho_1(v_1^{(1)})$ and $B_2=\rho_2(v_1^{(2)})$.

(i) Consider the case where $(c_{i_1-2},c_{i_1-1},c_{i_1})$ and $(d_{i_2-2},d_{i_2-1},d_{i_2})$ are children of $v_1^{(1)}$ and $v_1^{(2)}$, respectively. See Fig.~\ref{fig:case-study}(i).
Since $i_1=s$, $\rho_1(c_{i_1+1})=a$ holds.
We delete all children of $v_1^{(1)}$ from $T_1$ and all children of $v_1^{(2)}$ from $T_2$. We respectively  call the obtained subtrees by $T''_1$ and $T''_2$.
Note that $leaf(T''_1) = (c_1,\ldots,c_{i_1-3},v_1^{(1)},c_{i_1+1},\ldots,c_m)$ with $|leaf(T''_1)|+2=|leaf(T_1)|$.
We then obtain the derivations $S\Rightarrow_r^* \alpha' \rho_1(v_1^{(1)}) ay = \alpha' B_1 ay \Rightarrow_r \alpha'\beta' ay$ by applying the production $B_1\rightarrow \beta'$.
Similarly, since $leaf(T''_2) = (d_1,\ldots,d_{i_2-3},v_1^{(2)},d_{i_2+1},\ldots,d_n)$ with $|leaf(T''_2)|+2=|leaf(T_2)|$, we obtain $S\Rightarrow_r^* \alpha'' \rho_2(v_1^{(2)}) az = \alpha'' B_2 az \Rightarrow_r \alpha''\beta'' az$ by applying $B_2\rightarrow \beta''$.
The LR(1) property of $G$ then yields both $B_1=B_2$ and $\beta'= \beta''$.

Let $F_1$ and $F_2$ respectively denote the single-node trees consisting of $v_1^{(1)}$ and $v_1^{(2)}$ and let $F'_1$ and $F'_2$ respectively denote the subtrees of $T_1$ and $T_2$, which are rooted at $v_1^{(1)}$ and $v_1^{(2)}$. Let ${}^{(s-2)}T''_1 = (F_{11},\ldots,F_{1l_1})$ and ${}^{(t-2)}T''_2 =(F_{21},\ldots,F_{2l_2})$. By the definitions of $T''_1$ and $T''_2$, it follows that $F_{1l_1}=F_1$ and $F_{2l_2}=F_2$.
Therefore, $T''_1$ and $T''_2$ satisfy the conditions (1)--(5) of the lemma.
Since $|leaf(T''_1)|<|leaf(T_1)|$ and $|leaf(T''_2)|<|leaf(T_2)|$, we apply induction hypothesis to $T''_1$ and $T''_2$ and then obtain $s-2=t-2$ and ${}^{(s-2)}T''_1\equiv {}^{(t-2)}T''_2$. It thus follows that $l_1=l_2$ and $F_{1i}\equiv F_{2i}$ for all $i\in[l_1]$. By replacing $F_1$ and $F_2$ respectively with $F'_1$ and $F'_2$, we obtain ${}^{(s)}T_1 = (F_{11},\ldots,F_{1l_1-1},F'_1)$ and ${}^{(t)}T_2 = (F_{21},\ldots,F_{2l_2-1},F'_2)$.
Since $F'_1\equiv F'_2$ and $l_1=l_2$, we conclude that ${}^{(s)}T_1\equiv {}^{(t)}T_2$.

(ii) Consider the case where the children of $v_1^{(1)}$ are $(c_{i_1-1},c_{i_1},v_0^{(1)})$ with $\rho_1(c_{i_1+1})=a$ and the children of $v_1^{(2)}$ are $(d_{i_2-1},d_{i_2},v_0^{(2)})$. See Fig.~\ref{fig:case-study}(ii).
Let $X=\rho_1(c_{i_1})$. There are two grammatical trees $\tilde{T}_1$ and $\tilde{T}_2$ for which  their roots have label $X$ and all leaves of them are labeled by terminal symbols. We then expand $T_1$ and $T_2$ by attaching $\tilde{T}_1$ and $\tilde{T}_2$ to their leaves $c_{i_1}$ and $d_{i_2}$, respectively, and we call the resulting trees by $T_1^{(ext)}$ and $T_2^{(ext)}$. We set $s'=i_1-1$ and $t'=i_2-1$. The newly constructed trees $T_1^{(ext)}$ and $T_2^{(ext)}$ satisfy the conditions (1)--(6) of the lemma.

(iii) We study the case where  $(c_{i_1-2},c_{i_1-1},c_{i_1})$ are the children of $v_1^{(1)}$ but $(d_{i_2-1},d_{i_2},v_0^{(2)})$ are the children of $v_1^{(2)}$. In this case, we obtain the derivations $S\Rightarrow_r^* \rho_1(c_1,\ldots,c_{i_1-3}) \rho_1(v_1^{(1)}) \rho_1(c_{i_1+1},\ldots,c_m) = \alpha' B_1ay \Rightarrow_r \alpha'\beta' ay$ by applying the production
$B_1\rightarrow \beta'$ and also
$S\Rightarrow_r^* \rho_2(d_1,\ldots,d_{i_2}) \rho_2(v_0^{(2)}) \rho_2(d_{i_2+2},\ldots,d_n) = \alpha'' \beta'' A_2z \Rightarrow_r \alpha''\beta' az$ by applying  $A_2\rightarrow a$. The LR(1) property of $G$ then concludes that $B_1=A_2$ and $\beta'=a$. This is a contradiction with $|\beta'|=3$. Therefore, this case never happens.


(2) We next examine the second case of $i_1<s$. We set $p= s-i_1$ so that $c_{i_1+p} = c_s$. This implies that  $\rho_1(c_{i_1+1},\ldots,c_{s})=\varepsilon$. Let $u^{(1)}_j$ denote the parent of $c_j$. Notice that $c_j$ is the only child of $u^{(1)}_j$.
Similarly, we set $p'=t-i_2$. By Claim \ref{value-equal-s} (by swapping $T_1$ and $T_2$), $p'$ cannot be $0$, and thus we obtain $p'>0$. We also write  $u^{(2)}_j$ for the parent of $d_j$. We delete $c_{i_1}$ from $T_1$ and $d_{i_2+1}$ from $T_2$ and we then call the resulting subtrees by $T'_1$ and $T'_2$, respectively.
 For simplicity, let $A'_j=\rho_1(u^{(1)}_{j})$ and $A''_j=\rho_2(u^{(2)}_{j})$.

\begin{claim}\label{p-equal-p}
If $i_1<s$ then $p=p'$ follows.
\end{claim}

\begin{proof}
To lead to a contradiction, we first assume that $p<p'$. Since $\rho_1(c_{i_1+1})=\varepsilon$ and $\rho_1(c_{i_1})\in N$, the last production applied to produce $T'_1$ must be $B'_{i_1+1}\rightarrow \varepsilon$. From $T'_1$, we thus obtain the derivations
$S\Rightarrow_r^* \rho_1(c_1,\ldots,c_{i_1}) \rho_1(u^{(1)}_{i_1+1}) \rho_1(c_{i_1+2},\ldots, c_{s}) \rho_1(c_{s+1},\ldots,c_m) = \alpha' \beta' A'_{i_1+1} ay \Rightarrow_r \alpha'\beta' ay$. Similarly, from $T'_2$, we obtain $S\Rightarrow_r^* \rho_2(d_1,\ldots,d_{i_2}) \rho_2(u^{(2)}_{i_2+1}) \rho_2(d_{i_2+2},\ldots,d_t) \rho_2(d_{t+1},\ldots,d_n) = \alpha''\beta'' A''_{i_2+1} az \Rightarrow_r \alpha''\beta'' az$ by applying  $A''_{i_2+1} \rightarrow \varepsilon$.
The LR(1) property of $G$ then ensures that $A'_{i_1+1}= A''_{i_2+1}$.
We further delete $c_{i_1+2}$ from $T'_1$ and $d_{i_2+2}$ from $T'_2$. A similar argument as above concludes that $A'_{i_1+2} = A''_{i_2+2}$. Repeating this process eventually leads to $A'_{i_1+k} = A''_{i_2+k}$ for all $k\in[1,p]_{\integer}$.
Finally, since $p<p'$, it is possible to delete $c_{i_1}$ from $T_1$ and $d_{i_2+p+1}$ from $T_2$.
We then obtain $S\Rightarrow_r^* \alpha' \rho_1(v_1^{(0)}) \rho_1(c_{i_1+1},\ldots,c_{s}) ay = A_1 ay \Rightarrow_r \alpha'\beta' ay$ and $S\Rightarrow_r^* \alpha'' \beta'' \rho_2(d_{i_2+1},\ldots,d_{i_2+p}) \rho_2(u^{(2)}_{i_2+p+1}) az = \alpha''\beta'' A''_{i_2+p+1} az \Rightarrow_r \alpha''\beta'' az$.
From these, we conclude that $A_1=A''_{i_2+p+1}$ and $\beta' = \varepsilon$. This is a contradiction with $\beta'\neq\varepsilon$, and thus  $p\geq p'$ follows. By symmetry, when $p >p'$, we also obtain a contradiction. Therefore, we conclude that $p=p'$.
\end{proof}

In  this case, we ignore the nodes in $(c_{i_1+1},\ldots,c_{i_1+p})$ and $(d_{i_2+1},\ldots,d_{i_2+p'})$ and then take the same argument as in (ii) to prove the lemma.
\end{proofof}

\subsection{Proof of Lemma \ref{second-pumping-lemma}}

To prove Lemma \ref{second-pumping-lemma}, let us take any integer $d\geq1$ and fix an arbitrary infinite language $L$ in $\dcfl[d]$. We split the intended proof of the lemma into two parts, depending on the value of $d$.

\ms
\n{\bf Proof for the basis case of $d=1$:}
\s

Let us first consider the basis case of $d=1$.
Consider an $\mathrm{LR}(1)$ grammar $G=\pair{N,\Sigma,P,S}$ for $L$ and assume that $G$ already becomes  reduced. As noted in Section \ref{sec:grammatical-tree}, we further assume without loss of generality that $G$ is in Greibach normal form.
Let $k=|N|$ and let $h$ denote the maximum length of the right-hand side of any production of $G$. We define $c=2^{k\log{h}}(k+1)(h-1)$.

Take two arbitrary strings $w=xy$ and $w'=xz$ in $L$ with $|x|,|y|,|z|\geq1$ and $|x|>c$. Let $a$ denote the last symbol of $x$ and let $x^{(-)}$ denote the string obtained from $x$ by deleting its last symbol. Note that $x=x^{(-)}a$ and $|x^{(-)}|\geq c$.

Let $T_w=(N_1,\triangleright_1,\prec_1,r_1,\rho_1)$  and $T_{w'}=(N_2,\triangleright_2,\prec_2,r_2,\rho_2)$ denote two arbitrary right canonical grammatical (derivation) trees producing $w$ and $w'$, respectively.
Let $leaf(T_w)=(c_1,c_2,\ldots,c_m)$ and $leaf(T_{w'})= (d_1,d_2,\ldots,d_l)$ (in the left-to-right order) for certain numbers $m,l\geq1$.

Let $xy = a_1a_2\cdots a_{l}$ with $l=|xy|$ and $a_i\in\Sigma$ for all $i\in[l]$.
Fix $i\in[l]$ arbitrarily and take a number $s_i\in[0,m-1]_{\integer}$ for which $\rho_1(c_1,c_2,\ldots,c_{s_i}) =a_1a_2\cdots a_{i-1}$ and $\rho_1(c_{s_{i}+1}) =  a_{i}$.
Similarly, for $xz=a'_1a'_2\cdots a'_{l'}$, we take a number $t_i\in[0,n-1]_{\integer}$ for which $\rho_2(d_1,d_2,\ldots,d_{t_i})=a'_1a'_2\cdots a'_{i-1}$ and $\rho_2(d_{t_i+1})= a'_{i}$.

Let us consider a path $p_i$ of $T_w$ from $r_1$ to $c_{s_i+1}$ and another path $p'_i$ of $T_{w'}$ from $r_2$ to $d_{t_i+1}$.
Note that $p_i$ splits $T_w$ into two parts ${}^{(s_i)}T_{w}$
and $F_1$ and that $p'_i$ splits $T_{w'}$ into ${}^{(t_i)}T_{w'}$ and $F_2$
for two appropriate subtrees $F_1$ and $F_2$.
When $i\geq |x^{(-)}|$, these pathes $p_i$ and $p'_i$ may be even structurally different, and thus they may not be ``identical''.
Let ${}^{(s_i)}T_w=(T_1,T_2,\ldots,T_{m_i})$ whose subtrees are enumerated according to the left-to-right order.
Let $i_0$ and $i'_0$ satisfy that $\rho_1(c_1,c_2,\ldots,c_{t_{i_0}})= \rho_2(d_1,d_2,\ldots,d_{t_{i'_0}})= x^{(-)}$ with $\rho_1(c_{t_{i_0}+1})= \rho_2(d_{t_{i'_0}+1}) \in\Sigma$.

For convenience, given a path $p$ and a symbol $A\in N$, we introduce the notation $node_{p}[A]$ to indicate the set of all pairs $(v_1,v_2)$ of distinct nodes on the path $p$ for which $v_1\leadsto v_2$ and $v_1$ and $v_2$ have the same label  $A$.


We first argue that there are iterative pairs and later discuss various situations of these iterative pairs.

We split $xy$ to two parts: a prefix $\hat{x}$ and a suffix $\hat{y}$ of $xy$ satisfying $\hat{x}\hat{y}=xy$. Let $\hat{a}$ denote the first symbol of $\hat{y}$ if $\hat{y}\neq\varepsilon$. Along the path $p_i$ from the root $r_1$ to $c_{s_i+1}$, $\rho_1(c_{s_i+1})=\hat{a}$ follows.

(1) Consider the case where there exist two indices $i\in[l]$ and $j\in[m_i]$ for which a subtree $T_{j}$ of ${}^{(s_i)}T_w$ has at least $2^{k\log{h}}$ leaves.
For such a subtree $T_j$, we claim the existence of a path $p'$ of $T_j$ from the root of $T_j$ to a leaf that has length at least $k$ because, otherwise, the height of $T_j$ is at most $k-1$, and thus there are at most $h^{k-1}$ ($=2^{(k-1)\log{h}}$) leaves, a contradiction with the choice of $T_j$.  Since the path $p'$ has at least $k+1$ nodes and $k=|N|$, there exist two distinct nodes $v_1$ and $v_2$ on the path $p'$ whose labels are the same.
Let $\hat{x}=a_1a_2\cdots a_{i-1}$, $\hat{a}=a_i$, and $\hat{y}=a_{i+1}\cdots a_l$.
Take two strings $f,g\in\Sigma^+$ for which $st[v_1,v_2]=(f,g)$. Clearly, $f\sqsubseteq \hat{x}$ and $g\sqsubseteq \hat{x}$ hold.

(a) Assume that $\hat{x}\sqsubseteq x^{(-)}$. In this case, we obtain $f,g\sqsubseteq x^{(-)}$. Since $c_{i} \prec^{+} c_{s_{i_0}}$,
we apply Lemma \ref{left-part-theorem} and then obtain $s_i=t_i$ and  ${}^{(s_i)}T_{w}\equiv {}^{(t_i)}T_{w'}$.
From this fact,  hereafter, we can focus only on ${}^{(s_i)}T_{w}$.
Assume that $(v_1,v_2)$ generates $(\beta_1,\beta_2)$ in $T_{w}$ for two certain sequences $\beta_1$ and $\beta_2$ of leaves in $T_w$.
Moreover, assume that $(r_1,v_1)$ generates $(\alpha_1,\alpha_2)$ in $T_{w}$ for two certain sequences $\alpha_1$ and $\alpha_2$ of leaves.
It then follows from $T_w=T_w(r_1)$ that $leaf(T_{w}) = \alpha_1 \cdot leaf(T_{w}(v_1)) \cdot \alpha_2$ and $leaf(T_{w}(v_1)) = \beta_1 \cdot leaf(T_{w}(v_2)) \cdot \beta_2$.
Notice that $\rho_1(\beta_1) =f$ and $\rho_1(\beta_2)=g$.
We then set $x_1=\rho_1(\alpha_1)$, $x_2= f$, $x_3=\rho_1(leaf(T_{w}(v_2))$, $x_4= g$, and $x_5y = \rho_1(\alpha_2)$. It is easy to check that the condition (1) of the Lemma \ref{second-pumping-lemma} is fulfilled.

(b) Assume that $f\sqsubseteq ay$ and $g\sqsubseteq az$. This case implies the condition (2).

(c) Assume that $f\sqsubseteq x^{(-)}$ and $g\sqsubseteq ay$. In this case, we obtain the condition (3).

(2) Consider the case where, for any $i$ and any $j\in[m_i]$, the subtree $T_j$ in ${}^{(s_{i})}T_w$ has less than $2^{k\log{h}}$ leaves.
Let $\hat{x}=a_1a_2\cdots a_{i-1}$ and $\hat{a}=a_i$ and assume that $|\hat{x}|\geq c$. There are more than $(k+1)(h-1)$ distinct subtrees in ${}^{(s_i)}T_w$ because, otherwise, there are less than $2^{k\log{h}}(k+1)(h-1)$ leaves in ${}^{(s_i)}T_w$, and thus $|\hat{x}|< c$ follows, an immediate  contradiction.

We then focus on the last $(k+1)(h-1)+1$ subtrees in ${}^{(s_i)}T_w$. Consider the path $p_i$ from $r_1$ to $c_{s_i+1}$.
Each internal node on this path $p_i$ has at most $h$ children and it is a parent of one terminal node since $G$ is in Greibach normal form.
We claim that the path $p_i$ has at least $k+1$ nodes. This is because, otherwise, each internal node on the path $p_i$ has at most $h-1$ direct descendants, and thus ${}^{(s_i)}T_w$ has at most $(k+1)(h-1)$ subtrees, leading to a contradiction. Since $k=|N|$, the pigeonhole principle concludes that there are two nodes, say, $v_1$ and $v_2$ on the path $p_i$ having the same labels. Let $st[v_1,v_2]=(f,g)$. Any node lying  between $v_1$ and $v_2$ on the path $p_i$ is a root of a subtree of at most $2^{k\log{h}}$ leaves.
From this fact, we conclude that $|f|$ is upper-bounded by $(k+1)2^{k\log{h}}$, and therefore $|f|\leq c$ follows. Note that $f\sqsubseteq \hat{x}$ and $g\sqsubseteq \hat{y}$.

Assume that $i\leq i_0$. In this case, $c_{s_i+1}\prec^{+} c_{s+1}$ follows, and thus $v_1$ and $v_2$ are in a subtree whose root is a child of a certain node on the path $p_i$. By Lemma \ref{left-part-theorem},  ${}^{(t_i)}T_{w'}$ is isomorphic to ${}^{(s_i)}T_w$, and thus there are two nodes $v'_1$ and $v'_2$ in that subtree respectively isomorphic to $v_1$ and $v_2$. Note that $\rho_1(v_1)=\rho_1(v_2)$ and $\rho_2(v'_1) = \rho_2(v'_2)$.

We begin with fixing $v_1$ and $v_2$ so that the height of $v_2$ is the smallest.
For readability, let $x'=x_1x_2x_3x_4x_5x_6x_7$, $y=u_1u_2u_3$, and $z=u'_1u'_2u'_3$ and assume that $st[v_1,v_2]$ is expressed as $(x_4x_5,x_7u_1)$.
To lead to the condition (5), we need to show that $st[v'_1,v'_2]$ has the form $(x_3x_4,u'_2)$ or $(x_2,u'_2)$ by reassigning $x_2$, $x_3$, and $x_4$ appropriately.
Toward this goal, it suffices to verify that $st[v'_1,v'_2]$ never has the form: $(x_5x_6,u'_2)$, $(x_6,u'_2)$, and $(x_7,u'_2)$ (after an appropriate rearrangement of $x_5,x_6,x_7,u'_2$).
Assume that there is no $v$ for which $v_1\leadsto v\leadsto v_2$ with $\rho_1(v)=\rho_1(v)$ and that there is no $v'$ for which $v'_1\leadsto v'\leadsto v'_2$ with $\rho_2(v')=\rho_2(v'_1)$.

Toward a contradiction, we first assume that $st[v'_1,v'_2]=(x_5x_6,u'_2)$.
Let $\hat{a}$ denote the last symbol of $x_7$ and let $x_7^{(-)}$ denote the string satisfying $x_7=x_7^{(-)}\hat{a}$. Take a leaf $c_{s_{i_1+1}}$ of $T_w$ whose  label is $\hat{a}$ and consider a path of $T_w$ from $r_1$ to $c_{i_1+1}$. Since $st[v_1,v_2]=(x_4x_5,x_7u_1)$, $c_{i_1+1}\prec^{+} c_{i_0+1}$ follows. By Lemma \ref{left-part-theorem}, the subtrees of ${}^{(s_{i_0})}T_w$ and those of ${}^{(t_{i'_0+1})}T_{w'}$ are isomorphic. There exists a node $v'$ in $T_{w'}$ such that $v'$ is isomorphic to $v_2$ in $T_w$. This implies that $v'_1\leadsto v'\leadsto v'_2$. Hence, $T_{w'}(v'_2)$ cannot contain nodes producing $x_7^{(-)}$. Since $st[v'_1,v'_2]=(x_5x_6,u'_2)$, $T_{w'}(v'_2)$ contains nodes producing $x_7$, a contradiction.

The other cases of $st[v'_1,v'_2]\in \{(x_6,u'_2),(x_7,u'_2)\}$ are similarly handled.


\ms
\n{\bf Proof for the general case of $d\geq2$:}
\s

Let us prove the lemma for a general case of $d\geq2$. Take $d+1$ strings $w_1,w_2,\ldots,w_{d+1}$ in $L$ such that there are strings $x, y^{(1)},\ldots,y^{(d+1)}$ for which $w_j$ has the form $xy^{(j)}$ for any $j\in[d+1]$. Since $L\in\dcfl[d]$, there are $d$ dcf languages $L_1,L_2,\ldots,L_d$ satisfying $L=\bigcup_{i\in[d]} L_i$.
By the pigeonhole principle, there exist an index $i\in[d]$ and two distinct indices $j_1,j_2\in[d+1]$ satisfying $w_{j_1},w_{j_2}\in L_i$. We take strings $x'$, $y$, and $z$ with $|x'|>c$ satisfying that $x'y=xy^{(j_1)}$ and $x'z=xy^{(j_2)}$. We then apply the basis case of $d=1$ and obtain one of the conditions (1)--(5).


\section{A Brief Discussion and Future Challenges}

We have presented two new and practical lemmas, referred to as the \emph{pumping lemmas for $\dcfl[d]$} in this exposition.
The first one (Lemma \ref{pumping-lemma}) is in part viewed as a natural extension of Yu's pumping lemma for $\dcfl$ \cite{Yu89}. This lemma has a quite simple form, describing a structural property of iterative pairs. In Section \ref{sec:proof-separation}, we have applied it to separate the intersection and the union hierarchies of deterministic context-free (dcf) languages. Our proof of this pumping lemma is solely founded on an analysis of the behaviors of one-way deterministic pushdown automata (or 1dpda's) and it therefore provides an alternative proof to Yu's pumping lemma whose proof actually utilizes $\mathrm{LR}(k)$ grammars.
In contrast to the first pumping lemma, the second pumping lemma for $\dcfl[d]$ (Lemma \ref{second-pumping-lemma}) relates to a structural property of nondegenerate iterative pairs. The proof of this pumping lemma
is made along the line of a detailed analysis of grammatical (derivation) trees of $\mathrm{LR}(k)$ grammars. As a future task, it is desirable to conduct a study on various aspects of iterative pairs particularly for dcf languages.

Although our pumping lemmas are quite useful, it is not as powerful as to precisely \emph{characterize} all $d$-union dcf languages.
One of the most challenging tasks is to expand our pumping lemmas to characterize all dcf languages. See, e.g., \cite{Wis76} for the case of context-free languages.

Since finite intersections and unions have been intensively discussed so far, we here wish to discuss their simple extension to ``infinite''  intersections in a certain ``controlled'' way. Given a 1dpda $M= (Q,\Sigma,\{\cent,\dollar\}, \Gamma, \delta,q_0,Z_0, Q_{acc}, Q_{rej})$, we first assign to $M$ its \emph{description size}, which equals the value $|Q||\Sigma||\Gamma^{\leq e}|$. We express this value as $des(M)$.
Let us consider an infinite sequence $\{M_n\}_{n\in\nat}$ of 1dpda's. Such a sequence is said to have \emph{polynomial description size} if there exists a polynomial (with nonnegative coefficients) $p$ such that $des(M_n)\leq p(n)$ holds for any number $n\in\nat$. Given a function $\mu:\nat\to\nat$, we concentrate on the \emph{$\mu$-bounded intersection} of the dcf languages $\{L(M_n)\}_{n\in\nat}$, which is defined as $\{x\mid \forall i\leq \mu(|x|)[x\in L(M_i)]\}$.
Although the language $Pal$ of even-length palindromes is not in $\dcfl[\omega]$ by Theorem \ref{Pal-DCFL}(1), it is possible to characterize it in terms of a certain $\mu$-bounded intersection of dcf languages as follows.

\begin{lemma}\label{mu-bound}
There exists an infinite sequence $\{M_n\}_{n\in\nat}$ of 1dpda's having polynomial description size such that $Pal$ coincides with the $\mu$-bounded intersection of $\{L(M_n)\}_{n\in\nat}$, where $\mu(n)=\ceilings{n/2}$ for any $n\in\nat$.
\end{lemma}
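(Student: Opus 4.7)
The plan is to set $M_0$ to be a trivial 1dpda accepting all of $\{0,1\}^*$ (a constant-size machine), and for each $i\geq 1$ to construct a 1dpda $M_i$ recognizing the dcf language
$L(M_i)=\{x\in\{0,1\}^* : |x|\text{ is even, and either }|x|<2i\text{ or }x[i]=x[|x|-i+1]\}.$
The result will then follow from two verifications: first, that $des(M_n)=O(n)$ so that $\{M_n\}_{n\in\nat}$ has polynomial description size; second, that the $\mu$-bounded intersection of the $L(M_n)$'s with $\mu(n)=\lceil n/2\rceil$ equals $Pal$.

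For the construction of $M_i$ ($i\geq 1$), the finite control advances a bounded counter from $0$ up to $\min\{2i,|x|\}$ as the input is scanned left to right, while simultaneously maintaining a single parity bit for the length. Upon reaching position $i$, the scanned symbol $x[i]$ is latched into a one-bit register; from position $i+1$ onward, each incoming input symbol is pushed verbatim onto the stack. When the right-endmarker $\dollar$ is read, $M_i$ first rejects if the parity bit is odd; otherwise, if the length counter never reached $2i$ (detectable from the state, since the counter is bounded by $2i$), $M_i$ accepts immediately, honouring the vacuous clause $|x|<2i$. Otherwise, $M_i$ enters a pop phase driven by a bounded countdown from $i-1$ that executes $i-1$ consecutive $\varepsilon$-pop moves, after which $M_i$ accepts iff the new topmost stack symbol matches the latched $x[i]$.

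For the description-size bound, $|Q_i|$ is a product of a constant-size phase flag, a counter ranging over $O(i)$ values, a parity bit, and a one-bit register, giving $|Q_i|=O(i)$. The input alphabet $\{0,1,\cent,\dollar\}$ has constant size, the stack alphabet $\{0,1,Z_0\}$ is constant, and the push size is at most $2$. Hence $des(M_i)=|Q_i|\cdot|\Sigma|\cdot|\Gamma^{\leq 2}|=O(i)$, so the sequence $\{M_n\}_{n\in\nat}$ has polynomial description size, say bounded by $p(n)=Cn+D$ for suitable constants.

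For the correctness of the $\mu$-bounded intersection, take any $x$ of length $n$. If $n$ is even then every $i\in\{1,\ldots,n/2\}$ satisfies $2i\leq n$, so membership of $x$ in $L(M_i)$ reduces to the single constraint $x[i]=x[n-i+1]$; the conjunction over all such $i$ is exactly the palindrome condition, and the added check at $i=0$ and at $i=\lceil n/2\rceil$ (the latter arising only when $n$ is odd) does not alter this. If $n$ is odd then $1\leq\lceil n/2\rceil$ and $M_1$ rejects $x$ via its parity check, so $x$ is excluded from the intersection, matching $x\notin Pal$. The main obstacle is orchestrating the deterministic finite-control protocol so that the three edge cases (short inputs where the palindrome condition is vacuous, odd-length inputs, and the situation where the stack might run out before $i-1$ pops complete) are all handled cleanly while keeping $|Q_i|$ linear in $i$; once the above counter-and-register layout is adopted, the remaining verification reduces to routine bookkeeping.
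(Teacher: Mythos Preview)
Your proof is correct and follows essentially the same approach as the paper: each $M_i$ records the $i$th input symbol in its finite control and uses the stack to retrieve the symbol at position $|x|-i+1$ for comparison, giving $O(i)$ states. The only cosmetic differences are that the paper places the even-length check in $M_0$ alone (with $M_0$ recognizing $\{x:|x|\text{ even}\}$) rather than in every $M_i$, pushes all of $x$ onto the stack rather than just the suffix from position $i+1$, and has $M_n$ reject (rather than accept) inputs that are too short---none of which affects the $\mu$-bounded intersection or the polynomial size bound.
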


\begin{proof}
We wish to construct a desired family $\MM=\{M_n\}_{n\in\nat}$ of 1dpda's.
Let $M_0$ denote a 1dpda that recognizes $\{x\mid |x|\text{ is even }\}$ without using its stack. For any other index $n\geq1$, we define $M_n$ in the following way. Given an input $x$, if $|x|\leq n$, then we reject $x$. Assume otherwise. We push all symbols of $x$ one by one into a stack and remember the $n$th symbol of $x$, say, $a_n$ in the form of inner states. This is possible because $n$ is treated as a ``constant'' for $M_n$, not depending on the length of the input $x$. After reaching the endmarker $\dollar$, we start to pop the last $n$ symbols of $x$ from the stack. We then check if the $n$th symbol matches $a_n$. If so, then we accept $x$; otherwise, we reject $x$. Consider the $\mu$-bounded intersection of $\{L(M_i)\}_{i\in\nat}$.
It is not difficult to show that the this intersection coincides with $Pal$.
\end{proof}

Lemma \ref{mu-bound} does not seem to be ``optimal'' in terms of $\mu$. What is the smallest possible choice of $\mu$ for $Pal$?
In a more general case, for various choices of $\mu$, what is the precise computational complexity of $\mu$-bounded intersections of dcf languages?

\let\oldbibliography\thebibliography
\renewcommand{\thebibliography}[1]{%
  \oldbibliography{#1}%
  \setlength{\itemsep}{-2pt}%
}
\bibliographystyle{alpha}

\end{document}